\definecolor{ForestGreen}{rgb}{0.1333,0.5451,0.1333}
\definecolor{DarkRed}{rgb}{0.8,0,0}
\definecolor{Red}{rgb}{1,0,0}
\newcommand{\stackGeq}[1]{%
	\setbox0=\hbox{${}\mathrel{\stackon[-1pt]{\geq}{\scriptstyle\text{#1\strut}}}{}$}
	\xdef\tmpwd{\dimexpr\the\wd0\relax}
	\kern.5\tmpwd\mathclap{\box0}&\kern.5\tmpwd
}
\newcommand\Z{\mathbb{Z}}
\newcommand\eps{\epsilon}
\DeclarePairedDelimiterX{\expectarg}[1]{[}{]}{%
	\ifnum\currentgrouptype=16 \else\begingroup\fi
	\activatebar#1
	\ifnum\currentgrouptype=16 \else\endgroup\fi
}
\newcommand\tbuy{\mathcal{T}_{\text{cust}}}
\newcommand\tsell{\mathcal{T}_{\text{supp}}}
\newcommand\primy{\overline{y}}
\newcommand\primz{\overline{z}}
\newcommand\primr{\overline{r}}
\newcommand\dualx{\overline{x}}
\newcommand\duall{\overline{\ell}}
\newcommand\duala{\overline{\alpha}}
\newcommand\dualb{\overline{\beta}}
\newcommand\vt{v}
\DeclarePairedDelimiterX{\nicesetarg}[1]{\{}{\}}{%
	\ifnum\currentgrouptype=16 \else\begingroup\fi
	\activatebar#1
	\ifnum\currentgrouptype=16 \else\endgroup\fi
}
\newcommand{\innermid}{\nonscript\;\delimsize\vert\nonscript\;}
\newcommand{\activatebar}{%
	\begingroup\lccode`\~=`\|
	\lowercase{\endgroup\let~}\innermid 
	\mathcode`|=\string"8000
}
\newcommand\alg{\textsc{Alg}\xspace}
\newcommand\wKL[2]{\textsc{KL}_w\left(#1 \mid \mid #2\right)}
\theoremstyle{plain}
\newtheorem{theorem}{Theorem}[section]
\newtheorem{lemma}[theorem]{Lemma}
\newtheorem{observation}[theorem]{Observation}
\newtheorem{claim}[theorem]{Claim}
\newtheorem{invariant}{Invariant}
\newtheorem{assumption}[theorem]{Assumption}
\newtheorem{remark}[theorem]{Remark}
\newlength{\continueindent}
\newcommand*{\ALG@customparshape}{\parshape 2 \leftmargin \linewidth \dimexpr\ALG@tlm+\continueindent\relax \dimexpr\linewidth+\leftmargin-\ALG@tlm-\continueindent\relax}
\apptocmd{\ALG@beginblock}{\ALG@customparshape}{}{\errmessage{failed to patch}}
\def\thm@space@setup{%
	\thm@preskip=\parskip \thm@postskip=0pt
}
\newcommand{\ALGtikzmarkcolor}{black}
\newcommand{\ALGtikzmarkextraindent}{4pt}
\newcommand{\ALGtikzmarkverticaloffsetstart}{-.5ex}
\newcommand{\ALGtikzmarkverticaloffsetend}{-.5ex}
\newcounter{ALG@tikzmark@tempcnta}
\newcommand\ALG@tikzmark@start{%
	\global\let\ALG@tikzmark@last\ALG@tikzmark@starttext%
	\expandafter\edef\csname ALG@tikzmark@\theALG@nested\endcsname{\theALG@tikzmark@tempcnta}%
	\tikzmark{ALG@tikzmark@start@\csname ALG@tikzmark@\theALG@nested\endcsname}%
	\addtocounter{ALG@tikzmark@tempcnta}{1}%
}
\def\ALG@tikzmark@starttext{start}
\newcommand\ALG@tikzmark@end{%
	\ifx\ALG@tikzmark@last\ALG@tikzmark@starttext
	\else
	\tikzmark{ALG@tikzmark@end@\csname ALG@tikzmark@\theALG@nested\endcsname}%
	\tikz[overlay,remember picture] \draw[\ALGtikzmarkcolor] let \p{S}=($(pic cs:ALG@tikzmark@start@\csname ALG@tikzmark@\theALG@nested\endcsname)+(\ALGtikzmarkextraindent,\ALGtikzmarkverticaloffsetstart)$), \p{E}=($(pic cs:ALG@tikzmark@end@\csname ALG@tikzmark@\theALG@nested\endcsname)+(\ALGtikzmarkextraindent,\ALGtikzmarkverticaloffsetend)$) in (\x{S},\y{S})--(\x{S},\y{E});%
	\fi
	\gdef\ALG@tikzmark@last{end}%
}
\apptocmd{\ALG@beginblock}{\ALG@tikzmark@start}{}{\errmessage{failed to patch}}
\pretocmd{\ALG@endblock}{\ALG@tikzmark@end}{}{\errmessage{failed to patch}}
\title{Competitive Bundle Trading}
\author{
 {Yossi Azar\thanks{Department of Computer Science, Tel-Aviv University,
     Tel-Aviv, Israel. Emails: {\tt azar@tauex.tau.ac.il, orvardi@mail.tau.ac.il}.}
    }
 \and
 {Niv Buchbinder\thanks{Department of Statistics and Operations Research, School of Mathematical Sciences, Tel Aviv University, Tel Aviv, Israel. Email: \texttt{niv.buchbinder@gmail.com}.}
 }
 \and
 {Roie Levin\thanks{Department of Computer Science, Rutgers University, Piscataway, NJ 08854. Email: {\tt roie.levin@rutgers.edu.}}}
 \and
 {Or Vardi$^{*}$
   } 
}    
\date{}
\begin{document}
\maketitle 
\begin{abstract}
A retailer is purchasing goods in bundles from suppliers and then selling these goods in bundles to customers; her goal is to maximize profit, which is the revenue obtained from selling goods minus the cost of purchasing those goods. In this paper, we study this general trading problem from the retailer's perspective, where both suppliers and customers arrive online. The retailer has inventory constraints on the number of goods from each type that she can store, and she must decide upon arrival of each supplier/customer which goods to buy/sell in order to maximize profit.

We design an algorithm with logarithmic competitive ratio compared to an optimal offline solution. We achieve this via an exponential-weight-update dynamic pricing scheme, and our analysis dual fits the retailer's profit with respect to a linear programming formulation upper bounding the optimal offline profit. We prove (almost) matching lower bounds, and we also extend our result to an incentive compatible mechanism. Prior to our work, algorithms for trading bundles were known only for the special case of selling an initial inventory.
\end{abstract}

\section{Introduction}

Consider the following online decision-making task faced by a typical retailer. The retailer wishes to purchase $n$ types of goods in bundles from suppliers (e.g. CPUs, Hard drives, NVIDIA graphics cards, etc.) and subsequently sell these goods in customized bundles to customers (e.g. desktop computers). For each item type $i\in [n]$ there is some inventory limit, $w_i$, on the amount that can be stored at any given time. The goal is to maximize profit, i.e. revenue obtained from sales minus cost paid for purchases. What is more, the retailer does not know what the market will look like in the future. Instead, customers and suppliers valuations for goods may change unpredictably over time, and the retailer must do their best to adapt.

To model this scenario concretely, we imagine that at every time step $t$, either a supplier or a customer arrives with a menu of bundles $S^t$, where each bundle $s\in S^t$ is a (multi-)set of the item types and has a value $v^t_s$ to the supplier/customer. In the supplier case, the retailer has the option to \emph{buy} one of the bundles from the menu at a price $v^t_s$, and in the customer case, the retailer has the option to \emph{sell} one of the bundles from the menu at a price of $v^t_s$. We call this the \emph{online trading problem} with \emph{known valuations}. Our problem remains challenging even in the special case where customers/suppliers are {\em single minded}, meaning that each customer would like to purchase a single bundle of items $s$ (or any superset thereof), and similarly, a supplier would like to sell to the retailer a single bundle $s$ (or any of its subsets). 

A classical special case in online algorithms is the problem of selling an initial inventory to customers \cite{AAP93, LM99} (originally motivated by bandwidth allocation in networks, see additional discussion in the next subsection). The problem we consider in this paper is significantly more complicated for a few reasons. For one, here the number of transactions can be arbitrarily large, whereas in the sell-only case, the number of transactions is limited by the initial number of items. Second, our goal function is profit, i.e. difference between sales revenue and purchasing cost. Objective functions of the form $\max_x A(x) - B(x)$ are often already difficult to handle offline with full information.
In the online setting, if the algorithm is not careful, it may even end up with a negative profit. We are not aware of any competitive online algorithms for problems of this form. 

Yet the story is more complicated still. In many contexts, customers and suppliers are strategic agents who do not wish to reveal their true valuations of goods. In this case we would like to design an {\em incentive compatible} mechanism. We refer to this harder variant as the \emph{unknown valuation setting}. We define our problem formally in \cref{sec:prelim}.

\subsection{Our results}
Our main result is logarithmic competitive algorithms for the online trading problem for both the known and unknown valuation settings against optimal offline solutions. Let $d$ be the maximum size of a bundle of any \emph{customer} and let $v$ be the maximum to minimum ratio of the value of any bundle of a \emph{customer}.\footnote{The maximum \emph{supplier} bundle size/value do not play a role in our bounds.} We assume these values (or upper bounds on them) are known to the algorithm. 

\begin{theorem}[Main Theorem, informal]\label{thm:main-informal}
For every instance of the online trading problem with a demand oracle for the valuations, large enough inventory compared with the bundle sizes and every $\eps>0$, there is an 
\begin{itemize}
    \item $O\left(\frac{1}{\eps} \log(dv)\right)$-competitive deterministic algorithm for the \textbf{known valuation} case. 
    \item $O\left(\frac{1}{\eps} \log(\frac{dv}{\eps})\right)$-competitive randomized incentive compatible online algorithm for the \textbf{unknown valuation} case.
\end{itemize}

The competitive ratio of both algorithms is with respect to an optimal offline fractional solution, where supplier values at any time step
are $(1+\eps)$ larger.
\end{theorem}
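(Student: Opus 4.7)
The plan is to follow a primal-dual / dual-fitting approach with exponentially growing item prices, a template familiar from online packing/covering results such as the AAP line of work referenced in the introduction. First, I would write down a linear programming relaxation that upper bounds \opt: a variable $y_s^t \in [0,1]$ for each potential trade (bundle $s$ offered at time $t$, either to buy or sell), inventory constraints $\sum_{t,s} (\text{net contribution to item } i) \cdot y_s^t \leq w_i$ for every item $i$, and an objective that maximizes $\sum_t \sum_{s} v_s^t y_s^t$ (with the appropriate sign for suppliers vs.\ customers). The dual has a variable $\alpha_i^t$ interpreted as the price of one unit of item $i$ at time $t$, and a bundle slack $\beta_s^t \geq v_s^t - \sum_{i \in s} \alpha_i^t$ for customer offers (and its reverse for suppliers). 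Our competitive ratio $O(\tfrac{1}{\eps} \log(dv))$ will be the ratio of (primal algorithm profit) to (weak dual value), via dual fitting.

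Next I would define the pricing scheme. At any time, let $x_i^t$ denote the current amount of item $i$ in the inventory. Set $\alpha_i^t = A \cdot B^{x_i^t / w_i}$ for appropriately tuned constants $A, B$ depending on $v$, $d$ and $\eps$; concretely, $B$ is something like $2 dv / \eps$ so that $\log B = \Theta(\log(dv/\eps))$, and $A$ is a small additive price that guarantees we never buy something we cannot profitably sell (it will be $\approx v_{\min} / d$). Given a menu $S^t$, the algorithm uses the demand oracle to pick the bundle $s^\star$ that maximizes $v_s^t - \sum_{i \in s} \alpha_i^t$ (for a customer) or $\sum_{i \in s} \alpha_i^t - v_s^t$ (for a supplier), and executes the trade only if this surplus is strictly positive (and inventory permits). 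The key structural property is exponential price growth: each time we sell (resp.\ buy) one unit of item $i$, the price $\alpha_i^t$ drops (resp.\ grows) multiplicatively by roughly $B^{1/w_i}$, so the algorithm's profit from that trade dominates $\Theta(1/\log B)$ times the change in the dual potential $\sum_i \int \alpha_i \, dx_i$.

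The main analysis step is to verify two things: (i) \textbf{feasibility}, namely that our chosen $\alpha_i^t$ together with carefully defined $\beta_s^t$ form a feasible dual of the LP in which suppliers are scaled by $(1+\eps)$, and (ii) \textbf{profit vs.\ dual}, namely that the algorithm's realized profit is at least $\Omega(1/\log(dv))$ times the dual objective. Feasibility for customer constraints is automatic from the greedy demand-oracle choice: if the customer's surplus at the posted prices is negative we skip and the dual slack is zero; otherwise the $\beta$ assignment fits. For suppliers we use the $(1+\eps)$ slack to absorb the discrete price jumps. Profit vs.\ dual is the heart of the argument and uses the standard potential telescoping for exponential weights, plus the observation that the inventory constraint $w_i$ never binds because once $x_i^t = w_i$ the price $\alpha_i^t$ exceeds any single-bundle customer value $v$, so no further purchase of item $i$ is attempted; this requires $w_i$ to be larger than $d$ by the logarithmic factor (``large enough inventory'' hypothesis).

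The hardest part will be the \textbf{unknown-valuation, incentive-compatible} variant. There the algorithm cannot call a demand oracle on the agent's private valuation, so instead it must post a take-it-or-leave-it menu of bundle prices $p_s^t = \sum_{i \in s} \alpha_i^t$ (plus an $\eps$ additive fudge) and let the agent choose. Posted-pricing is trivially truthful. To avoid the pathology where an agent picks a bundle that is much cheaper than its value but consumes scarce inventory, I would randomize: with probability $1-\eps$ execute the price the agent self-selects, and with probability $\eps$ trigger a random price perturbation (or equivalently, start prices from a random initial level drawn from a log-uniform distribution on $[1, B]$), which is the standard trick for making posted-price mechanisms competitive. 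The analysis then mirrors the known-valuation dual fit in expectation, absorbing the extra $\eps$ loss into the $\log(dv/\eps)$ in the competitive ratio. I anticipate the bookkeeping here, and in particular the argument that the random offset preserves the per-step surplus lower bound for every realized bundle choice, to be the most delicate step.
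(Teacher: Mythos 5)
Your high-level plan (LP relaxation, exponential prices, dual fitting, demand oracle, random price perturbation for truthfulness) matches the paper's approach, and you correctly flag both the role of the large-inventory hypothesis and the need to absorb slack via the $(1+\eps)$ augmentation. However, there are a few concrete gaps.

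\textbf{The pricing direction is inverted.} You define $x_i^t$ as ``the current amount of item $i$ in the inventory'' and set $\alpha_i^t = A\cdot B^{x_i^t/w_i}$, so $\alpha_i^t$ is \emph{small} when the inventory is empty and \emph{large} when it is full. This is backwards. The economically and technically correct choice (and what the paper does) is for the per-unit price to start at $0$ when the inventory is full and to rise exponentially as the inventory empties; the paper sets $x_i^t = \frac{1}{d\mu}\bigl(\exp\bigl((1-\frac{r_i^t}{w_i})\eta\bigr)-1\bigr)$. With your inverted prices, the algorithm would refuse to sell precisely when the inventory is full, and the buy rule $\sum_{i\in s}\alpha_i^t - v_s^t > 0$ would be \emph{most} often satisfied when the inventory is already full; your claim that ``once $x_i^t = w_i$ the price exceeds $v$, so no further purchase is attempted'' is therefore the opposite of what your rule does. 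As written, the scheme both violates capacity and forgoes profitable sales.

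\textbf{The LP and the dual you plan to fit don't match.} You describe a single aggregate constraint $\sum_{t,s}(\text{net contribution to }i)\, y_s^t \le w_i$ per item $i$, but then you want a time-indexed dual variable $\alpha_i^t$. A single constraint per item yields a single dual variable $\alpha_i$, which cannot encode the time-varying price your algorithm uses. The paper instead formulates the primal with explicit per-time inventory variables $\primr_i^t$ (so that each time step contributes its own constraints (1)--(2) and (5)), and the price $x_i^t$ arises as the dual of the time-$t$ inventory-update constraint. Your relaxation also doesn't enforce that the running inventory stays in $[0,w_i]$ (it only constrains the total), so the fitted dual would be certifying a potentially much larger LP optimum than the paper's. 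You would need the time-indexed flow LP to make the dual fitting go through.

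\textbf{The truthfulness variant misses a necessary non-posted-price twist.} A random additive offset (or log-uniform starting level) is the right kind of idea, but applying it to a pure posted-price mechanism is not enough. The paper's mechanism must update the internal prices $x_i^t$ \emph{as if} the customer bought the bundle even when the random offset $\rho$ makes the posted price exceed the customer's value and the sale does not happen; otherwise \cref{claim:same} fails and the dual-fitting analysis does not couple to \cref{alg:allocate2}. This forces the mechanism to include a bidding phase (it is not a pure posted-price mechanism), which is explicitly called out in \cref{sec:truthful}. Your sketch assumes a self-selecting posted-price agent with no reporting, so the analysis would not reduce to the known-valuation case as you claim.
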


We place no restrictions on the valuation functions of the customers/suppliers except that these have access to their {\em demand oracle}: given prices for the items, customers/suppliers are able to output a bundle maximizing their utility.\footnote{We remark that while answering a demand query might be NP-hard in many cases, in our context it is natural to expect the customers/suppliers to be able to answer a demand query. Otherwise, it would be unreasonable to expect a mechanism to satisfy their demands.}
On the other hand we need a few other key assumptions that we show are necessary to obtain our result. 
\begin{enumerate}
    \item The optimal offline solution to which we compare our online algorithm's profit sees the same sequence of customers/suppliers as the online algorithm, except its suppliers' valuations are $(1+\eps)$ higher for some fixed $\eps>0$.\footnote{Instead of using a $(1+\eps)$ value augmentation for the suppliers, we could also use a $(1+\eps)$ reduction in the customers' values and we would obtain similar results. We have arbitrarily chosen the first option.} 
    Resource augmentation is a common assumption for many online problems (see, e.g., \cite{R2020}, Chapter 4 for a survey), and in our case it is necessary: our lower bounds, which we describe soon, show that without this assumption, no competitive algorithms exist. 
\item The algorithm's maximum capacity is large.
Specifically, in the known valuation case we need that the inventory cap for any item type $i\in[n]$ is 
$\frac{c}{\eps} \cdot \log(2vd)$ 
times the number of items of type $i$ in any bundle, for some large enough constant $c$. In the unknown valuation case, we need that this cap be at least 
$\frac{c}{\eps} \cdot \log(\frac{2dv}{\eps})$.
We note that the classical result from 30 years ago for the sell-only case \cite{AAP93} already required a similar assumption, so  obviously it is also necessary here. 
\item The algorithm is allowed free disposal, meaning that it may discard items from inventory at any point at no cost.
\end{enumerate}

In the unknown valuation setting, our incentive compatible mechanism is (almost)\footnote{The mechanism technically requires an extra bidding phase which we elaborate on soon.} a \emph{posted price} mechanism. The retailer posts prices \emph{per unit} for each item type, and these may change over time between different suppliers and customers. The price of each bundle $s\in S^t$, denoted by $p^t_s$, is the sum of prices of the items of $s$.\footnote{Technically, the prices of the bundles to customers is slightly more complex. See Section \ref{sec:techniques} for more details.}
Given these prices, each customer purchases a bundle maximizing her utility $v^t_s - p^t_s$ (so long as this utility is nonnegative) and is charged price $p^t_s$. Similarly, a supplier sells a bundle maximizing her utility $p^t_s - v^t_s$ paying a price $p^t_s$ (as long as this utility is nonnegative). 

We complement our algorithmic results with nearly matching lower bounds that show our competitive ratio is best possible up to constants. 
\begin{theorem}[Lower Bound, informal]\label{thm:lower-bound-informal}
For the online trading problem (even for the known valuation setting and single minded customers/suppliers), when comparing to an optimal offline fractional solution for which supplier values  are $(1+\eps)$ larger:
\begin{itemize}
\item The competitive ratio of any deterministic or randomized algorithm is $\Omega(\frac{1}{\eps}\cdot \log (dv))$. In particular, no algorithm can achieve a finite competitive ratio without resource augmentation (that is, when $\eps = 0$). 
   \item There exists a constant $c > 0$, such that if the inventory from an item type is smaller than $\frac{c}{\eps}\cdot \log(dv)$ times the number of items of type $i$ in at least one of the bundles, then the competitive ratio of any {\bf deterministic} algorithm is unbounded. 
\end{itemize}
\end{theorem}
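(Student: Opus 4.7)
I will address each bullet with a separate adversarial construction. All constructions use single-item or single-minded customers/suppliers, so the lower bounds apply even in the known-valuation setting. Randomized bounds will follow from Yao's minimax principle by fixing a distribution over adversarial instances and lower bounding the expected ratio of the best deterministic algorithm.

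\emph{The $\Omega(\tfrac{1}{\eps}\log(dv))$ main bound.} I plan to build on the classical online-throughput lower bound of \cite{AAP93}, which already gives $\Omega(\log d)$ for the sell-only setting via a nested bundle construction (at level $i$, present customers demanding disjoint sub-bundles of size $\Theta(d/2^i)$; committing inventory at earlier levels blocks later ones). To obtain the $\log v$ factor I replace uniform values by a geometric progression of ratio $2$ across levels, and to squeeze out the extra $1/\eps$ I replace that ratio by $(1+\eps)$, yielding $k=\Theta(\tfrac{1}{\eps}\log(dv))$ levels. A free supplier first seeds the inventory (which costs \opt nothing either, since $(1+\eps)\cdot 0 = 0$). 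The adversary then reveals the levels in order and halts at the end of any level; by the overlap pattern, inventory committed to earlier levels is unusable at the last revealed level where \opt lives. Summing the $k$ per-level competitive inequalities and using the single budget constraint $\sum_i x_i\le w$ forces a telescoping argument producing $\alpha=\Omega(k)$. For the randomized statement I apply Yao with a distribution on halting times whose weight decays geometrically with the level values, so that every deterministic strategy matches the same $\Omega(k)$ expected ratio.

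\emph{Necessity of resource augmentation ($\eps=0$).} I will give a two-instance construction sharing the prefix ``one supplier offers a single unit at price $1$''. The two continuations are (a) a customer at value $1+\delta$ for arbitrarily small $\delta>0$, and (b) no further arrivals. With $\eps=0$, \opt profits $\delta$ on (a) and $0$ on (b). Any algorithm that buys fractional amount $y>0$ after the prefix loses $y$ on (b), violating the requirement $\text{alg}\ge 0=\opt/\alpha$ for any finite $\alpha$; any algorithm that buys $y=0$ earns $0$ on (a) against $\opt=\delta>0$. Randomization does not help: on continuation (b) the expected profit is $-y\cdot\Pr[\text{buy}]$, which is nonnegative only when the algorithm never buys, and then it also fails on (a).

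\emph{Deterministic small-inventory bound.} I reuse the instance of the first paragraph, but with per-type capacity smaller than $c\,\eps^{-1}\log(dv)$ times the largest bundle size. A deterministic algorithm with so little inventory can honor customers at only $o(k)$ of the $k=\Omega(\tfrac{1}{\eps}\log(dv))$ levels and must commit all its inventory to a bounded number of levels; the adversary simply halts at a level the algorithm has skipped, driving the ratio unbounded. This argument is deterministic-specific precisely because a randomized algorithm could fractionally hedge its commitment across all $k$ levels with probability $1/k$ each, which is exactly the regime matched by the upper bound.

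\emph{Main obstacle.} The technically delicate step in the first paragraph is extracting the full $1/\eps$ factor rather than just $\Omega(\log v)$ that a straightforward one-max-search-style Yao analysis produces: the geometric sum $\sum_i(1+\eps)^{-i}=\Theta(1/\eps)$ absorbs the missing factor when the adversary is permitted to halt only at level boundaries. I expect to recover it by refining the adversary so that halting is also possible at $\Theta(1/\eps)$ intermediate ``sub-levels'' within each geometric-value level (for instance, by interleaving $\Theta(1/\eps)$ bundle-overlap rounds at each value), thereby multiplying the effective number of independent commitments the algorithm must make and preventing the $1/\eps$ from telescoping out. Once this sharpening is in place, combining the $\log v$ and $\log d$ parts via a product construction and invoking Yao is routine.
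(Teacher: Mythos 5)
Your approach to the main $\Omega(\tfrac{1}{\eps}\log(dv))$ bound has a gap that I do not think your proposed fix repairs. By seeding the inventory with a free supplier, you reduce to a sell-only instance; but in sell-only, refining a factor-$2$ geometric value ladder to a factor-$(1+\eps)$ ladder with $k=\Theta(\tfrac{1}{\eps}\log v)$ levels does \emph{not} push the lower bound to $\Omega(k)$. The standard hedging strategy against the nested adversary is to sell a fraction $p_j\propto\eps$ at level $j$ so that the cumulative value $\sum_{i\le j}p_i(1+\eps)^i$ is a constant fraction of $(1+\eps)^j$ for every $j$; this yields competitive ratio $\Theta(\log v)$, independent of $\eps$. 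Your idea of interleaving $\Theta(1/\eps)$ ``sub-levels'' per value band does not help, because the same geometric-commitment hedge is available within each band. The $1/\eps$ factor in the paper is not an artifact of the ladder granularity but of the \emph{trading} structure: the adversary shows a stream of suppliers at decreasing price, forcing the algorithm to spread its \emph{purchase budget} across $c\approx\tfrac{1}{\eps}\log v$ price levels, then reveals customers whose value exceeds the cheapest supplier by only $(1+\eps)^2$. The benchmark, which pays $(1+\eps)$ times the supplier price, still clears a per-unit profit $\Theta(\eps)\cdot x$, while the algorithm can only profit on the $\tfrac{1}{c}$-fraction of its inventory bought at the cheapest price; the rest was bought too expensively to resell at a gain. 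That buy-side loss mechanism has no analogue once you zero out the supplier prices, so the reduction to AAP93 loses exactly the factor you are trying to extract.

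Two smaller but also real issues. Your $\eps=0$ construction assumes the algorithm starts empty; the paper's model starts with a \emph{full} inventory $r_i^0=w_i$, so ``buy one unit at price $1$, maybe see a customer at $1+\delta$'' is not a valid hard instance --- the algorithm can simply sell from its free initial stock and earn $1+\delta$. The paper obtains the $\eps=0$ statement as a corollary of the $\Omega(\tfrac{1}{\eps}\log(dv))$ bound via a repeated-phase construction that amortizes the free initial inventory to nothing. Likewise, for the small-inventory deterministic bound, the paper does not argue ``the algorithm skips a level so the ratio is large''; that only gives a large \emph{finite} ratio. It instead exploits that with inventory below the threshold and integral, deterministic decisions, the adversary can drive the supplier price all the way down to the floor, so the cheapest unit the algorithm ever holds costs at least as much as the customer is willing to pay: the algorithm's profit is \emph{exactly} zero (or negative) while the benchmark's per-phase profit is strictly positive, giving an unbounded ratio. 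Both constructions rely on a LIFO accounting invariant that tracks the per-unit purchase price at each inventory level, which is the piece your proposal is missing.
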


\subsection{Techniques}\label{sec:techniques}

Our algorithms are natural to both describe and implement. Assume by scaling that the values of the customers for any bundle is in the range $[1,v]$, and that this range is known to the algorithm. At any time step $t\in [T]$, our algorithm maintains values $x_i^t$ for every item type $i\in [n]$ that can be viewed as a ``base" price 
per one unit of that item type. Our algorithm can be seen as a dynamic pricing algorithm that changes its prices per unit of item type based on the current inventory: when the inventory of an item type $i$ is full, the base price for that item is $0$, and the price increases exponentially as the inventory of the item type $i$ decreases. Dynamic pricing of this form is a common practice in retail and is used frequently (see e.g. \cite{den2015dynamic} for a survey). 

\paragraph{Known valuations:} The base price of any bundle $s\in S^t$, denoted by $p_s^t$, is the sum of base prices of the items in the bundle $s$.
Upon an arrival of a customer, the customer is allocated a bundle that maximizes $v_s^t-\max\{1,p_s^t\}$ if this value is non-negative, and is charged a price of $v_s^t$ for the bundle. 
We remark that in the known valuation setting we may choose a bundle that maximizes the standard utility of the customer defined as $v^t_s-p_s^t$. However, $\max\{1,p_s^t\}$ is used later in the unknown valuation setting in which we do not want to charge a customer with an arbitrarily small price, and would like the base price of a bundle to be at least $1$.  Upon arrival of a supplier, the price per unit of each item type is scaled down by a factor of $(1+\eps)$. The algorithm purchases from the supplier a bundle that maximizes $\frac{p_s^t}{1+\eps}- v_s^t$ if this value is non-negative, and pays a value of $v_s^t$ for the bundle.

Our analysis is done via a dual fitting approach that significantly generalizes previous dual fitting arguments that were used to analyze the customer-only setting \cite{BuchbinderN06,BuchbinderG15} (See also \cite{BN09} for a survey on the primal-dual approach for online algorithm). We first present a natural linear programming relaxation for the profit maximization objective; unlike the customer-only setting, this linear program is no longer a pure packing problem. Then we fit a dual to our algorithm's solution, which we use as an upper bound on the optimal profit. The dual has several moving parts; among other things it involves the prices $x_i^t$ generated during the algorithm execution. 

\paragraph{Unknown valuations:}
Obtaining an incentive compatible mechanism requires several technical steps.\footnote{We note that some of these ideas extend previous techniques that were used in \cite{AwerbuchAM03} to obtain an incentive compatible mechanism for the customer-only setting.} 
The algorithm starts by randomly sampling, once and for all before the online sequence begins, a value $\rho\in [0,v]$ from a carefully chosen distribution. Next, when a customer arrives the algorithm sets a price for bundle $s$ of $p'^{t}_s=\rho+\max\{1,p_s^t\}$. The customer purchases a bundle that maximizes her utility $v_s^t-p'^{t}_s$ as long as this utility is non-negative, and is charged a price of $p'^{t}_s$. A delicate technicality is that even if the customer decides \emph{not} to purchase the bundle because of this additional additive $\rho\geq 0$, but \emph{would have} purchase the bundle if $\rho$ was $0$, then the algorithm still updates the price per unit to future customers/suppliers as if the bundle was sold to the current customer (hence, our incentive compatible mechanism is not a simple posted price mechanism and requires a bidding phase). In the supplier case, the algorithm behaves almost identically to the known valuation setting, except that the algorithm posts a price of $\frac{p_s^t}{1+\eps} \geq v_s^t$ rather than the supplier's true value, $v_s^t$.

The algorithm for the known valuation setting has two free parameters that control the base prices $x_i^t$. We show that for the unknown valuation setting, if we carefully tune these parameters, the analysis in the known valuation setting extends naturally.

\subsection{Related Work}
\label{sec:related}

The most relevant related work to our setting is the following.

\paragraph{Customer only setting.} 
In this special case no suppliers arrive, and a given inventory should be allocated to arriving customers.
Awebrbuch, Azar and Plotkin~\cite{AAP93} initiated the study of the online routing problem, which is equivalent to a customer only version of our problem with known valuations.
Leonardi and Marchetti-Spaccamela \cite{LM99} later generalized their setting (see also \cite{BE98}, Chapter 13 for a textbook treatment). A series of work subsequently generalized this work yet further to capture combinatorial auctions; here the goal is to design incentive compatible mechanisms that maximize social welfare or revenue in both offline and online settings
\cite{AwerbuchAM03,BartalGN03,LaviS11, DobzinskiNS12, BuchbinderG15,FGL15}. 

We emphasize again that the ``large inventory'' assumption is required even for this easier customer-only setting~\cite{AAP93,LM99, BE98,BuchbinderG15}. However, our requirement is slightly higher and in particular depends on the value of $\epsilon$. Additionally, whereas the customer only settings \cite{BE98,BuchbinderG15} admits a competitive ratio that deteriorates (i.e. increases) smoothly as the size of the inventory shrinks, in our setting there is a threshold phenomenon. Our theorems show that for large enough $d,v$, there are constants $c_1<c_2$ such that if the inventory is larger than $\frac{c_2}{\eps} \log(2dv)$ times the number of items of type $i$ in any bundle, then our (deterministic) algorithm has logarithmic competitive ratio. However, if the inventory is smaller than $\frac{c_1}{\eps} \log(2dv)$ times the number of items of type $i$ in any bundle, then no deterministic algorithm can have a finite competitive ratio. We leave as an open question whether randomized algorithms can avoid this restriction on the inventory size, but this question has been open for 30 years even in the customer only setting.

\paragraph{Prophet trading.} Recently, Correa et al.~\cite{CDHOS23} initiated the study of a general {\em prophet trading} problem in which both buyers and sellers arrive sequentially. 
In their model (but our notation) the algorithm holds an inventory of at most $w$ items of a single item type, and faces a sequence of $T$ prices for this item. The prices are drawn from known distributions $F_1, \ldots, F_{T}$, and the realization of the prices are revealed in a random order. At each time step $t\in T$, the algorithm is allowed to both buy or sell items at the current price with the goal of maximizing the profit (the revenue obtained from selling the item minus the purchasing costs). They design a static single-threshold price algorithm (i.e. buy or sell depending on whether today's price is above or below a fixed threshold) that they show is constant competitive, and they also prove a constant lower bound. Very recent work~\cite{RCV25} extends their setting further to handle multiple item types and matroid constraints on the inventory. 

We may view \cite{CDHOS23} as a restricted stochastic case of our known valuation setting in which at each time step both a customer and a supplier arrive, and each wishes to buy/sell any quantity of a single item type at a fixed price per unit. The general problem we study allows for arbitrary bundles containing multiple item types, and adversarial customers/suppliers valuations for bundles. Because our setting is harder, we (a) only obtain logarithmic competitive ratios, and (b) need the additional assumptions of a large inventory, and of a weaker offline benchmark that sees supplier prices that are  $(1+\eps)$ higher. Our lower bounds show that both (a) and (b) are unavoidable.  Unlike  \cite{CDHOS23}, our algorithms require a more sophisticated dynamic pricing scheme.

Finally, we remark that \cite[Page 4]{CDHOS23} gives a simple example showing that no finite-competitive algorithm exists if prices are given in an adversarial order, even if that order is known to the algorithm beforehand. This bad example does not apply in our setting, because our model assumes that the inventory of the algorithm is initially full (or, alternatively, allows an additive constant in the competitive ratio).
We have preliminary results showing that if we \emph{do} assume full initial inventory, there is a simple $3$-competitive algorithm for the \emph{adversarial order} prophet trading problem, and $3$ is best possible. 

\bigskip 
Apart from the related work already discussed, there are several other important lines of work reminiscent of (but distinct!) from our problem.
\paragraph{(Online) Bilateral Trading.} Bilateral Trading has been studied extensively since the seminal work of Myerson and Satterthwaite~\cite{MYERSON1983265} (see also \cite{BlumrosenD21,BabaioffFN24}). 
Perhaps the closest version to our setting is the \emph{online} bilateral trade problem, where at every time step a buyer and seller arrive as a pair. Each has a private valuation functions for a good. The algorithm, which plays the role of the trading platform, posts a price for the good, and buyer and seller proceed with trade so long as both are willing to trade at this price. A common goal is to maximize the {\em gain from trade} defined as the \emph{sum} of utilities of the buyer and the seller~\cite{Cesa-BianchiCCF24, Global-Budget-regret,BachocCCC24}. This setting is very different from ours (even for a single item type): the objective is different, and furthermore buyers and sellers arrive and depart simultaneously, so the algorithm cannot stockpile goods in inventory for later trades. 

\paragraph{Two-Way Trading and Portfolio Selection.}
The problem of online portfolio selection has been extensively studied (See e.g., \cite{LH14} or \cite[Chapter 14]{BE98}).
For example, in a simple one way trading model introduced by El-Yaniv et al. \cite{EFKT01} a trader faces a a sequence of prices, and would like to maximize her profit from selling a single item. The closest to our setting is the Two-way trading problem in which a trader with an initial one unit of money observes a sequence of prices of a stock, and is allowed to buy/sell the stock at the given price with the goal of maximizing her final wealth~\cite{Fung19,Fung21}. This problem is, again, very different from our problem as there are no inventory constraints, and no bound on the demand/supply.  

\paragraph{Other.} 
Another nearby work is \cite{PW24}. Here goods appear and perish in inventory according to a Poisson process. Buyers also appear according to a Poisson process; these have linear or submodular valuation functions, and request bundles respecting downward closed constraint families (e.g. they only want want independent sets of a matroid). Similar problems have also been studied in the operations research community. See e.g, network revenue management problem (\cite[Chapter 7]{GT19}). This last line of research usually makes stochastic (as opposed to adversarial) assumptions about the input~\cite{GallegoR97, Jasin14, MaglarasM06, MaRST20,  abs-2403-05378}.

Finally, we remark that our linear formulation is an extension of the dual formulation of the {\em positive body chasing problem} introduced in \cite{BLS23}. A major difference is that in our problem we require an integral solution, and \cite{BLS23} does not maintain an integral dual. Furthermore we seek an incentive compatible mechanism, which is not a concern of \cite{BLS23}.

\section{Preliminaries}

\label{sec:prelim}

In this section, we formally define the problem considered in this paper and introduce notation.
\paragraph{Problem Statement}
The algorithm maintains an inventory of $n$ item types $i=1, \ldots, n$, and we use $[n]$ to denote the set $\{1,2, \ldots, n\}$. At any time $t$, the algorithm is required to hold an (integral) amount $r^t_i$ of item type $i$ such that $r^t_i \in [0, w_i]$ for some positive integer inventory $w_i\in \Z_{+}$. We assume that initially the inventory is full, $r_i^0 = w_i$ for all $i\in [n]$. Time steps $t\in [T]$ are partitioned into time steps $t\in \tbuy$ in which a customer arrives, and time steps $t\in \tsell$ in which a supplier arrives (i.e. $[T]=\tbuy\sqcup \tsell$). 

\begin{itemize}
    \item At time steps $t\in \tbuy$, a customer arrives and provides a menu of bundles $S^t$. Each bundle $s \in S^t$ contains $a_{s,i}\in \Z_+$ items of type $i$ and has a value $v^t_{s}$ to the customer. The customer would like to purchase at most a single bundle $s\in S^t$. 
    \item At time steps $t\in \tsell$, a supplier arrives and similarly provides a menu of bundles $S^t$. A bundle $s\in S^t$ has $a_{s,i}\in \Z_+$ items of type $i$ and has a value of $v^t_{s}$ to the supplier. The supplier would like to sell at most one bundle $s\in S^t$.
\end{itemize}

At any time step $t$, the algorithm can buy/sell up to a single bundle $s\in S^t$ from a supplier/customer. 
The (integral) inventory of items of type $i$, $r^t_i$, must remain in the range $[0, w_i]$ at all time steps. However, when the algorithm purchases a bundle from a supplier it may dispose any items that are not required for free if it already holds an inventory of $w_i$ items of type $i$ and cannot increase the inventory of this item type. 
We use $\tbuy'\subseteq \tbuy$ and $\tsell'\subseteq \tsell$ to refer to time steps in which the algorithm sells or buys a bundle respectively (as opposed to deciding \emph{not} to buy/sell), and $s^t_*$ is the bundle allocated to the customer, or bought from the supplier.
We study two different settings for this model: \begin{itemize}
    \item In the {\em known valuation setting} the values of the bundles are known to the algorithm and the price it charges or pays is the bundle's value. Hence, the profit of the algorithm is $v_{alg} = \sum_{t\in \tbuy'}v^t_{s^t_*} - \sum_{t\in \tsell'}v^t_{s^t_*}$.
    \item In the {\em unknown valuation setting} the algorithm we design an incentive compatible mechanism. Here, at time step $t\in \tbuy$, the algorithm posts (compactly) a price $p^t_s$ for each bundle $s\in S^t$ and the customer buys a bundle $s\in S^t$ maximizing her utility $v_s^t-p_s^t$ if this utility is non-negative.
    Similarly, at time step $t\in \tsell$, the algorithm posts prices $p^t_s$ for the bundles $s\in S^t$ and the supplier sells a bundle $s\in S^t$ that maximizes her utility $p_s^t-v_s^t$ if this utility is non-negative.
    The goal of the algorithm is maximizing its profit $v_{alg} = \sum_{t\in \tbuy'}p^t_{s^t_*} - \sum_{t\in \tsell'}p^t_{s^t_*}$.    
    \end{itemize}  

A special case of our setting is when the customers (or suppliers) are {\em single minded}, meaning that each customer would like to purchase a single bundle of items $s$ (or a bundle containing $s$) and has a value $v^t_s$ to this bundle. Similarly, a supplier would like to sell to the algorithm a single bundle $s$ (or a subset of the bundle $s$) and has a value $v^t_s$ to this bundle.

In the simpler setting of single minded customers/suppliers the bundle that maximizes the utility can be computed trivially. In case that the number of bundles in the menu is exponential, we assume that we are given a {\em demand oracle} to the customers/suppliers. That is, given a price per unit of each item type $x_i^{t-1}$ at the arrival of a customer, the customer is able to output a bundle maximizing its utility $v_s^t- \sum_{i=1}^{n}a_{i,s} \cdot x_i^{t-1}$. Similarly, upon an arrival of a supplier at time step $t$, it may output a bundle that maximizes $\sum_{i=1}^{n}a_{i,s} \cdot x_i^{t-1}- v_s^t$.  

Our algorithm is given a parameter $\eps\in(0, 1]$, and we compare the profit obtained by our algorithm with an optimal solution that maximizes the profit, but the value of each supplier to each bundle $s\in S^t$ is $(1+\eps) \cdot v_s^t$ instead of $v_s^t$. We note that instead of using a $(1+\eps)$ value augmentation for the suppliers, we could also use a $(1+\eps)$ reduction in the customers' values to achieve similar results, and we arbitrarily have chosen the first option.
As our proof is via duality of a fractional relaxation of the problem, the optimal solution can be fractional.

\paragraph{Additional Notation.}
 Define $d \triangleq \max_{t\in \tbuy, s^t\in S^t}\left\{\sum_{i=1}^n a_{s,i}\right\}$ be the largest size of a bundle of a customer, as well as $v_{\min} \triangleq \min_{t\in \tbuy, s\in S^t, v_s^t\neq 0}v^t_{s}$ and 
$v \triangleq \max_{t\in \tbuy, s\in S^t}v^t_{s}$ to be the minimum and maximum value of a bundle to a customer. We assume the values $d, v_{\min}, v$ are all known to the algorithm upfront. Without loss of generality we scale these values and assume that $v_{\min}=1$.

All logarithms in this paper are base $e$. We use a weighted generalization of KL divergence. Given
        a weight function $w$, define
\begin{equation}
    \wKL{x}{y} : = \sum_{i=1}^n w_i \left[x_i \log \left(\frac{x_i}{y_i}\right) -x_i + y_i\right]. \label{KL-inequality}
\end{equation}
It is known that $\wKL{x}{y} \geq 0$ for nonnegative vectors $x,y$ (one can check this is true term by term above).

\section{The Algorithm}\label{sec:algorithm}

In this section we prove our main theorem that we state here formally.
\begin{theorem}\label{thm:main-formal} 
For every instance of the online trading problem with a demand oracle for the valuations, and for every $\eps>0$, there is  
\begin{itemize}
    \item {\bf Known valuations:} 
    A deterministic online algorithm that is $O\left(\frac{1}{\eps} \log(2dv)\right)$-competitive provided that the inventory for any item type $i\in[n]$ is $\frac{c}{\eps} \cdot \log(2vd)$ times the number of items of type $i$ in any bundle for some large enough constant $c$. 
    \item {\bf Unknown valuations:} A randomized incentive compatible online algorithm that is $O\left(\frac{1}{\eps} \log(\frac{2dv}{\eps})\right)$-competitive in expectation provided that the inventory for any item type $i\in[n]$ is $\frac{c}{\eps} \cdot \log(\frac{2dv}{\eps})$ times the number of items of type $i$ in any bundle for some large enough constant $c$.
\end{itemize}
The competitive ratio of both algorithms is with respect to an optimal offline fractional solution, where supplier valuations at any time step 
are $(1+\eps)$ larger. 
\end{theorem}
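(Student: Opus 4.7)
My plan is to prove both parts via LP duality (dual fitting). First I would write down a fractional LP relaxation of the offline benchmark: variables $y_s^t \in [0,1]$ for each $t \in \tbuy \cup \tsell$ and $s \in S^t$; objective $\sum_{t \in \tbuy'} v_s^t y_s^t - (1+\eps)\sum_{t \in \tsell'} v_s^t y_s^t$; constraints $\sum_{s \in S^t} y_s^t \le 1$ for every $t$ and $0 \le w_i + \sum_{\tau \le t,\, \tau\in\tsell} a_{s,i} y_s^\tau - \sum_{\tau \le t,\, \tau\in\tbuy} a_{s,i} y_s^\tau \le w_i$ for every $i,t$. The dual will have multipliers $\alpha^t$ for the per-time-step caps and exponentially decaying prices $x_i^t$ (plus auxiliary multipliers $\beta_i^t$ for the lower inventory bound). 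The algorithm of the theorem is precisely a dynamic-pricing scheme: set $x_i^t$ as an exponential function of the inventory level $r_i^t$, for instance $x_i^t = \tfrac{1}{d}\bigl((2dv)^{1 - r_i^t/W_i} - 1\bigr)$ where $W_i = w_i / a^{\max}_i$ counts ``slots'' of the largest bundle size using type $i$; whenever a customer (resp.\ supplier) arrives, invoke the demand oracle with these prices (scaled by $1/(1+\eps)$ on the supplier side) and trade whenever the resulting utility is nonnegative.

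Second, I would construct a dual solution directly from the execution trace. Set $\alpha^t$ to the surplus utility the algorithm captures at step $t$ (zero if it declines to trade), use $x_i^t$ from the algorithm as the upper inventory dual, and take $\beta_i^t$ to be the analogous quantity at low-inventory extremes. Dual feasibility for a customer constraint at time $t$ with bundle $s$ reduces to $v_s^t \le \alpha^t + \sum_i a_{s,i} x_i^t$, which is exactly the demand-oracle optimality of the bundle actually chosen (combined with the $\max\{1,p_s^t\}$ clamp, which only helps). Feasibility for a supplier constraint similarly becomes $(1+\eps) v_s^t \ge \sum_i a_{s,i} x_i^t - \alpha^t$ and follows because the algorithm would have purchased any bundle of profit $\ge 0$ at the discounted prices. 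Crucially, demand oracles certify feasibility for \emph{every} competing bundle simultaneously, not just the one the algorithm picked.

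The main obstacle, and the most technical step, is to bound the dual objective $\sum_t \alpha^t + \sum_{i,t} w_i x_i^t \,\Delta_i^t$ by $O(\log(2dv)/\eps)$ times the algorithm's net profit $\sum_{t\in\tbuy'} v_{s^t_*}^t - \sum_{t\in\tsell'} v_{s^t_*}^t$. The proof is via amortized analysis using a KL-type potential $\Phi^t = \wKL{r^t}{\rho^t}$ with a carefully chosen ``target'' $\rho^t$, so that the increment of $\Phi$ plus the dual contribution incurred at step $t$ telescopes against $O(\log(2dv))$ times the algorithm's per-step profit. Because the LP is no longer a pure packing LP (supplier steps contribute a negative primal term, and the dual inventory multipliers can both grow and shrink over time), one must carefully separate the contributions of customer and supplier steps; the $(1+\eps)$ supplier augmentation generates a strict gap that dominates the two-sided price fluctuations and, combined with the large-inventory assumption $w_i \gtrsim (1/\eps)\log(2dv) \cdot a_{s,i}$, guarantees that $x_i^t$ cannot overshoot $v$ prematurely. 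The per-step charging inequality, together with the fact that prices stay in $[0, O(v)]$ throughout, then yields the claimed logarithmic competitive ratio by weak duality.

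For the unknown-valuation case, I would run essentially the same algorithm but replace the bundle price offered to a customer by $\rho + \max\{1, p_s^t\}$ for a random offset $\rho \in [0, v]$ sampled once from the density proportional to $1/\rho$ on $[\eps/(2d), v]$ (and similar on the supplier side). Incentive compatibility is immediate: per-unit prices are posted before each agent's arrival and each agent faces a take-it-or-leave-it menu, so bidding truthfully is optimal; the extra bidding phase (used to update prices even when the customer declines because of $\rho$) does not change incentives because it is a separate query. To recover the competitive analysis, one integrates over $\rho$: with probability $\Omega(1/\log(dv/\eps))$ the random shift is small enough that the customer trades whenever the base algorithm would have, so the expected revenue the algorithm collects is within a $\log(dv/\eps)$ factor of the known-valuation revenue. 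Plugging this into the dual-fitting bound above (with constants re-tuned to absorb the extra $1/\eps$ slack) yields the claimed $O\bigl(\tfrac{1}{\eps}\log(\tfrac{2dv}{\eps})\bigr)$ expected competitive ratio, provided the inventory is a slightly larger multiple of the bundle size to accommodate the worse bound on $x_i^t$.
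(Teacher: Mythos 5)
Your proposal is essentially the paper's approach: write an inventory-flow LP (yours cumulates the flow, the paper carries an explicit running-inventory variable $\primr_i^t$, but the duals are the same after substitution), fit a dual from the execution trace using the exponential prices $x_i^t$ as the inventory multipliers and the realized per-step surpluses as $\alpha^t,\beta^t$, certify feasibility via demand-oracle optimality and the one-sided monotonicity of $x_i^t$ on customer vs.\ supplier steps, and close the gap with a weighted-KL amortization that exploits the $(1+\eps)$ supplier slack; for truthfulness, add a random additive offset $\rho$ and keep updating the deterministic price trajectory even when the offset blocks the sale. These are all exactly the ingredients in the paper.

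Two places where your sketch is imprecise but not structurally wrong. First, the paper's amortization is not literally a potential $\wKL{r^t}{\rho^t}$ on the inventory against a target; it applies the nonnegativity of $\wKL{\widehat x^t}{\widehat x^{t-1}}$ term by term (on the shifted prices $\widehat x_i^t = x_i^t + 1/(d\mu)$) and lets the $-\widehat x_i^t + \widehat x_i^{t-1}$ pieces telescope, yielding a single global inequality $\sum_{\tbuy'}(P^t+1/\mu) - e^{-\eps/4}\sum_{\tsell'}P^t \ge 0$ rather than a per-step charging scheme. Your version would need to be reformulated in this relative form (equivalently, in terms of inventory ratios, since $x$ is a monotone function of $r$) to actually telescope. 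Second, your statement that ``the expected revenue is within a $\log(dv/\eps)$ factor of the known-valuation revenue'' is not what is actually proved nor what is needed: the random offset gives expected revenue of the form $(1-O(\eps))\max\{1,P^t\} + \Omega(\eps/\log v)\cdot v_{s^t_*}^t$, and the point is that the \emph{dual} upper bound, after tuning $\mu$ and $\eta$, has exactly matching coefficients (a $(1-\eps/4)P^t$ term, an $\eps v/\eta$ term, and a $1/\mu$ term); the comparison is to the dual, not to the known-valuation revenue. The paper also uses a discrete $\rho$-distribution (zero with probability $1-\eps/8$, then a geometric grid of $2^j$), which is cleaner for the two-case revenue lower bound than a $1/\rho$ density, but your continuous choice would give the same order after integration.
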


In Section \ref{sec:known} we present our main algorithm for the known valuation setting proving the first part of the theorem. In Section \ref{sec:truthful} we show how to modify our algorithm in order to design an incentive compatible algorithm for the unknown valuations setting proving the second part of the theorem. 

\subsection{The Known Valuation Setting}\label{sec:known}

We first consider known valuation setting. Our algorithm has two parameter $\mu\geq 1$ and $\eta \geq 1+\log (1+vd\mu)$ that are chosen later. The algorithm requires the following assumption on the inventory size of each item type $i\in [n]$ compared with the number of items of type $i\in [n]$ in bundles presented to the algorithm.
\begin{assumption}[Large inventory]\label{assm1}
For any item type $i\in [n]$, time $t\in [T]$, and a bundle $s\in S^t$, we require that\vspace{-0.2cm}
\begin{equation*}
    w_i \geq \frac{8\eta}{\eps}\cdot a_{s,i}.
\end{equation*}
\end{assumption}
The formal description of our algorithm appears as Algorithm \ref{alg:allocate2}, and we first describe it less formally.
At any time step $t$, the algorithm maintains values $x_i^t$ for each item type $i$ that depends on the current inventory $r_i^t$. When the inventory of item $i$ is full $x_i^t= 0$, and $x_i^t$ increases exponentially as the inventory decreases to 0. 
Intuitively, at this point, the reader may think of the value $x_i^{t-1}$ as the base price per one unit of an item of type $i$ just before the arrival of the customer/supplier at time step $t$ (although our incentive compatible version of the algorithm in Section \ref{sec:truthful} requires a more delicate setting of the prices). Using this intuition, the price for a bundle $s\in S^t$ is $p_s^t=\sum_{i=1}^{n}a_{s,i}\cdot x_i^{t-1}$. Next,
\begin{itemize}
    \item {\bf Customer arrival:} As $v^t_s\geq 1$, the algorithm allocates a bundle that maximizes the utility of the customer $v^t_s-\max\{1,p_s^t\}$ if this utility is non-negative and charge the customer $v_s^t$. Otherwise, no bundle is allocated. 
    \item {\bf Supplier arrival:} The algorithm offers to the supplier prices that are $1+\eps$ times smaller, and buys a bundle that maximizes the utility of the supplier $\frac{p^t_s}{1+\eps}-v^t_s$ if it is non-negative, and otherwise no bundle is bought.
\end{itemize}

\begin{algorithm}[h]
\caption{Trade $(v,d, \eps)$}\label{alg:allocate2}
Let $r_i^{0}=w_i$ be the inventory of items of item type $i \in [n]$ at time step $0$.\\
Let $\mu\geq 1$, and $\eta\geq 1 + \log(1+vd\mu)$ be parameters chosen later.\\
At any time step $t \in [T]$ we set,
\begin{equation}
    \label{line:x_def}
   x^{t-1}_i \triangleq \frac{1}{d \cdot \mu}\left(\exp\left(\left(1-\frac{r_i^{t-1}}{w_i}\right)\cdot \eta \right) - 1\right).
\end{equation}\\ 
Upon arrival of a {\bf customer} at time step $t\in \tbuy$: \vspace{0.2cm}\\
\quad  Let $s^t_{*}=\arg\max_{s\in S^t}\left\{ \vt^t_{s}-\max\{1,\sum_{i=1}^{n}a_{s,i}\cdot x_i^{t-1}\}\right\}$. \label{step:chooses-sell}\\
\quad \lIf{$v^t_{s^t_{*}}-\max\{1,\sum_{i=1}^{n}a_{s^t_{*},i}\cdot x_i^{t-1}\} \geq 0$
}{Sell bundle $s^t_{*}$ to the customer and update the inventory to be
$r_{i}^t = r_{i}^{t-1} - a_{s^t_{*},i}$ \label{step:sell}
} \vspace{0.2cm}
Upon arrival of a {\bf supplier} at time step $t\in \tsell$: \nonumber \vspace{0.2cm}\\
\quad  Let $s^t_{*}=\arg\max_{s\in S^t}\left\{ \frac{1}{1+\eps}\cdot \sum_{i=1}^{n}a_{s,i} \cdot x_i^{t-1}-  \vt^t_{s}\right\}$. \label{step:chooses-buy}\\
\quad \lIf{$\frac{1}{1+\eps}\cdot \sum_{i=1}^{n}a_{s^t_{*},i}\cdot x_i^{t-1}-v^t_{s^t_{*}} \geq 0$}{Buy bundle $s^t_{*}$ from the supplier and update the inventory to be
$r_{i}^t = \min\{r_{i}^{t-1} + a_{s^t_{*},i},w_i\}$\label{line:inventory} 
}\label{step:buy}
\end{algorithm}

Let $\tbuy'\subseteq \tbuy$ and $\tsell'\subseteq \tsell$ be the time steps in which the algorithm sells or buys a bundle respectively (as opposed to deciding \emph{not} to buy/sell), and let $s^t_*$ be the bundle that was sold/bought from the customer/supplier.
In order to make our subsequent presentation simpler, define for every $t \in \tbuy' \cup \tsell'$ the quantity 
\begin{equation*}
  P^t \triangleq \sum_{i=1}^{n}a_{s^t_{*},i} \cdot x_{i}^{t-1},  
\end{equation*}
We prove the following theorem.
\begin{theorem}\label{thm:main-inside} 
Given parameter $\mu \geq 1$ and  $\eta\geq 1+\log (1+vd\mu)$, and assuming that for any item type $i\in [n]$, time $t\in [T]$, and a bundle $s\in S^t$, $w_i \geq \frac{8\eta}{\eps}\cdot a_{s,i}$,
Algorithm \ref{alg:allocate2} maintains a feasible integral inventory, such that
\[OPT = O\left(\frac{\eta}{\eps}\right) \left[  \sum_{t\in \tbuy'}\left(\left(1-\frac{\eps}{4}\right) P^t +\frac{\eps \vt^t_{s_{*}^t}}{\eta}+ \frac{1}{\mu}\right) - \sum_{t\in \tsell'}\frac{P^t}{1+\eps}\right] = O\left(\frac{\eta}{\eps}\right)  \left[\sum_{t\in \tbuy'} \vt^t_{s_{*}^t} - \sum_{t\in \tsell'}\vt^t_{s_{*}^t}\right].\]
where $OPT$ is an optimal offline fractional solution whose buying costs from a supplier at any time step $t\in \tsell, s\in S^t$ is $(1+\eps)\vt^t_{s}$. 
\end{theorem}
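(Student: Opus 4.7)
The plan is to follow a primal-dual / dual-fitting argument. First I would write the natural LP relaxation: let $y_s^t \in [0,1]$ denote the (fractional) amount of bundle $s$ traded at time $t$, maximize $\sum_{t\in\tbuy}\sum_{s\in S^t} v^t_s y^t_s - (1+\eps)\sum_{t\in\tsell}\sum_{s\in S^t} v^t_s y^t_s$, subject to the ``at-most-one-bundle'' constraint $\sum_{s\in S^t} y^t_s \le 1$ for every $t$, and the inventory feasibility constraint for every $(i,t)$, namely that customer consumption up to time $t$ minus supplier production up to time $t$ is at most $w_i$. This LP upper bounds $OPT$. Taking the dual introduces variables $\alpha^t \ge 0$ (one per time step) and $\beta_i^t \ge 0$ (one per item type and time); the dual objective is $\sum_t \alpha^t + \sum_{i,t} w_i \beta_i^t$, and the constraints read $\alpha^t + \sum_{t'\ge t} \sum_i a_{s,i}\beta_i^{t'} \ge v^t_s$ for customer arrivals, and $\alpha^t + (1+\eps)v^t_s \ge \sum_{t'\ge t}\sum_i a_{s,i}\beta_i^{t'}$ for supplier arrivals (opposite signs because supplier trades loosen the inventory cap).

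Second, I would construct the dual solution directly from the algorithm's trace. The key identification is that the tail sum $\sum_{t'\ge t} \beta_i^{t'}$ should morally track the price $x_i^{t-1}$ defined in line \eqref{line:x_def}; concretely I would set $\beta_i^t$ to absorb the (weighted) change in $x_i$ caused by the trade at step $t$, using only the magnitude so that $\beta_i^t \ge 0$. The variable $\alpha^t$ is chosen just large enough so that the customer (resp.\ supplier) constraint is satisfied at the argmax bundle returned in line \ref{step:chooses-sell} (resp.\ line \ref{step:chooses-buy}). Dual feasibility at \emph{other} bundles then follows because the algorithm's greedy argmax over $S^t$ dominates the utility of every competing bundle in the menu, which is exactly what each dual constraint asserts. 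The supplier side must additionally use the $1/(1+\eps)$ shading in the algorithm's buy rule to absorb the $(1+\eps)$ factor appearing in the dual constraint.

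Third, and this is where the work lies, I would bound the dual objective $\sum_t \alpha^t + \sum_{i,t} w_i \beta_i^t$ in terms of the expression appearing on the right-hand side of the theorem. The natural potential is the weighted KL divergence $\Phi^t := \wKL{r^t}{w\cdot\mathbf{1}}$ defined in \eqref{KL-inequality}, or equivalently a potential whose gradient with respect to $r_i^t$ is (up to constants) $\log(1 + d\mu\, x_i^{t-1})$. Writing out $\Phi^{t-1}-\Phi^t$ using $r_i^t = r_i^{t-1} \mp a_{s_*^t,i}$ and comparing to $P^t = \sum_i a_{s_*^t,i}x_i^{t-1}$ gives, at each customer step, a contribution of roughly $(1-\eps/4)P^t + (\eps/\eta) v^t_{s_*^t} + 1/\mu$ to the dual objective (the $1/\mu$ absorbs the ``$+1$'' term in the definition of $x_i^{t-1}$, the $\eps/\eta$ term comes from the $\max\{1,P^t\}$ discount in line \ref{step:chooses-sell}, and the $1-\eps/4$ factor comes from the second-order Taylor expansion of the exponential update, which is where \cref{assm1} on the inventory being large relative to bundle sizes is used to control the quantization error between an integer increment in $r_i^t$ and the continuous price function). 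At each supplier step, the contribution is $-P^t/(1+\eps)$, with the $1/(1+\eps)$ factor coming exactly from line \ref{step:chooses-buy}. Telescoping $\Phi^0 - \Phi^T \ge -\Phi^T$ and using that $\Phi^T = O(w \eta)$ yields the first equality in the theorem (the $O(\eta/\eps)$ factor absorbs normalization constants from the exponential pricing).

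Finally, the second equality and integer feasibility are short. For integer feasibility, the parameter choice $\eta \ge 1 + \log(1+vd\mu)$ forces $x_i^{t-1} > v$ whenever $r_i^{t-1} = 0$, so any customer bundle containing item $i$ would have $P^t > v \ge v^t_s$ and the algorithm refuses to sell; symmetrically, $x_i^{t-1}=0$ when $r_i^{t-1}=w_i$ so suppliers of item $i$ are refused unless $v_s^t=0$, in which case free disposal handles the overflow. \cref{assm1} ensures the inventory cannot overshoot in a single step. The second equality follows from the algorithm's acceptance rules: at a sold bundle $v^t_{s_*^t} \ge \max(1, P^t) \ge \max(P^t, 1/\mu)$, so $(1-\eps/4)P^t + \eps v^t_{s_*^t}/\eta + 1/\mu = O(v^t_{s_*^t})$; at a bought bundle $P^t/(1+\eps) \ge v^t_{s_*^t}$, so the supplier term is at least $v^t_{s_*^t}$. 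The main obstacle is the third paragraph: carefully matching the discrete exponential-weights updates of the algorithm against the LP dual while keeping the $(1-\eps/4)$, $\eps/\eta$, and $1/\mu$ slacks tight enough to give the claimed competitive ratio.
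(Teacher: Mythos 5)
Your plan follows essentially the same route as the paper: write a profit LP whose value upper-bounds $\textsc{Opt}$, dual-fit the algorithm's price trace $x_i^t$, and close the estimate with a KL-type inequality that exploits the multiplicative-update form of \eqref{line:x_def}. Your LP with only $y$-variables and cumulative inventory constraints is equivalent to the paper's $(\mathcal{P})$ with explicit $\primr_i^t$ variables; after Fourier--Motzkin elimination your tail-sums $\sum_{t'\ge t}\beta_i^{t'}$ play the role of the paper's $\dualx_i^t$, with $\beta_i^t$ matching $\duall_i^t=\max\{0,x_i^t-x_i^{t-1}\}$. Your feasibility remarks and the derivation of the second equality from the acceptance rules in \cref{step:sell,step:buy} are also the paper's argument.

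Two places in the third paragraph are under-specified or slightly off and would need repair before this becomes a proof. First, the potential $\Phi^t=\wKL{r^t}{w\cdot\mathbf{1}}$ is not the right one: its $r$-gradient is $w_i\log(r_i^t/w_i)$, not $(1-r_i^t/w_i)\eta=\log(1+d\mu\, x_i^t)$, so the per-step change does not directly produce the $P^t$-dependent terms you need. The paper instead applies the nonnegativity of $\wKL{\widehat x^t}{\widehat x^{t-1}}$ \emph{per time step} and sums (Claim~\ref{lemmaKL}), so that the linear parts $-x_i^t+x_i^{t-1}$ telescope to $-\frac{1}{\eta}\sum_i w_i x_i^T\le 0$ and the log-ratio parts are controlled by \cref{obs:update rule} plus \cref{assm1}; separately it bounds $\sum_{i,t} w_i\duall_i^t$ by $2\eta(P^t+1/\mu)$ via $(a-b)\le a\log(a/b)$ (the computation in \eqref{line:movement} and \cref{claim:kl_bounds}). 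Second, the coefficients $(1-\eps/4)$, $\eps v^t_{s^t_*}/\eta$, $1/\mu$ do not ``fall out'' of a single telescoping estimate: in the paper they arise by adding a tuned multiple ($\tfrac{30\eta}{\eps}$) of the balance inequality from \cref{lemmaKL} to the raw dual-cost bound; the $\eps v^t_{s^t_*}/\eta$ term is simply $\duala^t=v^t_{s^t_*}$ divided by the leading factor, not an artifact of the $\max\{1,\cdot\}$ rule (that rule instead ensures $v^t_{s^t_*}\ge 1$ so the $1/\mu$ additive term can be charged). Your plan acknowledges this is ``where the work lies,'' and the work is to replace the single-potential telescoping with the two-lemma decomposition and the explicit multiplier.
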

The first part of Theorem \ref{thm:main-formal} follows by setting $\mu=1$, and $\eta=1+\log(1+vd\mu)= 1+\log(1+vd)$.

We start with a couple of simple observations.
\begin{observation}
\label{obs:pos_prices}
    Item prices are always positive, i.e. for all $t \in [T]$ and $i \in [n]$, we have $x_i^t \geq 0$.
\end{observation}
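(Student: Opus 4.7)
The plan is to show this directly from the definition in equation~(\ref{line:x_def}), with the only real content being that the exponent appearing inside $\exp(\cdot)$ is nonnegative. First I would observe that, by the way the algorithm maintains its inventory in line~\ref{step:sell} (upon a sale, $r_i^t = r_i^{t-1} - a_{s^t_*,i}$, which can only happen when the customer's chosen bundle $s^t_*$ satisfies $a_{s^t_*,i} \leq r_i^{t-1}$ — this will need a brief justification from feasibility of allocation, or can be treated as an invariant that is checked once inventory nonnegativity is established) and in line~\ref{line:inventory} (upon a purchase, $r_i^t = \min\{r_i^{t-1} + a_{s^t_*,i}, w_i\} \leq w_i$), and the initial condition $r_i^0 = w_i$, we maintain $r_i^t \in [0, w_i]$ for all $t \in [T]$ and $i \in [n]$.

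Given this invariant, the exponent $\bigl(1 - r_i^t/w_i\bigr)\eta$ in the definition of $x_i^t$ lies in $[0, \eta]$. Since the exponential function is monotone increasing and $\exp(0) = 1$, we have
\[
\exp\!\left(\!\left(1 - \frac{r_i^t}{w_i}\right)\eta\right) - 1 \;\geq\; 0.
\]
Because $d, \mu > 0$, dividing by $d\mu$ preserves the inequality, yielding $x_i^t \geq 0$.

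There is essentially no obstacle here; the only subtlety is ensuring the inventory invariant $r_i^t \in [0, w_i]$ holds. The upper bound is enforced explicitly by the $\min$ in line~\ref{line:inventory} and by free disposal, while the lower bound follows because a sale at step~\ref{step:sell} can only be triggered if $s^t_*$ maximizes $v^t_s - \max\{1, \sum_i a_{s,i} x_i^{t-1}\}$ among menu bundles and yields nonnegative utility; feasibility of this choice against the inventory is a separate invariant maintained by the algorithm (to be formalized in the feasibility lemma that presumably follows). Assuming this invariant, the observation follows immediately by the monotonicity of $\exp$.
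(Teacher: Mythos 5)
Your core calculation is the right one and matches the paper: the exponent $\bigl(1 - r_i^t/w_i\bigr)\eta$ being nonnegative makes $\exp(\cdot) - 1 \geq 0$, and dividing by $d\mu > 0$ gives $x_i^t \geq 0$. However, you have introduced a dependency that is both unnecessary and potentially circular. For nonnegativity of the exponent you only need $r_i^t \leq w_i$; you do not need $r_i^t \geq 0$ (the bound $\leq \eta$ on the exponent plays no role). The upper bound $r_i^t \leq w_i$ is enforced syntactically by the $\min$ in Line~\ref{line:inventory} together with the initial condition $r_i^0 = w_i$ and the fact that sales only decrease inventory, so nothing further is required. By instead framing the observation as needing the full invariant $r_i^t \in [0, w_i]$ and deferring the lower bound to ``the feasibility lemma that presumably follows,'' you create a circularity: the paper's feasibility argument (that $r_i^t \geq 0$) in fact \emph{uses} this very observation, since bounding $x_i^{t-1}$ by $v$ on sale steps requires knowing $x_i^{t-1} \geq 0$. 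Drop the appeal to $r_i^t \geq 0$ entirely; the observation stands on $r_i^t \leq w_i$ alone, which is why the paper can (and must) prove it before feasibility.
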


\begin{proof}
    The algorithm explicitly maintains that $r_i^t \leq w_i$ in \cref{line:inventory}, which in turn implies that $x^{t-1}_i \triangleq \frac{1}{d \cdot \mu }\left(\exp\left((1-\frac{r_i^{t-1}}{w_i})\cdot \eta \right) - 1\right) \geq 0$.
\end{proof}

We make another observation that comes from rearranging the price update rule \eqref{line:x_def}.  
\begin{observation}
\label{obs:update rule}
Define $\widehat x_{i}^{t} \triangleq x_i^t+\frac{1}{d \cdot \mu}$. Then, for all $t \in [T]$ and $i \in [n]$, it holds that
    \[\frac{1}{\eta} \cdot \log\left(\frac{\widehat x_{i}^{t}}{\widehat x_{i}^{t-1}}\right) = \frac{r_i^{t-1} -r_i^t}{w_i}.\]
\end{observation}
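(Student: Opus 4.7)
The plan is to simply unwind the definition of $\widehat{x}_i^t$ using the algorithm's price update rule \eqref{line:x_def} and observe that the additive shift by $\frac{1}{d\mu}$ is exactly designed to absorb the $-1$ inside the exponential, leaving a pure exponential expression whose ratio telescopes nicely.

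Concretely, first I would apply the definition to write
\[
\widehat{x}_i^{t-1} \;=\; x_i^{t-1} + \frac{1}{d\mu} \;=\; \frac{1}{d\mu}\exp\!\left(\left(1-\frac{r_i^{t-1}}{w_i}\right)\eta\right),
\]
using \eqref{line:x_def}, and analogously for $\widehat{x}_i^{t}$ using the price at the next time step (which is defined by the same formula, with $r_i^{t-1}$ replaced by $r_i^{t}$; note that $x_i^t$ is well-defined because $r_i^t \in [0,w_i]$ by the inventory feasibility maintained in \cref{line:inventory}, so $\widehat{x}_i^t > 0$ and the logarithm is valid).

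Then I would take the ratio, where the $\frac{1}{d\mu}$ prefactor cancels, yielding
\[
\frac{\widehat{x}_i^{t}}{\widehat{x}_i^{t-1}} \;=\; \exp\!\left(\eta\cdot\frac{r_i^{t-1}-r_i^{t}}{w_i}\right).
\]
Taking natural logarithms and dividing by $\eta$ gives the claimed identity.

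There is no real obstacle here: the whole content of the statement is that the additive shift $+\frac{1}{d\mu}$ in the definition of $\widehat{x}_i^t$ converts the affine-in-$r_i^t$ shift of an $\exp(\cdot)-1$ expression into a pure multiplicative shift of an $\exp(\cdot)$ expression, which is what makes ratios of consecutive $\widehat{x}_i^t$ values tractable. The only thing to be careful about is ensuring the quantities inside the logarithm are strictly positive, which follows from \cref{obs:pos_prices} together with the strict positivity of $\frac{1}{d\mu}$.
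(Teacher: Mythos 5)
Your proof is correct and takes essentially the same route as the paper's: both observe that adding $\frac{1}{d\mu}$ converts \eqref{line:x_def} into a pure exponential in $r_i^{t-1}/w_i$, and then compare consecutive time steps. The only cosmetic difference is that you take the ratio first and then the logarithm, whereas the paper takes logarithms first and then subtracts; the algebra is identical.
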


\begin{proof}
    Manipulating the price update rule \eqref{line:x_def}, we get 
    $\log(\widehat x_{i}^{t-1})= \log(x_i^{t-1}+\frac{1}{d \cdot \mu}) = \left(1-\frac{r_i^{t-1}}{w_i}\right)\eta + \log(1/d\mu)$. Subtracting $\log(\widehat x_{i}^{t-1})$ from $\log(\widehat x_{i}^{t})$ and dividing by $\eta$ yields the claim. 
\end{proof}

Before bounding the competitive ratio, one might worry whether \cref{alg:allocate2} even maintains a feasible integral solution (namely why $r_i^t$ never goes below $0$). We prove that this is indeed the case.

\begin{lemma}
\cref{alg:allocate2} produces a feasible integral solution.
\end{lemma}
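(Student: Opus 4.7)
Feasibility has three components: (i) $r_i^t \in \mathbb{Z}$, (ii) $r_i^t \le w_i$, and (iii) $r_i^t \ge 0$, for every $t$ and $i$. The plan is to verify (i) and (ii) by direct inspection of the update rules, and then prove (iii) by showing that whenever inventory of some item would be driven negative by a sale, the prices $x_i^{t-1}$ are so large that the sale would have been rejected in \cref{step:sell}.

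For (i), I would argue by induction on $t$: $r_i^0 = w_i \in \mathbb{Z}_+$, and each update adds or subtracts $a_{s^t_*, i} \in \mathbb{Z}_+$ (possibly truncated by $\min\{\cdot, w_i\}$, which preserves integrality). For (ii), the sell step only decreases $r_i^t$, and the buy step caps $r_i^t$ at $w_i$ explicitly in \cref{line:inventory}. Both (i) and (ii) are essentially immediate from the algorithm description.

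The real work is (iii). Suppose for contradiction that at some customer arrival $t$ the algorithm sells bundle $s^t_*$ and there exists $i_0$ with $r_{i_0}^{t-1} < a_{s^t_*, i_0}$. By \cref{assm1}, $a_{s^t_*, i_0}/w_{i_0} \le \eps/(8\eta)$, so $r_{i_0}^{t-1}/w_{i_0} < \eps/(8\eta)$, giving
\[
x_{i_0}^{t-1} \;=\; \frac{1}{d\mu}\!\left(\exp\!\left(\left(1-\tfrac{r_{i_0}^{t-1}}{w_{i_0}}\right)\eta\right) - 1\right) \;\ge\; \frac{1}{d\mu}\bigl(e^{\eta - \eps/8} - 1\bigr).
\]
Using $\eta \ge 1 + \log(1 + vd\mu)$ we get $e^\eta \ge e(1+vd\mu)$, and since $\eps \le 1$ we have $e^{-\eps/8} \ge e^{-1/8}$, so $e^{\eta-\eps/8} \ge e^{7/8}(1+vd\mu) \ge 2(1+vd\mu)$. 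Plugging back,
\[
a_{s^t_*,i_0}\, x_{i_0}^{t-1} \;\ge\; 1 \cdot \frac{1}{d\mu}(2vd\mu + 1) \;>\; 2v \;\ge\; v^t_{s^t_*}.
\]
Hence $p^t_{s^t_*} = \sum_i a_{s^t_*,i} x_i^{t-1} \ge a_{s^t_*, i_0} x_{i_0}^{t-1} > v^t_{s^t_*}$, which contradicts the sale condition $v^t_{s^t_*} \ge \max\{1, p^t_{s^t_*}\}$ checked in \cref{step:sell}. The buy step never decreases $r_i^t$, so no buy step can violate $r_i^t \ge 0$.

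The main obstacle is the numerical bound in the last paragraph: we need to translate the ``large inventory'' \cref{assm1} and the definition of $\eta$ into a price lower bound strong enough to dominate $v$. The calculation is routine once the right $e^{-\eps/8}$ slack is identified, but it is the only place where the specific constants in \cref{assm1} (the factor $8\eta/\eps$) and the choice $\eta \geq 1 + \log(1+vd\mu)$ come together.
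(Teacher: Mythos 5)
Your proof is correct and takes essentially the same approach as the paper: the paper shows directly that when a sale with $a_{s^t_*,i}\ge 1$ occurs one has $x_i^{t-1}\le v$, rearranges the price formula to get $r_i^{t-1}\ge w_i/\eta\ge a_{s^t_*,i}$, whereas you argue the contrapositive (if $r_{i_0}^{t-1}<a_{s^t_*,i_0}$ then $x_{i_0}^{t-1}>2v\ge v^t_{s^t_*}$), but it is the same inequality chain run in the opposite direction, resting on the same two ingredients ($\eta\ge 1+\log(1+vd\mu)$ and Assumption~\ref{assm1}).
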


\begin{proof}
    Clearly, $r_i^t$ is integral since all bundles are integral. Furthermore, as already noted, we maintain $r_i^t \leq w_i$ explicitly. The only thing remaining to argue is that $r_i^t \geq 0$. 

    We claim that whenever the algorithm decides to sell a bundle $s^t_*$ with $a_{s^t_*,i} \geq 1$ to a customer, we have that $x_i^{t-1}\leq v \triangleq \max_{t\in \tbuy, s\in S^t}\vt^t_{s}$. To see this note that otherwise since $a_{s^t_*,i}\geq 1$ and by \cref{obs:pos_prices} $x_i^{t-1}\geq 0$ we have, $\max\{1,\sum_{i=1}^{n}a_{s^t_*,i}\cdot x_i^{t-1}\}\geq a_{s^t_*,i} \cdot x_i^{t-1} > v\geq \vt_{s_*^t}$, which means that the algorithm does not sell the bundle to the customer (see Line \ref{step:sell}).

    Rearranging \eqref{line:x_def}, whenever the algorithm decides to sell a bundle $s^t_*$ with $a_{s^t_*,i} \geq 1$:   
\begin{align*}
r_i^{t-1} & \geq w_i \left(1- \frac{\log\left(1+vd\mu\right)}{\eta}\right) \geq w_i \left(\frac{1 + \log(1+vd\mu) - \log\left(1 + vd\mu\right)}{\eta} \right) 
 \geq w_i \cdot \frac{\eps}{8\eta},   
\end{align*}
where the second equality follows since 
$\eta \geq 1 + \log(1+vd\mu)$, and the last inequality follows since $\eps\leq 1$. By \cref{assm1} we have that 
$a_{s^t_*,i} \leq w_i \cdot \frac{\eps}{8\eta}$, and we conclude that $r_i^t = r_i^{t-1}-a_{s^t_*,i} \geq 0$. We remark that the analysis here do not require the full strength of \cref{assm1} that is used later in the proof.
\end{proof}

\paragraph{Analysis via Duality:}
To prove the competitiveness stated in Theorem \ref{thm:main-inside} we present an LP formulation $\mathcal{P}$ for the fractional version of the problem in which supplier valuations are inflated by $(1+\eps)$. We then construct a feasible solution to the dual problem $\mathcal{D}$ whose value is $O(\frac{c+\log \mu}{\eps})$ times the value obtained by the algorithm. Theorem \ref{thm:main-inside} then follows directly by weak duality.

In the LP below, we may think of $\primy^t_{s}$ and $\primz_{s}^t$ respectively as the indicators for whether bundle $s\in S^t$ is allocated to the customer and supplier at time $t$, and as before, $\primr_i^t$ as the number of items of type $i$ in inventory at time $t$. The constraints are straightforward updating the inventory (which is in $[0,w_i]$), and requiring that at most a single bundle is allocated at any time $t$. Note that by making \eqref{cons1} and \eqref{cons2} inequalities (as opposed to equalities) we are allowing a free disposal of items. 

\[(\mathcal{P}): \max\sum_{t \in \tbuy}\sum_{s\in S^t}\vt^t_{s} \cdot \primy^{t}_{s}-(1+\eps) \cdot \sum_{t \in \tsell}\sum_{s\in S^t}\vt^t_{s} \cdot \primz^{t}_{s}\]\vspace{-0.2cm}
\begin{align}
\primr_{i}^{t+1}& \leq \primr_{i}^{t} - \sum_{s\in S^t}a_{s,i} \cdot \primy^{t}_{s} & 	\forall i \in [n], \ t\in \tbuy,   \label{cons1}\\
\primr_{i}^{t+1} & \leq \primr_{i}^{t} +  \sum_{s\in S^t}a_{s, i} \cdot \primz^{t}_{s} & 	\forall i \in [n], \ t\in \tsell, \label{cons2}\\
\sum_{s\in S^t} \primy^{t}_{s}& \leq 1 &  \forall t\in \tbuy, \label{cons3}\\
\sum_{s\in S^t} \primz^{t}_{s}& \leq 1 &  \forall t\in \tsell, \label{cons4}\\
\primr_i^t & \leq w_i & \forall t\in [T], \label{cons5}\\
\primy_{s}^t, \primz_{s}^t, \primr_i^t & \geq 0 &  
	\forall t\in [T], s\in S^t. \nonumber
\end{align}

The dual formulation has variables $x_i^t$ that correspond to Constraints \eqref{cons1} and \eqref{cons2}, variables $\alpha^t$ and $\beta^t$ corresponding to constraints \eqref{cons3} and \eqref{cons4}, and variables $\ell_{i}^{t}$ that correspond to Constraints \eqref{cons5}.

 \[(\mathcal{D}): \min\sum_{t=1}^{T}\sum_{i=1}^{n}w_i \cdot \duall_{i}^{t} +\sum_{t\in \tbuy} \duala^t + \sum_{t\in \tsell} \dualb^t\]\vspace{-0.5cm}
\begin{align}
\sum_{i=1}^{n}a_{s,i}\cdot \dualx_i^t+ \duala^t & \geq \vt^t_{s} & \forall t \in \tbuy, s\in S^t, \label{dual:cons1}\\
\sum_{i=1}^{n}a_{s,i} \cdot \dualx_i^t - \dualb^t & \leq \vt^t_{s} \cdot (1+\eps) & \forall t \in \tsell, s\in S^t, \label{dual:cons2}
\\
\duall_{i}^{t} & \geq \dualx_i^t-\dualx_{i}^{t-1} & \forall i \in [n], \ t\in [T], \label{dual:cons3}\\
\dualx_i^t, \duall_{i}^{t}, \duala^t, \dualb^t	 & \geq 0 & \forall i\in[n], \ t\in [T]. \nonumber
\end{align}

We have suggestively reused the name $x$ for the dual variables: indeed, we will soon use the values $x_i^t$ from \eqref{line:x_def} to set these. 

\paragraph{Constructing the dual solution.} 

We fit the following dual to Algorithm \ref{alg:allocate2}. Recall that $\tbuy'\subseteq \tbuy$ and $\tsell'\subseteq \tsell$ are the time steps in which the decides to sell or buy a bundle $s^t_*$ respectively. We set
\begin{align*}
    \dualx_i^t &= x_i^t, \\
    \duall_i^t &= \max\{0, x_i^t-x_i^{t-1}\}, \\
    \duala^t &= \begin{cases}
        \vt^t_{s^t_{*}} & \text{if } t \in \tbuy' \\
        0 & \text{otherwise}
    \end{cases}, \\
    \dualb^t &= \begin{cases}
        \sum_{i=1}^{n}a_{s^t_{*},i} \cdot x_i^{t-1}-(1+\eps)\vt_{s^t_{*}} & \text{if } t \in \tsell' \\
        0 & \text{otherwise}
        \end{cases}.
\end{align*}

We need to show dual feasibility, and that the cost of the algorithm is related to the dual cost. We start with the first.

\begin{lemma}\label{lem:main2}
The solution $(\dualx, \duall, \duala, \dualb)$ is feasible to $\mathcal{D}$ .
\end{lemma}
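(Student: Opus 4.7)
The dual has four kinds of conditions to check: the customer constraint \eqref{dual:cons1}, the supplier constraint \eqref{dual:cons2}, the potential-like constraint \eqref{dual:cons3}, and non-negativity. The constraint \eqref{dual:cons3} holds definitionally since $\duall_i^t := \max\{0, x_i^t - x_i^{t-1}\}$. Non-negativity of $\dualx_i^t$ follows from \cref{obs:pos_prices}; non-negativity of $\duall_i^t$ is immediate; non-negativity of $\duala^t$ holds because $\vt^t_s \geq 0$; and non-negativity of $\dualb^t$ on $t \in \tsell'$ is exactly the condition under which Algorithm~\ref{alg:allocate2} decides to buy (\cref{step:buy}), rewritten after multiplying by $1+\eps$. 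The remaining work is to verify \eqref{dual:cons1} and \eqref{dual:cons2}, which I would split in each case into the subcase where the algorithm trades and the subcase where it refuses.

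Before each case, I would record a useful monotonicity fact derived from the price rule~\eqref{line:x_def}: at a customer arrival we have $r_i^t \leq r_i^{t-1}$, so $x_i^t \geq x_i^{t-1}$, whereas at a supplier arrival $r_i^t \geq r_i^{t-1}$ (including the free-disposal clamp), so $x_i^t \leq x_i^{t-1}$. In both inactive cases (no trade) we have $r_i^t = r_i^{t-1}$ and thus $x_i^t = x_i^{t-1}$.

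\textbf{Customer constraint \eqref{dual:cons1}.} On $t \in \tbuy \setminus \tbuy'$, every $s \in S^t$ with $\vt^t_s \geq 1$ failed the selling test, so $\vt^t_s < \max\{1, \sum_i a_{s,i} x_i^{t-1}\}$; since $\vt^t_s \geq 1$, this forces $\vt^t_s \leq \sum_i a_{s,i} x_i^{t-1} = \sum_i a_{s,i} x_i^t$, and $\duala^t = 0$ suffices. Bundles with $\vt^t_s = 0$ are trivial by non-negativity. On $t \in \tbuy'$, the choice of $s_*^t$ yields
$\vt^t_{s_*^t} - \max\{1, \sum_i a_{s_*^t,i} x_i^{t-1}\} \geq \vt^t_s - \max\{1, \sum_i a_{s,i} x_i^{t-1}\}$,
and I would split on whether $\sum_i a_{s,i} x_i^{t-1} \geq 1$. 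If yes, the $\max$ on the right strips to $\sum_i a_{s,i} x_i^{t-1}$, and combined with $\max\{1, \sum_i a_{s_*^t,i} x_i^{t-1}\} \geq 0$ and $x_i^t \geq x_i^{t-1}$ we get $\duala^t + \sum_i a_{s,i} x_i^t \geq \vt^t_s$. If no, the $\max$ on the right is $1$; then the inequality gives $\vt^t_{s_*^t} \geq \vt^t_s - 1 + \max\{1, \sum_i a_{s_*^t,i} x_i^{t-1}\} \geq \vt^t_s$, so $\duala^t = \vt^t_{s_*^t}$ alone dominates $\vt^t_s$. This is exactly where the $\max\{1,\cdot\}$ clipping in Algorithm~\ref{alg:allocate2} earns its keep, and I expect this to be the only slightly delicate calculation.

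\textbf{Supplier constraint \eqref{dual:cons2}.} On $t \in \tsell \setminus \tsell'$, the algorithm rejects every bundle, so $\frac{1}{1+\eps}\sum_i a_{s,i} x_i^{t-1} < \vt^t_s$ for all $s \in S^t$; using $x_i^t = x_i^{t-1}$ and $\dualb^t = 0$, this directly implies $\sum_i a_{s,i} \dualx_i^t \leq (1+\eps)\vt^t_s$. On $t \in \tsell'$, substituting $\dualb^t = \sum_i a_{s_*^t,i} x_i^{t-1} - (1+\eps)\vt^t_{s_*^t}$ and using the monotonicity $\sum_i a_{s,i} x_i^t \leq \sum_i a_{s,i} x_i^{t-1}$, the required inequality is equivalent after algebra to
\[
\frac{1}{1+\eps}\sum_i a_{s_*^t,i} x_i^{t-1} - \vt^t_{s_*^t} \geq \frac{1}{1+\eps}\sum_i a_{s,i} x_i^{t-1} - \vt^t_s,
\]
which is precisely the optimality of $s_*^t$ in \cref{step:chooses-buy}. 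Combining these four pieces establishes dual feasibility.
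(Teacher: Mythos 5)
Your proof is correct and follows essentially the same route as the paper's: non-negativity and constraint~\eqref{dual:cons3} are immediate from the definitions, and \eqref{dual:cons1}, \eqref{dual:cons2} follow from the optimality of $s_*^t$ in the algorithm's selection steps together with the monotonicity of $x_i^t$ across customer and supplier arrivals. The only cosmetic difference is that for $t \in \tbuy'$ you split explicitly on whether $\sum_i a_{s,i} x_i^{t-1} \geq 1$, whereas the paper compresses this into a single chain via the inequality $1 - \max\{1,A\} \geq -A$.
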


\begin{proof}
By construction we have $\duala^t\geq 0$, $\duall_i^t\geq 0$, and $\duall_i^t\geq x_i^t-x_i^{t-1}$, and by \cref{obs:pos_prices} also $x^{t-1}_i \geq 0$.

By the behavior of the algorithm (Line \ref{step:buy}), if $t\in \tsell'$, then $ \sum_{i=1}^{n}a_{s^t_{*},i} \cdot x_i^{t-1}-(1+\eps)\vt_{s^t_{*}}\geq 0$ and hence $\dualb^t\geq 0$. Hence, it remains to check constraints \eqref{dual:cons1} and \eqref{dual:cons2}.

Consider first any time $t\in \tbuy \setminus \tbuy'$ in which no bundle is sold. 
In this case, by line \ref{step:sell} of \cref{alg:allocate2}, for all $s\in S^t$ we have $\max\{1,\sum_{i=1}^{n}a_{s,i} \cdot x_i^{t-1}\} > \vt^t_{s}$ (note the strict inequality), as $\vt^t_{s}\geq 1$ for any bundle $s\in S^t$, it means that for every $s\in S^t$, $\sum_{i=1}^{n}a_{s,i} \cdot x_i^{t-1}> \vt^t_{s}$ and therefore setting $\duala^t=0$  and noticing that in this case $x_i^{t}= x_i^{t-1}$ satisfies constraints \eqref{dual:cons1}. Next, consider any time $t\in \tbuy'$ in which a bundle $s^t_{*}=\arg \max_{s\in S^t}\left\{ \vt^t_{s}-\max\{1,\sum_{i=1}^{n}a_{s,i}\cdot x_i^{t-1}\}\right\}$ is allocated to a customer. By the definition of $s^t_*$, we have for any $s\in S^t$
\begin{align*}
 \duala^t & = v^t_{s^t_{*}} \geq v^t_{s} +\max\{1,\sum_{i=1}^{n}a_{s^t_{*},i}\cdot x_i^{t-1}\} - \max\{1,\sum_{i=1}^{n}a_{s,i}\cdot x_i^{t-1}\} \\ & \geq  \vt^t_{s} +1 -\max\{1,\sum_{i=1}^{n}a_{s,i}\cdot x_i^{t-1}\}  \geq 
 \vt^t_{s}-\sum_{i=1}^{n}a_{s,i}\cdot x_i^{t-1}.   
\end{align*}
Thus, we get that for any $s\in S^t$,  $\sum_{i=1}^{n}a_{s,i}\cdot x_i^t +\duala^t \geq  \sum_{i=1}^{n}a_{s,i}\cdot x_i^{t-1} +\duala^t \geq \vt^t_{s}$, where 
the first inequality holds since the values $x_i^t$ only increase at time steps $t\in \tbuy$.

Similarly, let $t \in \tsell \setminus \tsell'$ be a time in which no bundle is allocated to the supplier. \cref{step:buy} of \cref{alg:allocate2} guarantees that in this case for all $s\in S^t$ we have
$\sum_{i=1}^{n}a_{s,i}\cdot x_i^{t-1} < (1+\eps) \cdot \vt_{s}$, and therefore setting $\dualb^t=0$ satisfies constraints \eqref{dual:cons2}. Finally, at times $t \in \tsell'$, in which a bundle $s^t_{*}=\arg\max_{s\in S^t}\left\{ \frac{1}{1+\eps}\cdot \sum_{i=1}^{n}a_{s,i} \cdot x_i^{t-1}-  \vt^t_{s}\right\}$ is bought from the supplier, we have for all $s \in S^t$,
\[\sum_{i=1}^{n}a_{s,i}\cdot x_i^t -\dualb^t \leq \sum_{i=1}^{n}a_{s,i}\cdot x_i^{t-1}- \max_{s\in S^t}\left\{\sum_{i=1}^{n}a_{s,i}\cdot x_i^{t-1}- (1+\eps)\cdot \vt^t_{s}\right\} \leq (1+\eps)\cdot \vt^t_{s}.\]
The first inequality holds since the values $x_i^t$ only decrease at time steps $t\in \tsell$.
Hence, we satisfy the dual constraints \eqref{dual:cons1} and \eqref{dual:cons2}.
\end{proof}

The remaining challenge is to relate the cost of the algorithm to the cost of the dual solution. This is proved in \cref{lem:main3}. Theorem \ref{thm:main-inside} follows directly by combining \cref{lem:main2} and \cref{lem:main3} along with weak duality. 
\begin{lemma}\label{lem:main3}
Let $P^t \triangleq \sum_{i=1}^{n}a_{s^t_{*},i} \cdot  x_{i}^{t-1}$ for every time step $t\in \tsell' \cup \tbuy'$. Then, the value of the dual solution is bounded as
\begin{align*}
    \sum_{\substack{t \in [T] \\ i \in [n]}} w_i \cdot \duall_{i}^{t} +\sum_{t\in \tbuy} \duala^t + \sum_{t\in \tsell} \dualb^t &= O\left(\frac{\eta}{\eps}\right) \cdot \left[  \sum_{t\in \tbuy'}\left(\left(1-\frac{\eps}{4}\right)P^t +\frac{\eps \cdot \vt^t_{s_{*}^t}}{\eta}+ \frac{1}{\mu}\right) - \sum_{t\in \tsell'}\frac{P^t}{1+\eps}\right]\\
    & =O\left(\frac{\eta}{\eps}\right) \cdot \left[\sum_{t\in \tbuy'} \vt^t_{s_{*}^t} - \sum_{t\in \tsell'}\vt^t_{s_{*}^t}\right].
\end{align*}
\end{lemma}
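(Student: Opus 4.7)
The plan is to bound each of the three dual components and combine them via a telescoping inequality on the prices; the main work is to relate the supplier-side quantity $\sum_{t\in\tsell'}P^t$ (which appears with the ``wrong'' sign in $D$) to customer-side quantities with a controlled $\kappa$-factor.

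\textbf{Dual components.} By inspection $\sum_{t\in\tbuy}\duala^t=\sum_{t\in\tbuy'} v^t_{s^t_*}$ and $\sum_{t\in\tsell}\dualb^t=\sum_{t\in\tsell'}(P^t-(1+\eps)v^t_{s^t_*})$. The dual variable $\duall_i^t$ is positive only when $x_i^t>x_i^{t-1}$, which by the price-update rule happens only at customer-service times; thus $\sum_{t,i} w_i\duall_i^t = \sum_{t\in\tbuy'}\sum_i w_i(x_i^t - x_i^{t-1})$.

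\textbf{Controlling price variations.} By \cref{obs:update rule} and \cref{assm1}, the quantity $y:=\eta a_{s^t_*,i}/w_i$ lies in $[0,\eps/8]$. Applying $e^y-1 \leq y(1+\eps/8)$ at customer times and $1-e^{-y} \geq y(1-\eps/16)$ at supplier times, then summing over $i$ with weights $w_i$ (and using $\sum_i a_{s^t_*,i}\leq d$ for customer bundles), I obtain
\[\sum_i w_i(x_i^t-x_i^{t-1}) \leq \eta(1+\eps/8)\left(P^t + \tfrac{1}{\mu}\right) \text{ for } t\in\tbuy', \qquad \sum_i w_i(x_i^{t-1}-x_i^t) \geq \eta(1-\eps/16) P^t \text{ for } t\in\tsell'.\]
Since $\sum_t(x_i^t-x_i^{t-1})=x_i^T-x_i^0=x_i^T\geq 0$, the total positive variation dominates the total negative variation. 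Combining the two bounds above yields the key inequality
\[\sum_{t\in\tsell'} P^t \;\leq\; \kappa \sum_{t\in\tbuy'}\left(P^t + \tfrac{1}{\mu}\right), \quad \text{where } \kappa := \tfrac{1+\eps/8}{1-\eps/16}.\]
A direct algebraic check shows $\kappa/(1+\eps) - (1-\eps/4) \leq -\eps/8$ (and $\kappa/(1+\eps)-1 \leq -3\eps/8$) for all $\eps\in(0,1]$, providing the crucial slack.

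\textbf{Combining.} With $\lambda:=C\eta/\eps$ for a sufficiently large absolute constant $C$, I bound $D-\lambda S$ by first substituting the upper bound on $\sum_{t,i} w_i\duall_i^t$ and then applying the $\kappa$-inequality to the positive supplier term $(1+\lambda/(1+\eps))\sum_{t\in\tsell'} P^t$. The resulting expression is a linear combination of $\sum_{t\in\tbuy'} P^t$, $|\tbuy'|/\mu$, $\sum_{t\in\tbuy'} v^t_{s^t_*}$ and $\sum_{t\in\tsell'} v^t_{s^t_*}$ whose coefficients are all $\leq 0$: the first two take the form $\eta(1+\eps/8)+\kappa+\lambda(\kappa/(1+\eps)-(1-\eps/4))$ (respectively with $-1$ in place of $-(1-\eps/4)$), which is driven negative by the $-\eps/8$ slack once $C$ is large; the third is $1-C$; the fourth is $-(1+\eps)$. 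This proves $D \leq O(\eta/\eps)\cdot S$. For the second equality, the selection rules $v^t_{s^t_*}\geq\max\{1,P^t\}$ at customers and $v^t_{s^t_*}\leq P^t/(1+\eps)$ at suppliers imply $(1-\eps/4)P^t + (\eps/\eta) v^t_{s^t_*} + 1/\mu \leq 3 v^t_{s^t_*}$ at each customer time and $-P^t/(1+\eps) \leq -v^t_{s^t_*}$ at each supplier time, showing that the two bracketed expressions agree up to a constant factor.

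The principal obstacle is the sign mismatch on the supplier prices: $\sum_{t\in\tsell'} P^t$ enters $D$ with positive coefficient (via $\dualb$) but $S$ with negative coefficient. The telescoping argument is what resolves this tension, converting supplier prices into customer-side quantities modulo the $|\tbuy'|/\mu$ slack; the gap between $\kappa/(1+\eps)$ and $1-\eps/4$ created by the $(1+\eps)$ resource-augmentation factor is exactly what makes the conversion succeed within an $O(\eta/\eps)$ overhead.
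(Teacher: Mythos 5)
Your proposal is essentially the same approach as the paper's, with a cosmetic difference in bookkeeping: where you directly bound the increments $\sum_i w_i(x_i^t - x_i^{t-1})$ via the elementary estimates $e^y-1 \leq y(1+\eps/8)$ and $1-e^{-y}\geq y(1-\eps/16)$ and then telescope the raw price sum $\sum_i w_i x_i^t$, the paper packages the same information as a weighted KL-divergence (its Claims~3.6 and~3.7), telescoping $\sum_t \wKL{\widehat x^t}{\widehat x^{t-1}}\geq 0$; the resulting key inequality $\sum_{\tsell'}P^t\leq\kappa\sum_{\tbuy'}(P^t+1/\mu)$ with $\kappa=e^{\eps/4}$ in the paper versus your $\kappa=(1+\eps/8)/(1-\eps/16)$ is the same up to the constants absorbed in the slack. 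Two small remarks: (i) at supplier times you should handle the free-disposal boundary ($x_i^t=0$ when $r_i^{t-1}+a_{s^t_*,i}>w_i$) explicitly, as the paper does via a separate term in Claim~3.6 (though your bound $w_i x_i^{t-1}\geq\eta(1-\eps/16)a_{s^t_*,i}x_i^{t-1}$ does survive under Assumption~3.2); and (ii) in the final ``constant factor'' step, per-term bounds give bracket~1 $\leq 3\sum_{\tbuy'}v^t - \sum_{\tsell'}v^t$, which does not by itself yield $\leq O(1)\cdot(\sum_{\tbuy'}v^t - \sum_{\tsell'}v^t)$ when the two sums are close --- this is the same slightly loose step the paper makes, so it is not an error introduced by you.
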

Proving \cref{lem:main3} requires several  intermediate claims. 

\begin{claim}
\label{claim:kl_bounds}
Recall that $\widehat x_{i}^{t} \triangleq x_i^t+\frac{1}{d \cdot \mu}$.
For time steps in which the trader allocates a bundle to a customer or a supplier we have the following.
    \begin{align}
\forall t\in \tbuy' \qquad & \frac{1}{\eta}\sum_{i=1}^{n}w_i \cdot \widehat x_{i}^{t} \log\left(\frac{\widehat x_{i}^{t}}{\widehat x_{i}^{t-1}}\right)  \leq e^{\eps/8} \cdot\left(P^t + \frac{1}{\mu}\right),   \label{ineq-sell}\\
\forall t\in \tsell' \qquad &  \frac{1}{\eta}\left[\sum_{i \mid x_i^t>0} w_i \cdot \widehat x_{i}^{t} \log\left(\frac{\widehat x_{i}^{t}}{\widehat x_{i}^{t-1}}\right) - \sum_{i \mid x_i^t=0}w_i \cdot x_i^{t-1}\right]   \leq  - e^{\eps/8}\cdot P^t.  \label{ineq-buy}
\end{align}
\end{claim}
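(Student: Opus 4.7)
The plan is to prove both inequalities by evaluating the logarithmic terms exactly via Observation~\ref{obs:update rule} and then bounding the resulting exponential factors with Assumption~\ref{assm1}, which provides $\eta\, a_{s^t_*,i}/w_i \leq \eps/8$. Observation~\ref{obs:update rule} converts $\log(\widehat x_i^t/\widehat x_i^{t-1})$ into a linear function of the change in inventory, after which each summand collapses into a clean expression in $a_{s^t_*,i}$ and $\widehat x_i^t$.

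For the first inequality, at a customer step $t\in\tbuy'$ we have $r_i^{t-1}-r_i^t = a_{s^t_*,i}$, so Observation~\ref{obs:update rule} immediately gives $\frac{1}{\eta} w_i \widehat x_i^t \log(\widehat x_i^t/\widehat x_i^{t-1}) = a_{s^t_*,i}\widehat x_i^t$. From the update rule and Assumption~\ref{assm1} we also have $\widehat x_i^t = \widehat x_i^{t-1}\exp(\eta a_{s^t_*,i}/w_i) \leq e^{\eps/8}\widehat x_i^{t-1}$. Summing over $i$, expanding $\widehat x_i^{t-1} = x_i^{t-1}+1/(d\mu)$, and using $\sum_i a_{s^t_*,i}\leq d$ (by definition of $d$, which bounds the customer-bundle size) yields the stated bound $e^{\eps/8}(P^t + 1/\mu)$.

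For the second inequality, at a supplier step $t\in\tsell'$ I would split the coordinates according to whether the inventory cap binds. For uncapped $i$ (i.e., $x_i^t>0$), the inventory increases by exactly $a_{s^t_*,i}$, so Observation~\ref{obs:update rule} produces the contribution $-a_{s^t_*,i}\widehat x_i^t$; the identity $\widehat x_i^t = \widehat x_i^{t-1} e^{-\eta a_{s^t_*,i}/w_i}$ then ties this quantity back to $x_i^{t-1}$ up to factors of $e^{\pm \eps/8}$ and the additive $1/(d\mu)$ padding. For capped $i$ (i.e., $x_i^t=0$) the statement replaces the log term by the linear expression $-(w_i/\eta)x_i^{t-1}$, and here Assumption~\ref{assm1} alone suffices:
\[
-\frac{w_i}{\eta}\,x_i^{t-1} \;\leq\; -\frac{8}{\eps}\,a_{s^t_*,i}\,x_i^{t-1} \;\leq\; -e^{\eps/8}\,a_{s^t_*,i}\,x_i^{t-1},
\]
where the last step uses $8/\eps \geq e^{\eps/8}$ for $\eps \leq 1$.

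I expect the main obstacle to be the uncapped sub-case of the second inequality: the crude bound $\widehat x_i^t\geq e^{-\eps/8}\widehat x_i^{t-1}$ that falls directly out of the update rule yields $-a_{s^t_*,i}\widehat x_i^t\leq -e^{-\eps/8}a_{s^t_*,i}x_i^{t-1}$, which has the wrong sign in the exponent compared to the target. To close the gap I plan to borrow slack from two sources at the level of the aggregate sum: the additive padding $\widehat x_i^{t-1}-x_i^{t-1}=1/(d\mu)$ that inflates each $\widehat x_i^t$ by a term proportional to $1/(d\mu)$, and the considerable looseness of the capped sub-bound (which dominates its target by a factor of roughly $8/(\eps\, e^{\eps/8})$). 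Carefully pooling these slacks against coordinates with large $x_i^{t-1}$ is the delicate step that drives the proof.
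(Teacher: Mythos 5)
Your proof of the first inequality is correct and takes the same route as the paper: rewrite $\frac{1}{\eta}w_i\widehat x_i^t\log(\widehat x_i^t/\widehat x_i^{t-1}) = a_{s^t_*,i}\widehat x_i^t$ via Observation~\ref{obs:update rule}, then push $\widehat x_i^t$ back to $\widehat x_i^{t-1}$ at the cost of a factor $e^{\eps/8}$ (Assumption~\ref{assm1}), expand $\widehat x_i^{t-1}$, and use $\sum_i a_{s^t_*,i}\leq d$.

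For the second inequality you have correctly identified that the natural argument yields $-e^{-\eps/8}P^t$ rather than the displayed target $-e^{\eps/8}P^t$, and the ``obstacle'' you describe is real. But it is an obstacle you should not be trying to overcome: the sign in the exponent in \eqref{ineq-buy} is a typo. The paper's own proof stops at $-e^{-\eps/8}P^t$, and the only downstream consumer of this claim, \cref{lemmaKL} (see inequality \eqref{ineq:yz} there and the subsequent division by $e^{\eps/8}$), plugs in exactly $-e^{-\eps/8}\sum_{t\in\tsell'}P^t$. So the correct statement is
\[
\frac{1}{\eta}\left[\sum_{i \mid x_i^t>0} w_i \cdot \widehat x_{i}^{t} \log\left(\frac{\widehat x_{i}^{t}}{\widehat x_{i}^{t-1}}\right) - \sum_{i \mid x_i^t=0}w_i \cdot x_i^{t-1}\right]   \leq  - e^{-\eps/8}\cdot P^t,
\]
and your argument essentially already proves this: on the uncapped coordinates you get $-a_{s^t_*,i}\widehat x_i^t \le -e^{-\eps/8}a_{s^t_*,i}\widehat x_i^{t-1}\le -e^{-\eps/8}a_{s^t_*,i}x_i^{t-1}$, and on the capped coordinates Assumption~\ref{assm1} and $\eps\le 1$ give $-\frac{w_i}{\eta}x_i^{t-1}\le -a_{s^t_*,i}x_i^{t-1}\le -e^{-\eps/8}a_{s^t_*,i}x_i^{t-1}$, which sums to $-e^{-\eps/8}P^t$. (Your sharper bound $-\frac{8}{\eps}a_{s^t_*,i}x_i^{t-1}$ on the capped coordinates is correct but not needed.)

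The slack-pooling plan you outline to reach $-e^{\eps/8}P^t$ would not go through in general. If no coordinate is capped, the only slack available is the additive padding $\tfrac{1}{d\mu}$ inside $\widehat x_i^{t-1}$, and the deficit you need to cover is roughly $(e^{\eps/8}-e^{-\eps/8})\sum_i a_{s^t_*,i}x_i^{t-1}\approx \tfrac{\eps}{4}P^t$, which scales with $x_i^{t-1}$ while the padding does not. Since $x_i^{t-1}$ can be on the order of $v$, the padding cannot cover this gap uniformly. So abandon that effort and report the bound with $e^{-\eps/8}$; it is both what the argument produces and what the rest of the paper uses.
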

\begin{proof}
Consider a time step $t\in \tbuy'$. Then,
\begin{align}
\frac{1}{\eta}\sum_{i=1}^{n}w_i \cdot \widehat x_{i}^{t} \log\left(\frac{\widehat x_{i}^{t}}{\widehat x_{i}^{t-1}}\right) 
& = \sum_{i=1}^{n} a_{s^t_{*},i} \cdot \widehat x_{i}^{t}  =  \sum_{i=1}^{n}a_{s^t_{*},i} \cdot  \widehat x_{i}^{t-1}\cdot \exp\left(\frac{a_{s^t_{*},i}}{w_i}\cdot \eta \right) \label{eq11}\\
& \leq e^{\eps/8} \cdot \sum_{i=1}^{n} a_{s^t_{*},i} \cdot \widehat x_{i}^{t-1} = e^{\eps/8} \cdot \sum_{i=1}^{n}a_{s^t_{*},i} \cdot \left(x_{i}^{t-1} + \frac{1}{d \cdot \mu}\right) \label{ineq12}\\
& \leq  e^{\eps/8} \cdot\left(P^t + \frac{1}{\mu}\right). \label{eq13}
\end{align}
Step \eqref{eq11} follows by using \cref{obs:update rule} twice, and the fact that $r_i^{t-1} -r_i^t = a_{s^t_*,i}$. 
Inequality \eqref{ineq12} follows by \cref{assm1}.
Inequality \eqref{eq13} follows since $d\geq \sum_{i=1}^{n}a_{s^t_{*},i}$ 
Next, at a time  step $t\in \tsell'$  we have,

\begin{align}
&\frac{1}{\eta}\sum_{i \mid x_i^t>0} w_i \cdot \widehat x_{i}^{t} \log\left(\frac{\widehat x_{i}^{t}}{\widehat x_{i}^{t-1}}\right) -\frac{1}{\eta} \sum_{i \mid x_i^t=0}w_i \cdot x_i^{t-1} \leq \frac{1}{\eta}\sum_{i \mid x_i^t>0} w_i \cdot \widehat x_{i}^{t} \log\left(\frac{\widehat x_{i}^{t}}{\widehat x_{i}^{t-1}}\right) -\sum_{i \mid x_i^t=0}a_{s^t_{*},i}\cdot x_i^{t-1} \label{eq52}\\
& = -\sum_{i \mid x_i^t>0} a_{s^t_{*},i}\cdot  \widehat x_{i}^{t}  -\sum_{i \mid x_i^t=0}a_{s^t_{*},i} \cdot x_i^{t-1}  
 = -\sum_{i \mid x_i^t>0}a_{s^t_{*},i} \cdot \widehat x_{i}^{t-1}\cdot \exp\left(-\frac{a_{s^t_{*},i}}{w_i}\cdot \eta\right) -\sum_{i \mid x_i^t=0}a_{s^t_{*},i}\cdot x_i^{t-1} \label{eq21} \\
& \leq -e^{-\eps/8}\sum_{i \mid x_i^t>0} a_{s^t_{*},i}\cdot \widehat x_{i}^{t-1} -\sum_{i \mid x_i^t=0}a_{s^t_{*},i}\cdot x_i^{t-1}  \label{ineq22} \leq - e^{-\eps/8}\sum_{i=1}^{n} a_{s^t_{*},i}\cdot x_{i}^{t-1}  = - e^{-\eps/8}\cdot P^t. 
\end{align}
Inequality \eqref{eq52} follows by \cref{assm1} and the fact that $\epsilon \leq 1$. 
Equality \eqref{eq21} follow by using \cref{obs:update rule} twice, and by the fact that if $x_i^t>0$ then $r_i^t<w_i$, which means $r_i^{t-1} -r_i^t = -a_{s^t_*,i}$.
The first inequality of line \eqref{ineq22} follows from \cref{assm1}. The final inequality follows since $\widehat x_{i}^{t-1}\geq x_{i}^{t-1}$.
\end{proof}

\begin{claim}
\label{lemmaKL}
The following inequality holds: $\sum_{t \in \tbuy'} \left(P^t + \frac{1}{\mu}\right)  - e^{-\eps/4} \sum_{t \in \tsell'} P^t
\geq 0$.
\end{claim}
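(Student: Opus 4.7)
The plan is to sum the two inequalities of Claim~\ref{claim:kl_bounds} over all $t \in \tbuy' \cup \tsell'$ and bound the resulting left-hand side from below by a telescoping expression in the $\widehat x_i^t$.

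The key pointwise observation I would prove is that for each $t \in \tbuy' \cup \tsell'$, the left-hand side of the relevant inequality in Claim~\ref{claim:kl_bounds} is at least $\frac{1}{\eta}\sum_i w_i(\widehat x_i^t - \widehat x_i^{t-1})$. Non-negativity of the weighted KL divergence~\eqref{KL-inequality}, applied term by term, gives $\widehat x_i^t \log(\widehat x_i^t/\widehat x_i^{t-1}) \geq \widehat x_i^t - \widehat x_i^{t-1}$, which handles all terms in~\eqref{ineq-sell} and the $x_i^t > 0$ terms in~\eqref{ineq-buy}. For the remaining indices in~\eqref{ineq-buy}, i.e.\ the free-disposal case $x_i^t = 0$, I use the identity $\widehat x_i^t = 1/(d\mu)$ to rewrite
\begin{equation*}
-w_i x_i^{t-1} \;=\; w_i\!\left(\tfrac{1}{d\mu} - \widehat x_i^{t-1}\right) \;=\; w_i(\widehat x_i^t - \widehat x_i^{t-1}),
\end{equation*}
giving the same lower bound uniformly across the two supplier cases.

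Combining this bound with Claim~\ref{claim:kl_bounds} yields, for $t\in\tbuy'$, $\frac{1}{\eta}\sum_i w_i(\widehat x_i^t - \widehat x_i^{t-1}) \leq e^{\eps/8}(P^t + 1/\mu)$, and for $t\in\tsell'$, $\frac{1}{\eta}\sum_i w_i(\widehat x_i^t - \widehat x_i^{t-1}) \leq -e^{-\eps/8}P^t$. At times $t \notin \tbuy'\cup\tsell'$ the inventory is unchanged, so each $\widehat x_i^t = \widehat x_i^{t-1}$ and both sides vanish. Summing over all $t\in[T]$, the left-hand side telescopes to $\frac{1}{\eta}\sum_i w_i(\widehat x_i^T - \widehat x_i^0)$, which is non-negative since $\widehat x_i^T \geq 1/(d\mu) = \widehat x_i^0$ (the initial inventory is full so $x_i^0=0$, and $x_i^T \geq 0$ by \cref{obs:pos_prices}). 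Therefore
\begin{equation*}
0 \;\leq\; e^{\eps/8}\sum_{t\in\tbuy'}\!\left(P^t + \tfrac{1}{\mu}\right) - e^{-\eps/8}\sum_{t\in\tsell'} P^t,
\end{equation*}
and dividing through by $e^{\eps/8}$ gives exactly the claimed inequality.

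The only delicate step is matching the asymmetric free-disposal correction $-\sum_{i:x_i^t=0} w_i x_i^{t-1}$ in~\eqref{ineq-buy} with the clean telescoping form $w_i(\widehat x_i^t - \widehat x_i^{t-1})$. Once the algebraic identity above is observed, the supplier case collapses into a single uniform bound and the rest of the argument is a one-line telescoping plus an application of Claim~\ref{claim:kl_bounds}.
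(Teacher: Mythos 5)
Your proof is correct and follows essentially the same route as the paper: both apply the term-by-term KL inequality $\widehat x_i^t \log(\widehat x_i^t/\widehat x_i^{t-1}) \geq \widehat x_i^t - \widehat x_i^{t-1}$, telescope over time, and observe that the resulting $\frac{1}{\eta}\sum_i w_i(\widehat x_i^T - \widehat x_i^0)$ is nonnegative. The only cosmetic difference is that you absorb the free-disposal indices into the telescoping form via the identity $-w_i x_i^{t-1} = w_i(\widehat x_i^t - \widehat x_i^{t-1})$, whereas the paper adds a term equal to zero (namely $-x_i^{t-1} - x_i^t + x_i^{t-1}$ when $x_i^t=0$) to achieve the same cancellation; the two bookkeeping choices are equivalent.
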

\begin{proof}
By the non-negativity of the KL divergence, we have that for all $t \in \tbuy' \cup \tsell'$,
\begin{align*}
 0 \leq \frac{1}{\eta}\sum_{i \mid x_i^t>0}w_i \cdot \left[\widehat x_i^t\log\left(\frac{\widehat x_{i}^{t}}{\widehat x_{i}^{t-1}}\right) -\widehat x_i^t + \widehat x_i^{t-1}\right] = \frac{1}{\eta}\sum_{i \mid x_i^t>0}w_{i}\cdot \left[\widehat x_{i}^{t}\log\left(\frac{\widehat x_{i}^{t}}{\widehat x_{i}^{t-1}}\right) - x_i^t +  x_i^{t-1}\right].
\end{align*}
Summing up the inequalities for all time steps $t\in \tbuy'\cup \tsell'$, we get 
\begin{align}
	0  \leq & \frac{1}{\eta} \sum_{t \in \tbuy'\cup \tsell'} \sum_{i\mid x_i^t>0} w_i \cdot \left[\widehat x_{i}^{t} \cdot\log\left(\frac{\widehat x_{i}^{t}}{\widehat x_{i}^{t-1}}\right) - x_i^t + x_i^{t-1}\right] + \frac{1}{\eta}\sum_{t\in \tsell'}\sum_{i \mid x_i^t=0}w_i \cdot \left[-x_{i}^{t-1} - x_i^t +  x_i^{t-1}\right] \label{ineq:telescope-0} \\
	=  & \frac{1}{\eta} \sum_{t \in \tbuy'} \sum_{i=1}^{n} w_i \cdot \widehat x_{i}^{t} \cdot \log\left(\frac{\widehat x_{i}^{t}}{\widehat x_{i}^{t-1}}\right) + \frac{1}{\eta} \sum_{t \in \tsell'} \left[\sum_{i \mid x_i^t>0}  w_i\cdot \widehat x_i^t\log\left(\frac{\widehat x_{i}^{t}}{\widehat x_{i}^{t-1}}\right)- \sum_{i \mid x_i^t=0}x_i^{t-1}\right] \nonumber \\
          & - \frac{1}{\eta} \sum_{i=1}^n w_i\cdot (x^T_i - x^0_i)
 \label{ineq:telescope} \\
 \leq & e^{\eps/8} \cdot \sum_{t \in \tbuy'} \left(P^t + \frac{1}{\mu}\right) - e^{-\eps/8} \cdot \sum_{t \in \tsell'} P^t. \label{ineq:yz}
\end{align}
Inequality \eqref{ineq:telescope-0} follows by adding a term that equals to 0.
Equality \eqref{ineq:telescope} is a telescoping sum.
Finally, inequality \eqref{ineq:yz} follows by plugging inequalities \eqref{ineq-sell} and \eqref{ineq-buy} from \cref{claim:kl_bounds}, and since $x_i^T-x_i^0=x_i^T\geq 0$. Dividing by $e^{\eps/8}$ concludes the proof of the claim.
\end{proof}

We are finally ready to prove Lemma \ref{lem:main3}.
\begin{proof}[Proof of \cref{lem:main3}]

First, for all time steps $t\in \tbuy'$, we have \begin{align}\sum_{i=1}^n w_i \duall_i^t =  \sum_{i=1}^{n}w_i  (x_{i}^{t}-x_{i}^{t-1})  =  \sum_{i=1}^{n}w_i (\widehat x_{i}^{t}- \widehat x_{i}^{t-1}) \leq  \sum_{i=1}^{n}w_i \cdot \widehat x_{i}^{t} \log\left(\frac{\widehat x_{i}^{t}}{\widehat x_{i}^{t-1}}\right) \leq 2\eta \left(P^t + \frac{1}{\mu} \right). \label{line:movement}\end{align} 
The penultimate inequality follows because for any $a\geq b>0$, we have $(a-b)\leq a\log(\nicefrac{a}{b})$. The last inequality is due to \cref{claim:kl_bounds} and since $\eps\leq 1$.

Using this together with \cref{lemmaKL}, we bound the value of the dual solution (except for the value of $\duala^t$) as follows:
\begin{align}
&\sum_{\substack{t\in [T] \\ i \in [n]}} w_i\cdot \duall_{i}^{t} + \sum_{t\in \tsell} \dualb^t \leq  \sum_{t\in \tbuy'}\left[2\eta \cdot \left(P^t + \frac{1}{\mu}\right)\right] + \sum_{t\in \tsell'}\left(P^t- (1+\eps)\vt_{s_{*}^t}\right) \label{final-ineq2} \\
  & \leq \sum_{t\in \tbuy'}\left[2\eta \cdot \left(P^t + \frac{1}{\mu}\right)\right] + \sum_{t\in \tsell'} P^t + \frac{30\eta}{\eps}\cdot \left[\sum_{t \in \tbuy'} \left(P^t + \frac{1}{\mu}\right) - e^{-\eps/4}\cdot \sum_{t \in \tsell'} P^t\right] \label{final-ineq4}\\
  & = \sum_{t\in \tbuy'}\left[\left(2\eta+\frac{30\eta}{\eps}\right) \cdot \left(P^t + \frac{1}{\mu}\right)\right] - \sum_{t\in \tsell'}\left(\frac{30\eta}{\eps}\cdot(1+\eps)e^{-\eps/4} - (1+\eps)\right)\frac{P^t}{1+\eps} \nonumber \\
  &\leq \left(2\eta +\frac{30\eta}{\eps}\right) \cdot \sum_{t\in \tbuy'}\left(P^t + \frac{1}{\mu}\right) - \sum_{t\in \tsell'}\left(\frac{30\eta}{\eps} + 15\eta -2 \right)\frac{P^t}{1+\eps} \label{final-ineq5} \\
  &\leq \left(2\eta +\frac{30\eta}{\eps}\right) \cdot \sum_{t\in \tbuy'}\left(P^t +\frac{1}{\mu}\right) - \left(13\eta+\frac{30\eta}{\eps}\right) \cdot \sum_{t\in \tsell'}\frac{P^t}{1+\eps}  \label{final-ineq6} \\ 
  &\leq \left(13\eta +\frac{30\eta}{\eps}\right) \cdot \left(1-\frac{\eps}{4}\right) \cdot \sum_{t\in \tbuy'}\left(P^t +\frac{1}{\mu}\right) - \left(13\eta+ \frac{30\eta}{\eps}\right) \cdot \sum_{t\in \tsell'}\frac{P^t}{1+\eps}.  \label{final-ineq65}  
\end{align}

Step \eqref{final-ineq2} follows from \eqref{line:movement} and the fact that 
  $\duall^t$, $\dualb^t$ are all $0$ when $t \not\in \tbuy' \cup \tsell'$.
Step \eqref{final-ineq4} follows from adding $\frac{30\eta}{\eps}$ times the inequality of 
\cref{lemmaKL}.
Step \eqref{final-ineq5} holds since $\eps\leq 1$ and  $(1+\eps)e^{-\eps/4}\geq (1+\eps)(1-\eps/4)\geq (1+\eps/2)$, and \eqref{final-ineq6} since $\eta\geq 1$.
Step \eqref{final-ineq65} follows since 
\[2\eta +\frac{30\eta}{\eps}\leq \frac{30\eta}{\eps} + 13\eta - \frac{30}{4}\eta - \frac{13}{4}\eta \leq \frac{30\eta}{\eps} + 13\eta - \frac{30}{4}\eta - \frac{13\eps}{4}\eta=  \left(13\eta +\frac{30\eta}{\eps}\right) \cdot \left(1-\frac{\eps}{4}\right). \]

Finally, by the construction of the dual solution, we have
$\sum_{t\in \tbuy} \duala^t =  \sum_{t\in \tbuy'} \vt^t_{s_{*}^t}$.
Adding this to the final inequality, we get:
\begin{align*}
&\sum_{\substack{t\in [T] \\ i \in [n]}} w_i\cdot \duall_{i}^{t} + \sum_{t\in \tsell} \dualb^t + \sum_{t\in \tbuy} \duala^t \nonumber \\
&\leq \sum_{t\in \tbuy'} \vt^t_{s_{*}^t}+ \left(13\eta +\frac{30\eta}{\eps}\right) \cdot \left[\left(1-\frac{\eps}{4}\right) \cdot \sum_{t\in \tbuy'}\left(P^t +\frac{1}{\mu}\right) -  \sum_{t\in \tsell'}\frac{P^t}{1+\eps} \right] \nonumber \\   
& = O\left(\frac{\eta}{\eps}\right) \cdot \left[  \sum_{t\in \tbuy'}\left(\left(1-\frac{\eps}{4}\right)P^t +\frac{\eps \cdot \vt^t_{s_{*}^t}}{\eta}+ \frac{1}{\mu}\right) - \sum_{t\in \tsell'}\frac{P^t}{1+\eps}\right] = O\left(\frac{\eta}{\eps}\right) \cdot \left[\sum_{t\in \tbuy'} \vt^t_{s_{*}^t} - \sum_{t\in \tsell'}\vt^t_{s_{*}^t}\right]. \label{final-ineq7}
\end{align*}

The last inequality follows firstly since as $\eta\geq 1, \mu \geq 1$,  $\eps\leq 1$, and $\vt^t_{s^t_*} \geq 1$ for $t \in \tbuy$. For every $t\in \tbuy'$, we have $\eps \cdot \vt^t_{s_{*}^t} / \eta + 1/\mu \leq 2 \vt^t_{s_{*}^t}$.
More crucially, by the 
properties of the algorithm at each time step $t\in \tbuy'$ in which the algorithm sells a bundle to a customer, we have $P^t=\sum_{i=1}^{n}a_{s^t_*,i}\cdot x_i^{t-1} \leq \max\{1,\sum_{i=1}^{n}a_{s^t_*,i}\cdot x_i^{t-1}\} \leq \vt^t_{s^t_*}$ (see \cref{step:sell}), and at each time step $t\in \tsell'$ in which the algorithm buys a bundle from a supplier, we have $P^t=\sum_{i=1}^{n}a_{s^t_*,i}\cdot x_i^{t-1} \geq (1+\eps)\cdot \vt^t_{s^t_*}$ (see \cref{step:buy}).
\end{proof}

\subsection{The Unknown Valuation Setting}\label{sec:truthful}

In this section we prove the second part of Theorem \ref{thm:main-formal} by designing an incentive compatible algorithm.
Our incentive compatible algorithm is based on \cref{alg:allocate2} and its analysis with the following changes. We initially sample a random threshold $\rho\geq 0$ (in a way that is described formally later). Then,
\begin{itemize}
    \item When a customer arrives, the algorithm sets a price for each bundle $p^t_s \triangleq  \rho+\max\{1,\sum_{i=1}^{n}a_{s,i}\cdot x_i^{t-1}\}$ 
    for every bundle $s \in S^t$. The customer buys the bundle $s_*^t$ maximizing her utility $\vt^t_{s}-p^t_s$ if this utility is nonnegative. Additionally, \emph{regardless of whether any bundle is allocated in round $t$}, if $\vt^t_{s_{*}^t}-\max\{1,\sum_{i=1}^{n}a_{s_{*}^t,i}\cdot x_i^{t-1}\}$ is non-negative, the algorithm subtracts the bundle contents from the inventory and updates the values $x_i^t$ accordingly as done by our algorithm for the known value setting, \cref{alg:allocate2}. Note that as $\rho\geq 0$, then whenever $\vt^t_{s_{*}^t}- (\rho+\max\{1,\sum_{i=1}^{n}a_{s^t_*,i}\cdot x_i^{t-1}\})\geq 0$, then $\vt^t_{s_{*}^t}-\max\{1,\sum_{i=1}^{n}a_{s^t_*,i}\cdot x_i^{t-1}\}$ is also non-negative.
Formally, as the machanism may update the prices even if a bundle is not allocated to the customer, the incentive compatible mechanism is not a posted price mechanism, and requires a bidding phase.   
    \item When a supplier arrives, the algorithm sets a price $p^t_s \triangleq \frac{1}{1+\eps}\sum_{i=1}^{n}a_{s,i} \cdot x_i^{t-1}$ for every bundle $s\in S^t$. The supplier sells the bundle $s_*^t$ maximizing his utility $p^t_s- \vt^t_{s}$ if this utility is nonnegative.   
\end{itemize}

In addition to the above changes, the algorithm chooses the parameters $\mu$ and $\eta$ carefully as we describe in the formal description of \cref{alg:allocate-truthful}.

\begin{algorithm}[h]
\caption{Trade-Truthful $(v,d,\eps)$}\label{alg:allocate-truthful}
Let $r_i^{0}=w_i$ be the inventory of items of item type $i=1, \ldots, n$ at time step $0$.\\
Set the parameters $\mu= \frac{32}{\eps}\cdot (1+\log v)$, and $\eta= 32(1+\log(1+dv\mu))$. \label{step:parameters}\\
Let $\delta=\frac{\eps}{8}$. Choose randomly a value $\rho\geq 0$ as follows, 
\begin{equation}\rho = \begin{cases}
    0 & \text{with probability } 1-\delta \\
    2^j  & j\in \{0,1, \ldots, \lfloor\log v\rfloor \}, \text{with probability } \frac{\delta}{1+\lfloor\log v\rfloor}
\end{cases}, \label{equality-prob}\end{equation} \label{step:choosep2}\\
At any time step $t=1, \ldots, T$ we set,
\begin{equation}
   x^{t-1}_i \triangleq \frac{1}{d \cdot \mu}\left(\exp\left((1-\frac{r_i^{t-1}}{w_i})\cdot \eta \right) - 1\right).
\end{equation}\\
Upon arrival of a {\bf customer} at time step $t\in \tbuy$: \vspace{0.2cm}\\
\quad Set the price of bundle $s\in S^t$ to be $p_s^t=\rho+\max\{1,\sum_{i=1}^{n}a_{s,i}\cdot x_i^{t-1}\}$. \label{step:posted1}\\ 
\quad Let $s^t_{*}=\arg\max_{s\in S^t}\left\{\vt^t_{s}-p_s^t\right\}= \arg\max_{s\in S^t}\left\{ \vt^t_{s}-\max\{1,\sum_{i=1}^{n}a_{s,i}\cdot x_i^{t-1}\}\right\}$. \label{step:chooses-sell1}\\
\quad \lIf{$\vt^t_{s^t_{*}}-\max\{1,\sum_{i=1}^{n}a_{s^t_{*},i}\cdot x_i^{t-1}\} \geq 0$
}{update the inventory to be
$r_{i}^t = r_{i}^{t-1} - a_{s^t_{*},i}$ \label{step:update-inventory1}}\vspace{0.2cm}
\quad \lIf{$\vt^t_{s^t_{*}}-p^t_{s^t_{*}} =\vt^t_{s^t_{*}}-\max\{1,\sum_{i=1}^{n}a_{s^t_{*},i}\cdot x_i^{t-1}\}- \rho \geq 0$
}{ Sell bundle $s^t_{*}$ to the customer and charge a price of $p^t_{s^t_{*}}$} \label{step:sell1} 
\vspace{0.2cm}
Upon arrival of a {\bf supplier} at time step $t\in \tsell$: \nonumber \vspace{0.2cm}\\
\quad Set the price of bundle $s\in S^t$ to be $p_s^t=\frac{1}{1+\eps}\cdot \sum_{i=1}^{n}a_{s,i}\cdot x_i^{t-1}$. \label{step:posted2}\\ 
\quad  Let $s^t_{*}=\arg\max_{s\in S^t}\left\{p_s^t-\vt^t_{s}\right\}= \arg\max_{s\in S^t}\left\{ \frac{1}{1+\eps}\cdot \sum_{i=1}^{n}a_{s,i} \cdot x_i^{t-1}-  \vt^t_{s}\right\}$. \label{step:chooses-buy1}\\
\quad \lIf{$p^t_{s^t_{*}} -\vt^t_{s^t_{*}} \geq 0$}{Buy bundle $s^t_{*}$ from the supplier, pay a price of $p_{s^t_{*}}^t$, and update the inventory to be
$r_{i}^t = \min\{r_{i}^{t-1} + a_{s^t_{*},i},w_i\}$\label{line:inventory1}
}
\end{algorithm}

\begin{remark}
    We remark that in Step \ref{step:update-inventory1} the inventory is updated even if eventually in Step \ref{step:sell1} the algorithm does not allocate the bundle to the customer (since $\rho$ is too large). The algorithm can be lazy and delay disposing items until the inventory exceeds the capacity. However, for our analysis, we require that the algorithm updates the values $x_i^t$ \textbf{as if the items were allocated}. \end{remark}

We begin with the following observation.
\begin{observation}
    \cref{alg:allocate-truthful} is incentive compatible.
\end{observation}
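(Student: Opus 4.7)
The plan is to show that for each agent (customer or supplier), truthful reporting of the valuation function is a dominant strategy, by invoking the standard logic of posted-price mechanisms: each agent faces prices that do not depend on her own report, and her allocation is precisely the utility-maximizing bundle according to her report at those fixed prices.

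First I would verify this for customers at any time step $t \in \tbuy$. The prices $p_s^t = \rho + \max\{1, \sum_i a_{s,i} x_i^{t-1}\}$ depend only on (i) the random draw of $\rho$ in Step \ref{step:choosep2}, which is sampled once at initialization and is therefore independent of the current customer's report, and (ii) the state $x_i^{t-1}$, which by \eqref{line:x_def} is a deterministic function of the inventory $r_i^{t-1}$, itself determined entirely by past agents' reports. Hence the menu of prices the customer sees is fixed before she reports her valuation. Given any report $\tilde v$ the mechanism allocates $\tilde s^t_* = \arg\max_s(\tilde v_s - p_s^t)$, charges her $p^t_{\tilde s^t_*}$ if $\tilde v_{\tilde s^t_*} \geq p^t_{\tilde s^t_*}$, and otherwise leaves her with utility zero. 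Her realized utility (in terms of her \emph{true} valuation $v^t$) under any report is therefore at most $\max_s \{(v_s^t - p_s^t)^+\}$, and truthful reporting achieves this upper bound.

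Next I would do the symmetric argument for suppliers at any $t \in \tsell$. The price $p_s^t = \frac{1}{1+\eps}\sum_i a_{s,i} x_i^{t-1}$ again depends only on history, and the supplier's utility from selling bundle $s$ at this price is $p_s^t - v_s^t$. The allocation rule in Step \ref{step:chooses-buy1} picks the bundle maximizing her reported utility and trades iff that utility is nonnegative, with payment equal to $p^t_{s^t_*}$. By the same reasoning, truthful reporting maximizes her utility.

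The only subtlety — and essentially the reason the observation is stated rather than being completely trivial — is that for customers the mechanism has two conditions: the inventory update in Step \ref{step:update-inventory1} (using the $\rho$-free condition) and the actual sale in Step \ref{step:sell1} (using the $\rho$-inflated condition). I would point out that the inventory update affects only the $x_i^{t'}$ values for \emph{future} rounds $t' > t$, and therefore only the utilities of future agents. Under the standard assumption that each agent cares only about her own utility in her round of arrival, this externality does not influence the current customer's incentives, so the two-phase (bidding + conditional sale) structure is compatible with truthfulness. Putting these pieces together establishes the observation.
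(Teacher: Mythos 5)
Your proof is correct and takes essentially the same approach as the paper's (which simply observes that the mechanism posts prices that are independent of the current agent's report and then allocates the utility-maximizing bundle). You go further than the paper by explicitly addressing the subtlety that the inventory (and hence future prices) is updated even when $\rho$ blocks the sale, and correctly noting that this only affects future agents' utilities and therefore does not break dominant-strategy truthfulness for the current customer.
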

\begin{proof}
    We observe that \cref{alg:allocate-truthful} sets prices for each bundle in Steps \ref{step:posted1} and \ref{step:posted2} allocates to the customers/suppliers a bundle $s^t_*$ that maximizes their utility with respect to these prices (if it is non-negative).
Hence, we get that the algorithm is incentive compatible.
\end{proof}
Next, we have the following claim.
\begin{claim}\label{claim:same}
At any time step $t\in [T]$, the inventory of the algorithm $r_i^t$, the values $x_i^t$ and the identity of $s^*_t$ is the same in \cref{alg:allocate-truthful} as in \cref{alg:allocate2}.
\end{claim}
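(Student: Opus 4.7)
The plan is a direct induction on the time step $t$, comparing the executions of \cref{alg:allocate-truthful} and \cref{alg:allocate2} when both are run with the same parameters $\mu$ and $\eta$ chosen in Step \ref{step:parameters}. The base case $t=0$ is immediate since both algorithms initialize $r_i^0 = w_i$. For the inductive step, assuming the inventories $r_i^{t-1}$ agree between the two algorithms, the prices $x_i^{t-1}$ defined by \eqref{line:x_def} agree automatically, since the formula depends only on $r_i^{t-1}$, $w_i$, $d$, $\mu$, and $\eta$.

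Given agreement of $x_i^{t-1}$, we check the two cases. For a customer at $t \in \tbuy$, \cref{alg:allocate-truthful} in Step \ref{step:chooses-sell1} selects
\[
s^t_* = \arg\max_{s \in S^t}\Bigl\{v^t_s - \rho - \max\Bigl\{1,\sum_i a_{s,i} x_i^{t-1}\Bigr\}\Bigr\};
\]
since $\rho \geq 0$ is a constant shift independent of $s$, the argmax is identical to the one computed in Step \ref{step:chooses-sell} of \cref{alg:allocate2}, giving the same $s^t_*$. The inventory update is performed in Step \ref{step:update-inventory1} of \cref{alg:allocate-truthful} precisely when $v^t_{s^t_*} - \max\{1, \sum_i a_{s^t_*,i} x_i^{t-1}\} \geq 0$, which is exactly the trigger used in Step \ref{step:sell} of \cref{alg:allocate2}; in both cases the update is $r_i^t \gets r_i^{t-1} - a_{s^t_*,i}$. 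Crucially, the decision whether to actually sell to the customer in Step \ref{step:sell1} (which does depend on $\rho$) has no effect on $r_i^t$ or $x_i^t$, by the remark following \cref{alg:allocate-truthful}.

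For a supplier at $t \in \tsell$, \cref{alg:allocate-truthful} sets $p^t_s = \frac{1}{1+\eps}\sum_i a_{s,i} x_i^{t-1}$, so the maximizer in Step \ref{step:chooses-buy1} coincides with the maximizer in Step \ref{step:chooses-buy} of \cref{alg:allocate2}, and the purchase condition $p^t_{s^t_*} - v^t_{s^t_*} \geq 0$ is the same condition as in Step \ref{step:buy}. The inventory update in Step \ref{line:inventory1} is identical to the one in Step \ref{line:inventory}. Thus $r_i^t$ coincides in both algorithms, closing the induction.

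There is no real obstacle here; the only point requiring care is the observation that the randomized additive offset $\rho$ in the customer pricing rule (i) shifts every bundle's price by the same amount, so it does not change the identity of the utility-maximizing bundle $s^t_*$, and (ii) is explicitly factored out of the inventory-update condition in Step \ref{step:update-inventory1}, so it does not change when $r_i^t$ and $x_i^t$ are updated. The role of $\rho$ is confined entirely to the revenue-side decision of whether to actually execute the sale, which is separated from the state evolution.
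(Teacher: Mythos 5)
Your proof is correct and follows essentially the same route as the paper's: an induction on time steps, observing that the additive shift $\rho$ does not change the utility-maximizing bundle, and that the inventory-update trigger in Step~\ref{step:update-inventory1} is deliberately decoupled from $\rho$, so state evolution matches. You make the induction slightly more explicit (in particular stating the base case and the reduction from $x_i^{t-1}$ agreement to $r_i^{t-1}$ agreement), but the content is the same.
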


\begin{proof}
    The claim follows inductively on the time steps $t\in T$. We observe that whenever a supplier arrives, the allocation is identical in both algorithms (only the price paid is different). Whenever a customer arrives at time step $t\in \tbuy$, as $\rho\geq 0$ is simply an additive shift, the bundle $s^t_{*}$ chosen in Step \ref{step:chooses-sell} in \cref{alg:allocate2} is the same as the bundle chosen by \cref{alg:allocate-truthful} in Step \ref{step:chooses-sell1}. Moreover, even if the bundle $s^t_{*}$ is not allocated to the customer in Step \ref{step:sell1} (which can happen if $\rho$ is too large), the algorithm still updates its inventory in Step \ref{step:update-inventory1} (as in Step \ref{step:sell} of \cref{alg:allocate2}) as well as the values $x_i^t$ that depend on the inventory.
\end{proof}

By the above claim, we get by the analysis of \cref{alg:allocate2} that the inventory of \cref{alg:allocate-truthful} is always feasible.
Moreover, the dual solution $\mathcal D$ as constructed in Section \ref{sec:known} is also feasible. Therefore, setting the parameters 
 $\mu= \nicefrac{32}{\eps}\cdot (1+\log v)$, and $\eta= 32(1+\log(1+dv\mu))$ (Line \ref{step:parameters} of \cref{alg:allocate-truthful}) we get,
by Theorem \ref{thm:main-inside} that, 
\[OPT = O\left(\frac{\eta}{\eps}\right) \cdot \left[\sum_{t\in \tbuy'}\left(\left(1-\frac{\eps}{4}\right)\cdot P^t +\frac{\eps \cdot \vt^t_{s_{*}^t}}{\eta}+ \frac{1}{\mu}\right) - \sum_{t\in \tsell'}\frac{P^t}{1+\eps}\right].\]

Next, let $\tbuy', \tsell'$ be the time steps in which \cref{alg:allocate2} sells bundle $s^t_{*}$ to the customer at price $v^t_{s^t_{*}}$ or pays the supplier a cost of $v^t_{s^t_{*}}$. \cref{alg:allocate-truthful}  updates its inventory the same way in these steps. However, it pays the supplier a higher price of $p_{s^t_{*}}^t\geq \vt^t_{s^t_{*}}$ and charges the customer a (random) lower price of $p_{s^t_{*}}^t\leq \vt^t_{s^t_{*}}$ if the bundle $s^t_{*}$ is allocated to the customer. Nevertheless, the next lemma bounds from below the expected profit of \cref{alg:allocate-truthful}.

\begin{lemma}\label{lem:revenue}
The expected profit of \cref{alg:allocate-truthful} is at least, 
 \[\left[\sum_{t\in \tbuy'}\left(\left(1-2\delta \right)\cdot \max\{1, P^t\} + \frac{\delta}{2(1+\log v)} \cdot v^t_{s^t_*}\right) - \sum_{t\in \tsell'}\frac{P^t}{1+\eps}\right].\]
\end{lemma}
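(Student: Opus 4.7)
The plan is to use \cref{claim:same} to reduce to per-round bookkeeping: the sets $\tbuy'$, $\tsell'$ and the chosen bundles $s^t_*$ coincide with those of \cref{alg:allocate2}. For $t\in\tsell'$, \cref{alg:allocate-truthful} deterministically pays the supplier the posted price $p^t_{s^t_*}=P^t/(1+\eps)$, contributing exactly $-P^t/(1+\eps)$ to the profit that round. For $t\notin\tbuy'$ no sale occurs (any such sale would require $\rho\leq\vt^t_{s^t_*}-\max\{1,P^t\}<0$), so the only random revenues come from $t\in\tbuy'$, and all nontrivial work is there.

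Fix $t\in\tbuy'$, and set $M\triangleq\max\{1,P^t\}$ and $u\triangleq\vt^t_{s^t_*}-M\geq 0$. A sale occurs iff $\rho\leq u$, in which case the customer is charged $\rho+M$. Using the distribution \eqref{equality-prob} I would write
\[
\mathbb{E}[\mathrm{rev}_t] \;=\; (1-\delta)\,M \;+\; \frac{\delta}{1+\lfloor\log v\rfloor}\sum_{j=0}^{\lfloor\log v\rfloor}\mathbf{1}\{2^j\leq u\}\,(2^j+M),
\]
and split into two cases. If $u<1$, no positive $\rho$ triggers a sale, so $\mathbb{E}[\mathrm{rev}_t]=(1-\delta)M$; letting $c\triangleq\delta/(2(1+\log v))\leq\delta/2$, the target inequality $(1-\delta)M\geq (1-2\delta)M+c(u+M)$ reduces to $(\delta-c)M\geq cu$, which holds because $M\geq 1>u$ and $1+\log v\geq 1$. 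If $u\geq 1$, I pick $j^*\triangleq\lfloor\log u\rfloor\in\{0,\dots,\lfloor\log v\rfloor\}$ (valid since $u\leq\vt^t_{s^t_*}\leq v$); then $u/2<2^{j^*}\leq u$, and keeping only the $\rho=0$ and $\rho=2^{j^*}$ branches gives
\[
\mathbb{E}[\mathrm{rev}_t] \;\geq\; (1-\delta)\,M + \frac{\delta}{1+\lfloor\log v\rfloor}\cdot\frac{u}{2} \;\geq\; (1-\delta)\,M + \frac{\delta\,u}{2(1+\log v)},
\]
and the target collapses to $\delta M\geq \delta M/(2(1+\log v))$, which is immediate.

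Summing these per-round bounds over $t\in\tbuy'$ and subtracting the deterministic supplier payments $P^t/(1+\eps)$ for $t\in\tsell'$ yields the lemma. The only subtle point is the $u<1$ branch: there no $\rho\geq 1$ can produce a sale, so one must argue that the $\rho=0$ mass already dominates the target; this works precisely because the $M\geq 1$ floor absorbs the $\frac{\delta}{2(1+\log v)}\,\vt^t_{s^t_*}$ term cleanly.
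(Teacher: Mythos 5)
Your proof is correct and follows essentially the same structure as the paper's: reduce via \cref{claim:same} to per-round accounting, handle supplier rounds deterministically as $-P^t/(1+\eps)$, and for customer rounds split on whether the utility gap $u = \vt^t_{s^t_*} - \max\{1,P^t\}$ is below or above $1$, using the $M\geq 1$ floor in the small-$u$ case and a dyadic threshold $2^{j^*}\approx u$ in the large-$u$ case. The only cosmetic difference is that the paper retains the $\max\{1,P^t\}$ term and the full geometric sum $\sum_{i\leq k}2^i$ when lower-bounding the threshold branches, whereas you keep only the single $\rho=2^{j^*}$ branch and push the $M$-slack into the $(1-\delta)M$ vs.\ $(1-2\delta)M$ comparison; both give the same bound.
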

\begin{proof}
By construction, for all time steps $t
\in \tsell'$, the algorithm pays $\frac{P^t}{1+\eps}$. Consider a time step $t\in \tbuy'$ in which $\max\{1,P^t\} = \max\{1,\sum_{i=1}^{n}a_{s^t_*,i}\cdot x_i^{t-1}\}\leq \vt^t_{s^t_*}$. 
If $\max\{1, P^t\} \leq \vt^t_{s^t_*} \leq \max\{1, P^t\}+ 2^0= 1+ \max\{1, P^t\}$ then the algorithm's expected revenue is at least 
\begin{align*}
(1-\delta)\cdot \max\{1, P^t\} & \geq (1-2\delta)\cdot \max\{1, P^t\} + \frac{\delta}{2} \cdot \left(1+\max\{1, P^t\}\right) \\
&\geq (1-2\delta)\cdot \max\{1, P^t\} + \frac{\delta}{2} \cdot v^t_{s^t_*}.
\end{align*}
Otherwise, let $k\in \{0,1,\ldots, \lfloor\log v\rfloor\}$ be such that,
$\max\{1, P^t\}+ 2^{k}\leq  \vt_{s^t_*} \leq \max\{1, P^t\}+ 2^{k+1}$.
The total expected revenue of the algorithm is:
\begin{align*}
&(1-\delta) \cdot \max\{1, P^t\}+ \frac{\delta}{1+\lfloor\log v\rfloor} \cdot \left(\max\{1, P^t\}+\sum_{i=0}^{k}2^i\right)\\
& \geq (1-\delta) \cdot \max\{1, P^t\}+ \frac{\delta}{1+\log v} \cdot \left(\max\{1, P^t\}+ 2^{k}\right) \\
&\geq (1-\delta) \cdot \max\{1, P^t\}+ \frac{\delta}{2+2\log v} \cdot v^t_{s^t_*}.
\end{align*}
The last inequality follows since $v^t_{s^t_*}\leq \max\{1, P^t\}+ 2^{k+1} \leq 2\cdot (\max\{1, P^t\}+ 2^{k}).$
\end{proof}
Plugging $\delta=\nicefrac{\eps}{8}$ into Lemma \ref{lem:revenue}, the algorithm's  expected revenue is at least, 
\begin{align*}
E[V_{alg}] & \geq \sum_{t\in \tbuy'}\left(\left(1-2\delta \right)\cdot \max\{1, P^t\} + \frac{\delta}{2(1+\log v)} \cdot v^t_{s^t_*}\right) - \sum_{t\in \tsell'}\frac{P^t}{1+\eps}\\
& = \sum_{t\in \tbuy'}\left(\left(1-\frac{\eps}{4} \right)\cdot \max\{1, P^t\} + \frac{\eps \cdot v^t_{s^t_*}}{16(1+\log v)} \right) - \sum_{t\in \tsell'}\frac{P^t}{1+\eps}.
 \end{align*} 

Combining this with the upper bound on the optimal profit we get,
\begin{align*}
OPT &= O\left(\frac{\eta}{\eps}\right) \cdot \left[\sum_{t\in \tbuy'}\left(\left(1-\frac{\eps}{4}\right)\cdot P^t +\frac{\eps \cdot \vt^t_{s_{*}^t}}{\eta}+ \frac{1}{\mu}\right) - \sum_{t\in \tsell'}\frac{P^t}{1+\eps}\right]\\
& = O\left(\frac{\eta}{\eps}\right) \cdot \left[\sum_{t\in \tbuy'}\left(\left(1-\frac{\eps}{4}\right)\cdot \max\{1, P^t\} +\frac{\eps \cdot \vt^t_{s_{*}^t}}{32(1+\log v)}+ \frac{\eps}{32(1+\log v)}\right) - \sum_{t\in \tsell'}\frac{P^t}{1+\eps}\right]\\
& = O\left(\frac{\eta}{\eps}\right) \left[\sum_{t\in \tbuy'}\left(\left(1-\frac{\eps}{4}\right) \max\{1, P^t\} +\frac{\eps \cdot \vt^t_{s_{*}^t}}{16 (1+\log v)}\right) - \sum_{t\in \tsell'}\frac{P^t}{1+\eps}\right] = O\left(\frac{\eta}{\eps}\right) E[V_{alg}],
\end{align*}
where the final inequality uses the fact that $\vt^t_{s_{*}^t}\geq 1$ for $t\in \tbuy$.
Finally, note that $\eta= O(\log(dv / \eps))$. Hence, the algorithm is $O\left( \eta / \eps\right)= O\left(\log(dv / \eps)/\eps\right)$-competitive concluding the proof of the second part of Theorem \ref{thm:main-formal}.

\section{Lower Bounds}\label{sec:lower}
In this section, we prove our lower bound theorem.

\begin{theorem}\label{thm:lower-bound-formal}
For the online trading problem (even for the known valuation setting and single minded customers/suppliers), when comparing to an optimal offline fractional solution for which supplier values  are $(1+\eps)$ larger:
\begin{itemize}
\item 
The competitive ratio of any deterministic or randomized algorithm is $\Omega( \frac{1}{\eps}\cdot \log (dv))$. This holds even if all the items are of a single type.
In particular, without the $1+\eps$ supplier value augmentation, the competitive ratio of any algorithm is unbounded.
   \item There exists a constant $c$, such that if the inventory from an item type is less than $\frac{c}{\eps}\cdot \log(dv)$ times the number of items of type $i$ in some of the bundles, then the competitive ratio of any {\bf deterministic} algorithm is unbounded. This holds even with respect to an optimal solution that can hold a single item from each type.
\end{itemize}
\end{theorem}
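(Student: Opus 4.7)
I would prove both parts of the theorem via adversarial constructions on a single item type. For the first bullet, the plan is to adapt the classical AAP~\cite{AAP93} geometric construction by replacing the base factor $2$ by $(1+\eps)$. Specifically, I would present $K = \lfloor \log_{1+\eps}(dv) \rfloor = \Theta(\log(dv)/\eps)$ phases, where in phase $k$ a customer arrives demanding a bundle at per-unit value $(1+\eps)^k/d$; by choosing bundle sizes in $[1,d]$ and bundle values in $[1,v]$ appropriately, per-unit values span the full $[1/d,v]$ range. For a deterministic algorithm, the adversary adaptively stops immediately after the algorithm commits a sale; for a randomized algorithm, Yao's principle with a suitable distribution on stopping phases yields the same bound. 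A standard LP/convexity argument then shows that any online strategy's expected revenue is at most a $1/K$ fraction of OPT's revenue at the adversarial stopping phase. To recover the extra $1/\eps$ factor beyond the classical $\log(dv)$ AAP bound, I would interleave suppliers at per-unit values $(1+\eps)^{k-1}/d$ so that OPT (offline) can realize extra profit-per-phase by buying and selling across consecutive levels, while an online algorithm loses a $(1+\eps)$ factor per decision because it must hedge against future high-value arrivals without knowing when the sequence ends.

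For the second bullet, the plan is to run a similar $K$-phase construction but now exploit the integrality constraint. If $w_i < \frac{c}{\eps}\log(dv)\cdot a_{s,i}$ for some bundle $s$, then the algorithm can sell bundle $s$ at most $\lfloor w_i/a_{s,i}\rfloor < K$ times. By pigeonhole, there is at least one phase $k^*$ in which the algorithm either already has no inventory of $s$ or is forced to sell at an unfavorable price; the adaptive adversary stops immediately after $k^*$. The algorithm's revenue is then bounded by the sum of at most $\lfloor w_i/a_{s,i} \rfloor$ of the geometrically weighted prices, while an offline OPT (even restricted to unit inventory per type) can refill continually via the interleaved suppliers and sell at every phase up to $k^*$. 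Scaling $v$ arbitrarily large makes the ratio unbounded. I would handle the $\eps=0$ subcase of the first bullet by a separate trivial construction: a supplier offers one item at some large price $p$, followed by either nothing or a customer willing to buy at price $p+1$, with probability $1/2$ each. Any online algorithm's expected profit is $O(1)$, while OPT earns $\Omega(p)$, unbounded as $p\to\infty$.

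The main obstacle will be rigorously extracting the extra $1/\eps$ factor in the first bullet. A customer-only construction with $(1+\eps)$-spaced values yields optimal online ratio of $\Theta(K\eps) = \Theta(\log(dv))$ by the standard hedging LP computation; the additional $1/\eps$ must come from the two-way nature of the problem and the supplier augmentation. The delicate step is arranging the interleaved supplier arrivals so that (i) OPT, knowing the full sequence, can realize $\Theta(1/\eps)$ additional profitable buy--sell cycles per geometric value level despite the $(1+\eps)$ supplier inflation, and (ii) no online algorithm can match more than $O(1)$ of these cycles without risking severe loss on adversarial stoppings. Once this combinatorial gap is established, the extension to fractional OPT is immediate since fractional OPT dominates integral OPT, so the lower bound applies a fortiori; similarly, randomization is handled uniformly via Yao's principle.
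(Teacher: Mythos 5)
Your proposal takes a fundamentally different route from the paper, and there are concrete gaps in the hardest steps.

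\paragraph{First bullet.} The paper does not use an AAP-style ``adaptively stop at the worst phase'' argument at all. Instead, each phase is supplier-led: the adversary streams suppliers at exponentially decreasing per-unit prices, watches how much the algorithm purchases, and ends the phase with customers at $(1+\eps)^2$ times the last supplier price only once the algorithm has bought ``too little'' (formally, has not advanced past another $1/c$-length sub-interval of its inventory, for $c \approx \frac{1}{\eps}\log v$). The key computation is then that \opt makes profit $\approx \eps \cdot (1+\eps) \cdot x$ \emph{per item} at the current price level $x$, whereas a LIFO accounting of the algorithm's inventory shows the algorithm can only profit on the $\leq w/c$ items bought in the last sub-interval. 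The $1/\eps$ factor arises entirely from the margin $\eps x$ in \opt's buy-sell pair versus the width $1/c$ of the algorithm's profitable slice. Your proposal correctly identifies that a $(1+\eps)$-spaced customer-only AAP construction yields only $\Theta(\log v)$ (since the ratio there scales with $K\eps$, not $K$), and you acknowledge that the extra $1/\eps$ must come from the two-way trading, but the ``interleaved suppliers giving $\Theta(1/\eps)$ extra buy-sell cycles per level'' is precisely the part left unspecified. There is no obvious way to arrange interleaved suppliers so that no online algorithm can exploit them: if the suppliers are cheap, the online algorithm can also buy from them. The paper avoids this by making the adversary's decision \emph{adaptive to how much the algorithm buys}, so the algorithm cannot both hoard cheap inventory and stay competitive.

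\paragraph{The $\eps = 0$ subcase.} Your standalone construction is incorrect. First, the model assumes the initial inventory is full, so an online algorithm need not buy at all at the supplier step, nullifying your intended $\Omega(p)$ loss. Second, even ignoring that, the gap you describe is $\opt - \alg = O(1)$, not a ratio that blows up. In the paper, this case falls out automatically: the first lower bound gives $\Omega\bigl(\frac{1}{\eps}\log v\bigr)$, which $\to\infty$ as $\eps\to 0$.

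\paragraph{Second bullet.} The paper's argument is sharper than ``bounded by a sum of geometric prices.'' Using integrality of a deterministic algorithm's inventory plus LIFO accounting, the paper shows the algorithm's profit per phase is \emph{at most zero}: the cheapest item it holds when the phase ends was purchased at a price equal to the customer's value. Meanwhile \opt makes strictly positive profit each phase, so the ratio is unbounded without any scaling of $v$. Your pigeonhole argument and ``scaling $v$ arbitrarily large'' do not yield this: a bounded-positive-over-bounded-positive ratio scaled by $v$ is still bounded by a function of $v$, whereas the theorem requires the ratio to be unbounded for \emph{fixed} $d, v$ whenever the inventory cap is below the threshold. The precise algebraic trick — making the customer's per-unit value exactly match the algorithm's marginal purchase cost — is the missing ingredient.
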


In Section \ref{sec:lower1} we prove the first statement of the theorem. In Section \ref{sec:lower2} we prove the second statement of the theorem.

\subsection{$\Omega(\frac{1}{\eps}\cdot \log{dv})$ Lower Bound}\label{sec:lower1}

In this section we prove that the competitive ratio of any deterministic or randomized algorithm
is $\Omega( \frac{1}{\eps}\cdot \log (dv))$, even in the special case where all items are of a single type.
In particular, without the $1+\eps$ value augmentation of the suppliers, the competitive ratio is unbounded. We prove two lower bounds separately: (a) $\Omega( \frac{1}{\eps}\cdot \log v)$ even when all customers and suppliers wish to buy or sell a single item ($d=1$), but the value for the item is in $[1,v]$ for some arbitrary value $v>1$; (b) $\Omega( \frac{1}{\eps}\cdot \log d)$ even when the value of the bundles requested in the range $[1,2]$ (i.e. $v=2$), but the bundles may contain up to $d$ items for an arbitrary $d\in \Z_+$. The two lower bounds are very similar, but we show them separately for clarity. 

Our bounds hold even when the algorithm is allowed to sell or buy items \emph{fractionally}. Observe that any randomized algorithm $\mathcal{R}$ for the trading problem (against an oblivious adversary!) induces a feasible \emph{fractional} solution $r$, where $r^t_i$ is the expected inventory item type $i$ that algorithm $\mathcal{R}$ holds at time $t$. Therefore, the lower bound we prove holds even for randomized algorithms. 

\textbf{Intuition:} \quad In the hard input sequence we construct, there are unbounded number of phases. In each phase, the adversary presents to the algorithm a stream of suppliers each selling at exponentially decreasing cost a full inventory's worth of the same item. As long as the algorithm purchases ``enough'' of the items, the phase continues. Otherwise, the adversary presents a set of customers offering to buy items at a price $(1+\eps)^2$ times the last (cheapest) supplier's price, and the phase ends. 
If the last suppliers arrive with a cheapest price per unit of item of $x$, then the optimal solution purchases a full inventory for a price per unit of $(1+\eps)x$ (paying $(1+\eps)$ times the price paid by the algorithm), and immediately sells at a price per unit of $(1+\eps)^2\cdot x$, making profit of $\eps\cdot (1+\eps) x$ per unit.  
On the other hand, we argue that we can define ``enough'' such that the algorithm needs to spend too much money over the course of the sequence to make more than a fraction of this optimal profit. We repeat this construction an arbitrary number of times in phases to amortize away any initialization constants.

In the $\Omega(\frac{1}{\eps} \log v)$ lower bound the decreasing prices per unit are achieved via suppliers with  decreasing values. In the $\Omega(\frac{1}{\eps} \log d)$ lower bound, the decreasing prices are instead achieved via suppliers with (roughly) fixed values but increasing sizes of bundles. There is an additional technical complexity in the second bound that stems from the fact that bundle sizes must be integral, and to achieve this we vary valuations slightly in the range $[1,2]$.

\subsubsection{Proof of the $\Omega(\frac{1}{\eps} \log v)$ lower bound}

\begin{lemma}
    The competitive ratio of any deterministic or randomized algorithm is $\Omega(\frac{1}{\eps}\cdot \log v)$. This holds even if $d = 1$, all the items are of a single type, and the customers/supplier are single minded.
\end{lemma}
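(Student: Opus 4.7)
The plan is to construct a multi-phase adversarial instance exhibiting competitive ratio $\Omega(\log v/\eps)$. Let $L = \lfloor \log_{1+\eps} v \rfloor = \Theta(\log v/\eps)$ and $v_i = v/(1+\eps)^{i-1}$ for $i = 1, \ldots, L$, so $v_1 = v$ and $v_L \in [1, 1+\eps]$. Each phase consists of supplier ``sub-phases'': in sub-phase $i$, $w$ single-minded suppliers each offer a single item at value $v_i$. An adaptive adversary terminates the supplier stream at some sub-phase $j^*$ (chosen based on the algorithm's (fractional) play so far), whereupon $w$ single-minded customers arrive, each willing to buy a single item at value $(1+\eps)^2 v_{j^*}$, completing the phase. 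The construction runs for $K$ phases with $K$ large to amortize any phase-1 / initial-inventory effects. The randomized case reduces to the fractional case via the observation stated at the start of the subsection.

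OPT's per-phase profit is $\eps(1+\eps) w v_{j^*}$: knowing $j^*$ in advance, OPT waits until the cheapest sub-phase $j^*$, buys all $w$ items there at augmented cost $(1+\eps) v_{j^*}$ per unit, and resells them to the end-of-phase customers at $(1+\eps)^2 v_{j^*}$ per unit. This profit scales linearly with $\eps$, so $\eps \to 0$ makes OPT's profit vanish; correspondingly the ratio we prove diverges, recovering the ``unbounded competitive ratio without augmentation'' claim for $\eps=0$.

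The key lemma is that, for any (fractional) algorithm, the adversary can pick $j^*$ so that the algorithm's per-phase profit is $O(\eps w v_{j^*}/L)$, pushing the per-phase ratio to $\Omega(L) = \Omega(\log v/\eps)$. I would implement the trap by maintaining a threshold $\tau = \Theta(\eps w/L)$ and ending the phase at the first sub-phase $i$ in which the algorithm bought strictly less than $\tau$ (or at $i = L$ otherwise). In the former case, the phase ends at some $j^* < L$ and the algorithm's inventory was assembled out of items bought at prices $\geq v_{j^*}$, so its revenue at customer price $(1+\eps)^2 v_{j^*}$ is capped by roughly $\tau \cdot (1+\eps)^2 v_{j^*}$. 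In the latter case, the phase runs all $L$ sub-phases and the algorithm spent at least $\tau$ at each, giving cumulative cost $\Omega(\tau \sum_i v_i) = \Omega(w v_L)$, which dominates the revenue $\leq w(1+\eps)^2 v_L$ obtainable at the final cheap customer price $(1+\eps)^2 v_L = \Theta(1)$. A careful accounting in both cases yields the desired per-phase profit bound, and summing over $K$ phases completes the proof.

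The main technical obstacle is making this case analysis rigorous: one must (i) handle free disposal so that the algorithm cannot escape its accumulated cost by discarding items, (ii) show that inventory carried between phases does not help the algorithm beat the per-phase bound (perhaps by prefacing each phase with a customer block, or by an amortization argument that absorbs this into lower-order terms), and (iii) tune $\tau$ so that both cases give the same asymptotic bound up to constants. An alternative route is an LP-duality argument mirroring the upper bound's primal-dual analysis in reverse, which may provide a cleaner proof without explicit case-splitting.
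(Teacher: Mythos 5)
Your phase structure and the bound on OPT's per-phase profit are correct and align with the paper. But your adversary's trigger -- end the phase at the first sub-phase where the algorithm buys less than $\tau$ -- leaves a hole that the paper's construction is specifically designed to close. The issue is that your supplier price $v_i$ is a function of the sub-phase counter $i$, not of the algorithm's inventory. This lets the algorithm buy heavily at a single price and then stall. Concretely, suppose the algorithm enters a phase with empty inventory, buys all $w$ items in sub-phase $1$ at price $v_1$, and then buys nothing in sub-phase $2$. Your trigger fires at $j^* = 2$, and the customers arrive at price $(1+\eps)^2 v_2 = (1+\eps) v_1$. The algorithm sells all $w$ items for revenue $w(1+\eps)v_1$ against cost $w v_1$, giving profit $\eps w v_1 = \eps(1+\eps) w v_{j^*}$, which is exactly OPT's per-phase profit. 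So the per-phase competitive ratio is $1$, not $\Omega(L)$. Your claimed cap "revenue $\lesssim \tau(1+\eps)^2 v_{j^*}$'' ignores that items bought in sub-phase $j^*-1$ at price $v_{j^*-1}=(1+\eps)v_{j^*}$ are also sold at a strict profit at the customer price $(1+\eps)^2 v_{j^*}$, and the algorithm may hold up to $w$ of those, not $\tau$.

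The paper avoids this by making the supplier price a function of the \emph{algorithm's current inventory level} $y_t$ rather than a time counter: the price at step $t$ is $v/(1+\eps)^{2(i_t+1)}$ where $i_t = \lfloor (y_t/w)\,c\rfloor$, and the phase ends precisely when the algorithm fails to cross into the next $w/c$-wide inventory interval. This ties the price schedule to the algorithm's own holdings, so a burst of buying immediately drops the price on the rest of the burst's accounting (via the LIFO invariant that items in interval $j$ were bought at price at least $v/(1+\eps)^{2(j+1)}$). Consequently the number of items that are profitable at the final customer price $v/(1+\eps)^{2 i_F}$ is at most one interval's worth, $w/c$, which is what yields the $\Omega(c) = \Omega\bigl(\tfrac{1}{\eps}\log v\bigr)$ gap. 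No choice of $\tau$ in your construction recovers this, because the algorithm's "buy $w$ at one sub-phase, then stop'' strategy is independent of $\tau$. (Separately, your indices place customer values as high as $(1+\eps)^2 v_1 > v$ when $j^*=1$, leaving the allowed range $[1,v]$; this is fixable by reindexing, but it is another sign the construction needs the paper's tighter calibration.)
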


\begin{proof}

We assume that $v$ is such that $v\geq (1+\eps)^8$, and let $c=\lfloor \nicefrac{1}{2}\cdot \log_{1+\eps}v\rfloor - 1$ be an integer (thus larger than 3).
The input sequence is divided into phases, each of which consists of an adaptive sequence of steps in which suppliers arrive, followed by one single step in which customers arrive.
We denote by $y_0 \in [0,w]$ the inventory of the algorithm at the beginning of the phase is (before step $0$), and let $y_t \in [y_0, w]$ be the (potentially fractional) inventory of the algorithm before the 
$t$-th step. Let $i_t = \lfloor\frac{y_t}{w} \cdot c \rfloor$ (so that $\frac{y_t}{w} \cdot c \in [i_t, i_t+1)$ and $i_t \in [0,c]$). 

At the $t$-th step, $w$ suppliers arrive, each offering to sell a single item with value $\frac{v}{(1+\eps)^{2(i_t+1)}}\in [\frac{v}{(1+\eps)^2},1]$. The algorithm may purchase some fraction of items from the suppliers, thus increasing its inventory to $y_{t+1}\geq y_t$. If $\frac{y_{t+1}}{w} \cdot c  \leq i_t+1$, then $w$ customers arrive, each wanting to purchase a single item with value $\frac{v}{(1+\eps)^{2i_t}} \in [v, (1+\eps)^2]$, and the phase ends. Otherwise we continue to step $t+1$. Let $F$ be the final time step in which suppliers arrive. Thus the inventory of the algorithm before the arrival of the last $w$ suppliers is $y_F$. These suppliers have value $\frac{v}{(1+\eps)^{2(i_F + 1)}}$, and thus the customers have value $\frac{v}{(1+\eps)^{2i_F}}$.

Note that every phase must eventually end because at each step in which no customer arrives (and the phase continues) we have $i_{t+1} \geq i_t + 1$, and for all $t$, we have $i_t \leq c$. Note also that due to our choice of $c=\lfloor \nicefrac{1}{2}\cdot \log_{1+\eps}v\rfloor - 1$, the values of all the suppliers and customer are indeed in the range $[1,v]$. 

Let $\Delta v_{alg}$ and $\Delta v_{adv}$ be the profit of the algorithm and the adversary in a single phase. To complete the proof of the lower bound, we compare these two quantities.

\textbf{Bounding} $\Delta v_{adv}$: \quad The adversary can buy $w$ items from the last $w$ suppliers and then immediately sell the entire inventory to the $w$ customers.\footnote{Technically, in the first phase, the adversary's inventory is full and it does not need to pay to fill its inventory, which only helps our analysis.  In every subsequent phase, the inventory of the adversary is initially empty.}  
To purchase the $w$ items, it pays $w \cdot (1+\eps) \cdot \frac{v}{(1+\eps)^{2(i_F+1)}}$ (this is $(1+\eps)$ times the price offered to the algorithm by the suppliers in the last step). Hence, its total trading profit in the phase is 
\[\Delta v_{adv} = w\cdot \left(\frac{v}{(1+\eps)^{2i_F}} - (1+\eps)\cdot \frac{v}{(1+\eps)^{2(i_F+1)}}\right) = w\cdot \frac{\eps \cdot v}{(1+\eps)^{2i_F+1}}.\]

\textbf{Bounding} $\Delta v_{alg}$: \quad To analyze the profit of \alg, we imagine that it represents the inventory it holds as a subset of the interval $[0,w]$. We further imagine that when buying, it fills the interval $[0,w]$ from left to right, and when it sells it clears inventory from right to left (i.e. LIFO, see \cref{fig:interval}). This is purely for accounting purposes and will not change the total profit of the algorithm. Next, we partition the inventory in the interval $[0,w]$ into $c$ (sub-)intervals indexed by $j=0,1, \ldots, c-1$, where the $j$th interval is the inventory between $[\frac{j}{c} \cdot w , \frac{j+1}{c} \cdot w]$. The algorithm maintains the following invariant.\footnote{Note that the algorithm did not pay for its starting inventory: to account for this, we can imagine the algorithm starts the game with a profit of $\sum_{j = 0}^{c-1} \frac{w}{c} \frac{v}{(1+\eps)^{2(j+1)}}$ which it then pays to ensure the invariant holds. This adds an absolute constant to the profit of the algorithm, which can be amortized to $0$ after sufficiently many phases.}

\begin{invariant}
The price paid per unit for the $j$-th interval is at least $\frac{v}{(1+\eps)^{2(j+1)}}$.
\end{invariant}

\begin{proof}
    Since the algorithm fills its inventory $[0,w]$ according to the LIFO policy, the occupied inventory at time $t$ is always the interval $[0,y_t]$.    Suppose that $[y_t, y_{t+1}] \cap [\frac{j}{c} \cdot w , \frac{j+1}{c} \cdot w] \neq \emptyset$, in other words the algorithm partially fills the interval $j$ in time step $t$. Then $\frac{y_t}{w} \cdot c < j+1$, so by construction the suppliers in time step $t$ have value at least $\frac{v}{(1+\eps)^{2(j + 1)}}$, and hence the fraction if the $j$-th interval covered by $[y_t,y_{t+1}]$ is bought at no less than this price.
\end{proof}

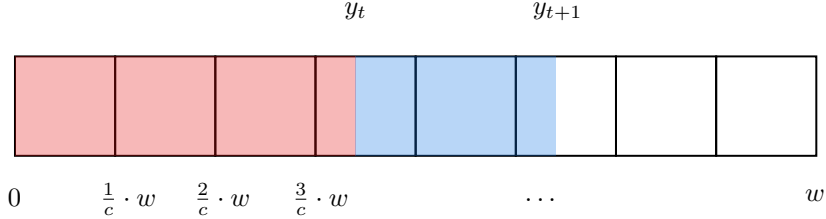
\begin{figure}[H]
	\centering
	 \tikzset{every picture/.style={line width=0.75pt}} 

\begin{tikzpicture}[x=0.75pt,y=0.75pt,yscale=-1,xscale=1]

\draw   (80,110) -- (130,110) -- (130,160) -- (80,160) -- cycle ;
\draw   (130,110) -- (180,110) -- (180,160) -- (130,160) -- cycle ;
\draw   (180,110) -- (230,110) -- (230,160) -- (180,160) -- cycle ;
\draw   (230,110) -- (280,110) -- (280,160) -- (230,160) -- cycle ;
\draw   (280,110) -- (330,110) -- (330,160) -- (280,160) -- cycle ;
\draw   (330,110) -- (380,110) -- (380,160) -- (330,160) -- cycle ;
\draw   (380,110) -- (430,110) -- (430,160) -- (380,160) -- cycle ;
\draw   (430,110) -- (480,110) -- (480,160) -- (430,160) -- cycle ;
\draw  [draw opacity=0][fill={rgb, 255:red, 230; green, 18; blue, 18 }  ,fill opacity=0.3 ] (80,110) -- (250,110) -- (250,160) -- (80,160) -- cycle ;
\draw  [draw opacity=0][fill={rgb, 255:red, 18; green, 119; blue, 230 }  ,fill opacity=0.3 ] (250,110) -- (350,110) -- (350,160) -- (250,160) -- cycle ;

\draw (75,175.4) node [anchor=north west][inner sep=0.75pt]    {$0$};
\draw (121,172.4) node [anchor=north west][inner sep=0.75pt]    {$\frac{1}{c}\cdot w$};
\draw (168,172.4) node [anchor=north west][inner sep=0.75pt]    {$\frac{2}{c} \cdot w$};
\draw (217,172.4) node [anchor=north west][inner sep=0.75pt]    {$\frac{3}{c} \cdot w$};
\draw (473,175.4) node [anchor=north west][inner sep=0.75pt]    {$w$};
\draw (333,179.8) node [anchor=north west][inner sep=0.75pt]    {$\dotsc $};
\draw (243,81.4) node [anchor=north west][inner sep=0.75pt]    {$y_{t}$};
\draw (337,81.4) node [anchor=north west][inner sep=0.75pt]    {$y_{t+1}$};

\end{tikzpicture}
	\caption{Illustration of the accounting scheme. We imagine the algorithm fills its inventory from left to right. By construction, the price at which the algorithm can fill any fraction of the $j$-th interval is at least $\frac{v}{(1+\eps)^{2(j+1)}}$. \label{fig:interval}}
\end{figure}

To conclude the proof, recall $F$ is the final time step in which suppliers arrive, and the inventory of the algorithm before the arrival of the last $w$ suppliers is $y_F$. As the phase ends this means that $\frac{y_{F+1}}{w} \cdot c  \leq i_F+1$ meaning that $y_{F+1}\leq \frac{i_{F}+1}{c}\cdot w$ (the algorithm did not fill more than a single sub-interval). The profit the algorithm can make from selling items in the interval $[\frac{i_F}{c} \cdot w, y_{F+1}]$ is at most
\begin{align*}
 &\left(y_{F+1}-  \frac{i_F}{c}\cdot w\right)\cdot \left(\frac{v}{(1+\eps)^{2i_F}}- \frac{v}{(1+\eps)^{2(i_F+1)}}\right) \leq \frac{w}{c}\cdot \left(\frac{v}{(1+\eps)^{2i_F}}- \frac{v}{(1+\eps)^{2(i_F+1)}}\right) \\
 &= \frac{w}{c}\cdot\frac{v}{(1+\eps)^{2i_F+1}} \left(1+\eps- \frac{1}{1+\eps}\right)\leq \frac{2w}{c} \cdot \frac{\eps \cdot v}{(1+\eps)^{2i_F+1}}= \frac{2}{c}\cdot \Delta v_{adv}
\end{align*}
If the algorithm chooses to further sell inventory the range $[0, \frac{i_F}{c} \cdot w]$, which was purchased at a price per item of at least $\frac{v}{(1+\eps)^{2i_F}}$, the algorithm makes no profit (and will even lose money for selling from the range $[0, \frac{i_F-1}{c} \cdot w]$).

We conclude that the competitive ratio of the algorithm is at least $c/2=\Omega(\frac{1}{\eps}\log v)$.
\end{proof}

\subsubsection{Proof of the $\Omega(\frac{1}{\eps} \log d)$ lower bound}

\begin{lemma}
    The competitive ratio of any deterministic or randomized algorithm is $\Omega(\frac{1}{\eps}\cdot \log d)$. This holds even if $v = 2$, all the items are of a single type, and the customers/supplier are single minded.
\end{lemma}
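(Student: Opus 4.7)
The plan is to adapt the adversarial phase structure from the $\Omega(\frac{1}{\eps}\log v)$ lower bound, but drive the decreasing per-unit supplier prices through geometrically growing bundle sizes rather than geometrically shrinking values. Let $c = \lfloor \frac{1}{2}\log_{1+\eps} d\rfloor - 1$. For each level $i\in\{0,\ldots,c\}$ I define a supplier template of bundle size $b_i = \lceil (1+\eps)^{2(i+1)}\rceil$ and value $v_i = b_i / (1+\eps)^{2(i+1)}$, and a customer template of size $b'_i = \lceil (1+\eps)^{2i}\rceil$ and value $v'_i = b'_i/(1+\eps)^{2i}$. Taking ceilings keeps all bundle sizes positive integers bounded by $d$, while the ratios force $v_i,v'_i\in[1,2]$, so the construction respects $v=2$. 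Crucially, the per-unit prices are $v_i/b_i = (1+\eps)^{-2(i+1)}$ and $v'_i/b'_i = (1+\eps)^{-2i}$, exactly matching the supplier and customer per-unit prices in the previous lemma.

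Organize the input into phases exactly as in the $\log v$ proof. Let $y_t$ denote the (possibly fractional) inventory before step $t$ and $i_t = \lfloor y_t c/w\rfloor$. At step $t$ the adversary brings in roughly $\lceil w/b_{i_t}\rceil$ single-minded suppliers, each offering one copy of the level-$i_t$ template. If after the algorithm's purchases the inventory still satisfies $y_{t+1} c/w \le i_t + 1$, the phase ends and roughly $\lceil w/b'_{i_F}\rceil$ single-minded customers arrive, each wanting one copy of the level-$i_F$ customer template. Repeating the construction over many phases amortizes away initialization constants just as before.

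The analysis then transports essentially verbatim. The adversary trades only at the final level of the phase, buying a full inventory from the last batch of suppliers at the $(1+\eps)$-inflated per-unit price $(1+\eps)^{-(2i_F+1)}$ and immediately reselling at per-unit price $(1+\eps)^{-2i_F}$, collecting $\Theta(w\eps/(1+\eps)^{2i_F+1})$ per phase. For the algorithm bound, partition $[0,w]$ into $c$ sub-intervals of width $w/c$ and account for each item using the LIFO scheme. An item landing in sub-interval $j$ must be part of a bundle purchased at some level $i_t\le j$, so its per-unit cost is at least $(1+\eps)^{-2(j+1)}$. Consequently, selling items in sub-intervals $j<i_F$ yields non-positive net profit against the customers' per-unit price $(1+\eps)^{-2i_F}$, and only items in the top sub-interval $j=i_F$ can be sold profitably; there are at most $w/c$ such items. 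The algorithm's per-phase profit is therefore at most $(w/c)\cdot\eps/(1+\eps)^{2i_F+1}$, yielding a ratio of $c = \Theta(\log(d)/\eps)$.

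The main obstacle I expect is managing the integrality of bundle sizes. A supplier bundle of size $b_{i_t}$ can straddle sub-interval boundaries, so the per-item LIFO accounting must verify that \emph{every} item purchased at level $i_t$ — including those ending up in a higher sub-interval $j > i_t$ — still costs at least $v_{i_t}/b_{i_t} = (1+\eps)^{-2(i_t+1)} \ge (1+\eps)^{-2(j+1)}$ per unit, which follows directly from $i_t \le j$ at the moment of purchase. One also needs to argue that at phase end $y_{F+1}$ sits essentially within sub-intervals up to $i_F$, modulo at most one boundary-crossing bundle whose effect is absorbed into the constant in the final ratio. These are routine bookkeeping points and do not affect the asymptotic $\Omega(\log(d)/\eps)$ bound.
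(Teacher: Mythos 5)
Your proof is essentially correct and follows the same phase-based, LIFO-accounting strategy as the paper's proof. The one structural difference is how you discretize bundle sizes. The paper sets $d_t$ to be the smallest power of $2$ greater than $1/x_t$ and assumes $d$ is a power of $2$ dividing $w$, so that $w/d_t$ is always an exact integer and the adversary can fill and then fully clear its inventory with no slack. You instead take $b_i=\lceil(1+\eps)^{2(i+1)}\rceil$ and $b'_i=\lceil(1+\eps)^{2i}\rceil$, which yields the identical per-unit prices $v_i/b_i=(1+\eps)^{-2(i+1)}$ and $v'_i/b'_i=(1+\eps)^{-2i}$ and the same value range $[1,2]$, but leaves $w/b_i$ non-integral; you absorb this into the constant, and that is sound since the slack per phase is $O(d)$ while $w$ is a free parameter that can be taken much larger than $cd$. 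The paper's power-of-two normalization trades a divisibility assumption for exact bookkeeping; your ceiling normalization trades clean divisibility for a one-line remark that the rounding error is lower order. Both are valid. One spot worth tightening in a full write-up: since the benchmark is a fractional algorithm, there is no genuine "boundary-crossing bundle" issue in the LIFO accounting (the algorithm may buy any fraction, so it fills intervals continuously), so the second concern in your final paragraph can simply be dropped; and the first concern (an item in sub-interval $j>i_t$ still costs at least $(1+\eps)^{-2(j+1)}$ per unit) is exactly the invariant the paper proves, and your justification of it is correct.
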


\begin{proof}
This time we construct an instance for every $d \geq 2$. We assume that $d$ is a power of $2$ that divides $w$, and $\eps$ is such that $d \geq (1+\eps)^8$. Let
$c=-1+\lfloor \frac{1}{2}\cdot \log_{1+\eps}d\rfloor$ be an integer, which by this assumption is at least $3$.

Again, the input is divided into phases, each of which consists of an adaptive sequence of suppliers, followed by one set of customers. 
We denote by $y_0 \in [0,w]$ the inventory of the algorithm at the beginning of the phase (before step $0$), and let $y_t \in [y_0, w]$ be the (potentially fractional) inventory of the algorithm before the $t$-th step. Let $i_t = \lfloor\frac{y_t}{w} \cdot c \rfloor$ (so that $\frac{y_t}{w} \cdot c \in [i_t, i_t+1)$ and $i_t \in [0,c]$).  Additionally define $x_t = \frac{1}{(1+\eps)^{2i_t}}$. By our choice of $c=-1+\lfloor \frac{1}{2}\cdot \log_{1+\eps}d\rfloor$, we have $1\leq \frac{1}{x_t}\leq d$, for $t=0, \ldots, c+1$.
Let $d_t$ be the smallest power of $2$ greater than $1/x_t$, and define $v_t = x_t \cdot d_t \in [1,2]$. We have $d_t\in [1,d]$.

At the $t$-th step, $w/d_{t+1}$ suppliers arrive, each offering to sell $d_{t+1}$ items at value $v_{t+1}$. Note $d$ divides $w$ and $d_t$ is also a power of 2 less than $d$, so $\frac{w}{d_t}$ is integral. The algorithm may purchase some fraction of items from the suppliers, thus increasing its inventory to $y_{t+1} \geq y_t$. If $\frac{y_{t+1}}{w} \cdot c \leq i_t + 1$, then $\frac{w}{d_{t}}$ customers arrive, each offering to buy a bundle of $d_{t}$ items for a price of $v_{t}$, and the phase ends. Otherwise we continue to step $t+1$. Note that every phase must eventually end because at each step in which no customer arrives (and the phase continues) we have $i_{t+1} \geq i_t + 1$, and for all $t$ we have $i_t \leq c$. Once again let $F$ be the final time step in which suppliers arrive. 

\textbf{Bounding} $\Delta v_{adv}$: \quad The adversary can buy $d_{F+1}$ items from each of the last $\frac{w}{d_{F+1}}$ suppliers, thus buying $w$ items in total and filling its inventory.\footnote{Once again, in the very first phase the adversary starts with a full inventory and does not have to make any purchases.} Then it can sell $d_{F}$ items to each of the $\frac{w}{d_{F}}$ customers, thus selling $w$ items in total and clearing its inventory. Its total profit (it pays $1+\eps$ times more than the algorithm to the suppliers) is:
\[\Delta v_{adv} = \frac{w}{d_{F}} \cdot v_{F} - (1+\eps) \frac{w}{d_{F+1}} \cdot v_{F+1} = w\left(\frac{1}{(1+\eps)^{2i_F}} -  \frac{1+\eps}{(1+\eps)^{2(i_F+1)}}\right) = w\cdot \frac{\eps}{(1+\eps)^{2i_F+1}},\]
where we used that $v_{t} / d_t = x_t = (1+\eps)^{-2 i_t}$ by definition.

\textbf{Bounding} $\Delta v_{alg}$: Once again, to analyze the profit of \alg we imagine that it fills the interval $[0,w]$ from left to right when buying, and clears it from right to left when selling. We again partition $[0,w]$ into $c$ (sub-)intervals indexed by $j=0,1, \ldots, c-1$, where the $j$th interval is the inventory between $[\frac{j}{c} \cdot w , \frac{j+1}{c} \cdot w]$. The invariant we maintain this time is very similar to the one before.\footnote{The algorithm does not pay for its initial inventory, but this adds a constant to the algorithm's profit which amortizes to $0$ after enough phases.}

\begin{invariant}
The price paid per unit for the $j$-th interval is at least $\frac{1}{(1+\eps)^{2(j+1)}}$.
\end{invariant}
\begin{proof}
    The LIFO policy ensures that the occupied portion of the inventory $[0,w]$ at time step $t$ is the interval $[0,y_t]$. Suppose that  $[y_t, y_{t+1}] \cap [\frac{j}{c} \cdot w , \frac{j+1}{c} \cdot w] \neq \emptyset$ (the algorithm fills some part of the $j$th interval at time $t$). Then $\frac{y_t}{w} \cdot c < j+1$, so the suppliers time step $t$ offer a bundle of $d_{t+1}$ items with value $v_{t+1}$ such that the price per unit is $\frac{v_{t+1}}{d_{t+1}}=x_{t+1} = \frac{1}{(1+\eps)^{2(i_t+1)}} \geq \frac{1}{(1+\eps)^{2(j+1)}}$. This is precisely the price per unit paid for the portion of the $j$-th interval covered in this step.
\end{proof}

Hence, when the $\frac{w}{d_{F}}$ customers arrive at the end of the phase with wishing to buy a bundle of size $d_{F}$ at price $v_{F}$, their value per item is $\frac{v_{F}}{d_{F}}= x_{F}=\frac{1}{(1+\eps)^{2i_F}}$. By construction, the customers arrive at time $t=F+1$ when $\frac{y_{F+1}}{w}\cdot c \leq i_F+1$ ($y_{F+1} \leq w \cdot \frac{i_F+1}{c}$), so the profit of the algorithm from selling the cheapest items in the interval $[\frac{i_F}{c}\cdot w, y_{F+1}]$ is at most, 
\begin{align*}
 &\left(y_{F+1}-  \frac{w}{c}\cdot i_F\right)\cdot \left(\frac{1}{(1+\eps)^{2i_F}}- \frac{1}{(1+\eps)^{2\cdot (i_F+1)}}\right) \leq \frac{w}{c}\cdot \left(\frac{1}{(1+\eps)^{2i_F}}- \frac{1}{(1+\eps)^{2\cdot (i_F+1)}}\right) \\
 &= \frac{w}{c}\cdot\frac{1}{(1+\eps)^{2i_F+1}} \left(1+\eps- \frac{1}{1+\eps}\right)\leq \frac{2w}{c} \cdot \frac{\eps}{(1+\eps)^{2i_F+1}}= \frac{2}{c}\cdot \Delta v_{adv}
\end{align*}
If the algorithm chooses to further sell inventory the range $[0, \frac{i_F}{c} \cdot w]$, which was purchased at a price per item of at least $\frac{v}{(1+\eps)^{2i_F}}$, the algorithm makes no profit (and will even lose money for selling from the range $[0, \frac{i_F-1}{c} \cdot w]$).

We conclude that the competitive ratio of the algorithm is at least $c/2=\Omega(\frac{1}{\eps}\log d)$.
\end{proof}
\subsection{Unbounded Competitiveness when the Inventory is too Small}\label{sec:lower2}

In this section we prove that there exists a constant $c$, such that if the inventory of an item type is smaller than $\frac{c}{\eps}\cdot \log(dv)$ times the number of items of type $i$ in some of the bundles, then the competitive ratio of any {\bf deterministic} algorithm is unbounded.
In Section \ref{sec:lower:unbounded1} we prove that the competitive ratio of any deterministic algorithm is unbounded for a single item type, arbitrary $v>1$, and when all bundle requests are for a single item ($d=1$) whenever the inventory is strictly less than $\frac{c}{\eps}\cdot \log v$ for some constant $c$. 
In Section \ref{sec:lower:unbounded2} we look at an instance with (at least) $d$ items types for an arbitrarily large $d$ which is a power of 2, $v=8$ is a constant, and all bundles consist of a at most a single item from each item type.
We prove that whenever the inventory is strictly less than $\frac{c}{\eps}\cdot \log d$ for some constant $c$, the competitive ratio of any deterministic algorithm is unbounded.
The combination of these two bounds prove our claim. 

\subsubsection{Unbounded competitive ratio when inventory is too small compared to $v$ and $d=1$}\label{sec:lower:unbounded1}

\begin{lemma}
    \label{lem:unboundedlowerbound}
    If the inventory from an item type is less than $\frac{1}{4\eps}\cdot \log(v)$ times the number of items of type $i$ in some of the bundles, then the competitive ratio of any {\bf deterministic} algorithm is unbounded. This holds also when all customers/suppliers are single minded, and even with respect to an optimal solution that can hold a single item from each type.
\end{lemma}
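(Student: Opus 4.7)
The plan is to construct an adaptive adversary that, for any deterministic algorithm operating under the inventory bound $w < \frac{1}{4\eps}\log v$, forces the algorithm's total profit to remain bounded while the offline optimum grows without bound, giving unbounded competitive ratio.

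The construction uses $K = \lfloor \log_{1+\eps}(v)/2 \rfloor$ distinct price levels $p_k := v(1+\eps)^{-2k}$ for $k = 0, 1, \ldots, K-1$, all lying in $[1,v]$. By the inventory assumption we have $K \geq 2w$, so there are substantially more price levels than the algorithm can simultaneously store. The adversary proceeds in repeated phases: each phase sequentially presents single-item suppliers at prices $p_0, p_1, p_2, \ldots$, advancing from stage $k$ to stage $k+1$ whenever the algorithm's inventory strictly increases (i.e., the algorithm purchased), and terminating at the first refusal or upon reaching $k = K-1$. Writing $k^*$ for the termination stage, the phase concludes with a single customer whose offered price is chosen adaptively based on both $k^*$ and the current cheapest inventory price $p_{\min}(I)$; specifically, we pick any value in the interval $\bigl((1+\eps)p_{k^*},\, \min(p_{\min}(I), (1+\eps)^2 p_{k^*})\bigr]$, adjusting slightly when this interval degenerates.

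The analysis proceeds by proving two complementary per-phase bounds. First, OPT can always achieve profit $\Omega(\eps)$ per phase by buying a single item from the stage-$k^*$ supplier at inflated cost $(1+\eps)p_{k^*}$ and reselling to the customer, exploiting its offline view of the algorithm's buying path. Second, the algorithm's per-phase realized profit is at most zero: if the algorithm bought at stages $0, \ldots, k^*-1$, it acquires items at prices $p_0 > p_1 > \cdots > p_{k^*-1}$, and the adaptive customer price caps any sale at $p_{\min}(I) \leq p_{k^*-1}$, making at best a zero-margin liquidation of the cheapest new item while the rest accumulate in inventory at strictly higher cost. Here the inventory bound $w < K/2$ is crucial: because the algorithm can hold at most $w$ items but the adversary has access to over $2w$ price levels, the adversary can always select $k^*$ so that $(1+\eps)p_{k^*}$ falls below the algorithm's cheapest inventory item, keeping the adaptive interval nonempty.

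Summing over $N$ phases, the algorithm's total profit is bounded by $O(wv)$ (essentially the value of initial inventory plus zero marginal per-phase gain), while OPT's total profit grows as $\Omega(N\eps)$; sending $N \to \infty$ yields an unbounded competitive ratio.

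The main obstacle is verifying that the adaptive customer-pricing rule truly neutralizes \emph{every} deterministic strategy, including sophisticated mixed strategies that alternate between accumulating cheap items in some phases and attempting to liquidate them in later phases at higher prices. A potential-function argument that tracks the total value of items currently in the algorithm's inventory (priced at their buy costs) can formalize this: each phase changes the sum of the algorithm's realized profit and its inventory potential by at most $0$, so after $N$ phases this sum is $O(wv)$ while OPT accrues $\Omega(N\eps)$; the inventory bound ensures the adversary can always select $p_{k^*}$ deep enough in the price hierarchy to make the cheap-sell-high trade unavailable to the algorithm but available to the offline OPT.
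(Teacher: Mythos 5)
The construction has a genuine gap. Each phase begins with a supplier at price $p_0 = v$ regardless of the algorithm's current inventory, and the phase-ending customer price is required to lie in $\bigl((1+\eps)p_{k^*},\,\min(p_{\min}(I),\,(1+\eps)^2 p_{k^*})\bigr]$. When the algorithm refuses the very first supplier (so $k^*=0$), which it will do whenever it begins a phase with a full or nearly full inventory --- including the very first phase, where the model assumes $r_i^0 = w_i$ --- the required customer price must strictly exceed $(1+\eps)p_0 = (1+\eps)v > v$, outside the allowed range $[1,v]$. The phrase ``adjusting slightly when this interval degenerates'' does not resolve this: no valid customer price (necessarily $\leq v$) gives OPT a positive profit against a supplier whose cost to OPT is inflated to $(1+\eps)v$, so a deterministic algorithm that always refuses at stage $0$ forces every phase to contribute zero profit to OPT as well as to itself, and the competitive ratio does not diverge. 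Relatedly, the claim that ``the adversary can always select $k^*$'' is backwards: $k^*$ (the first refusal stage) is determined by the algorithm, and the algorithm can and will choose $k^*=0$ if that is to its advantage.

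The missing idea --- and the one the paper uses --- is to have the supplier's price track the algorithm's \emph{current inventory level} $Y_t$ rather than a phase-internal step counter: the supplier at step $t$ offers a single unit at value $\tfrac{v}{(1+\eps)^{2(Y_t+1)}}$, and once the algorithm refuses (say at step $F$), the customer offers at $\tfrac{v}{(1+\eps)^{2Y_F}}$. Because $w \leq -1 + \tfrac{1}{2}\log_{1+\eps}v$, all these values lie in $[1,v]$; the adversary (OPT) profits exactly $\tfrac{\eps v}{(1+\eps)^{2Y_F+1}} > 0$ in every phase regardless of the algorithm's choices; and the LIFO invariant you propose (the $j$-th unit in inventory was bought at price $\geq \tfrac{v}{(1+\eps)^{2j}}$) goes through cleanly, since the supplier's price at the moment the algorithm acquires its $j$-th unit is exactly $\tfrac{v}{(1+\eps)^{2j}}$, making the customer's offer a break-even price for the algorithm's cheapest held item. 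Your overall plan (phases, LIFO accounting, amortizing away the free initial inventory) is the right skeleton; anchoring the price ladder to the inventory level rather than to a step index is the step that makes it work.
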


\begin{proof}
Assume that $v\geq (1+\eps)^4$, and that there is a single item type. If the inventory cap for this item is $w\leq \frac{1}{4} \log_{1+\eps} v$, then this means that $w\leq \frac{1}{4} \log_{1+\eps} v +(-1+\frac{1}{4} \log_{1+\eps} v )\leq -1+\frac{1}{2} \log_{1+\eps} v$.

The input is divided into phases as in previous sections, where each phase is the following adaptive sequence. Denote by $Y_0 \in [0,w]$ the inventory of the algorithm at the beginning of the phase (before step $0$) which we assume is \emph{integral} this time (as the algorithm is deterministic). Let $Y_t$ denote the integral inventory of the algorithm before step $t$.

At the $t$-th step, a single supplier arrives offering to sell a single unit (and no more!) at value $\frac{v}{(1+\eps)^{2(Y_t+1)}}$. If the (deterministic) algorithm decides to buy the item, then the algorithm increases its inventory to $Y_{t+1} = Y_t+1$ and the phase continues to step $t+1$. Otherwise, a single customer arrives wishing to purchase a single unit (and no more) at value $\frac{v}{(1+\eps)^{2Y_t}}\in [(1+\eps)^2,v]$, and the phase ends. Let $F$ be the final step in which suppliers arrive, such that the last supplier has value $\frac{v}{(1+\eps)^{2(Y_F+1)}}$, and the customer has value $\frac{v}{(1+\eps)^{2Y_F}}$.

Note that every phase must end because in every step $t$ that it continues, $Y_t$ increases by $1$, and we have $Y_t \leq w$. Also observe that customer values indeed lie in the range $[1,v]$ because these are at most $\frac{v}{(1+\eps)^{2}}$ (when $Y_t=0$) and at least $\frac{v}{(1+\eps)^{2(w+1)}}\geq 1$ (when $Y_t=w)$, since $w\leq -1+ \frac{1}{2} \log_{1+\eps} v$. 

\textbf{Bounding} $\Delta v_{adv}$: \quad The adversary can buy the item from the last supplier at time step $F$ at a price of $(1+\eps) \frac{v}{(1+\eps)^{2(Y_F+1)}} = \frac{v}{(1+\eps)^{2Y_F+1}}$,\footnote{Once again, the adversary does not need to buy in the very first phase, which only improves our analysis.} and immediately sell the item to the customer at a price of $\frac{v}{(1+\eps)^{2Y_F}}$. Hence it's total trading profit per phase is
\[\Delta v_{adv} = \frac{v}{(1+\eps)^{2Y_F}} - \frac{v}{(1+\eps)^{2Y_F+1}} = \frac{\eps \cdot v}{(1+\eps)^{2Y_F+1}} > 0.\]

\textbf{Bounding} $\Delta v_{alg}$: We assume the algorithm fills and empties its inventory according to the LIFO policy, as in the previous sections. Since we assume its inventory is integral, we can prove a relatively simple invariant this time.\footnote{The algorithm does not pay for its initial inventory, but this adds a constant to the algorithm's profit which amortizes to $0$ after enough phases.}

\begin{invariant}
The price paid the $j$-th unit of item is at least $\frac{v}{(1+\eps)^{2j}}$.
\end{invariant}

\begin{proof}
    The LIFO policy ensures that the occupied portion of the inventory $[0,w]$ at time step $t$ is the interval $[0,Y_t]$. If the algorithm purchases an item in time step $t$, then this item is the $(Y_t+1)$-th unit and costs $\frac{v}{(1+\eps)^{2(Y_t + 1)}}$.
\end{proof}

Since the algorithm holds $Y_F$ items when the phase ends, the cheapest item it holds has cost $\frac{v}{(1+\eps)^{2Y_F}}$, which is also the value of the customer in this phase. Hence the algorithm can at best make profit $\Delta v_{alg} = 0$.

Since the adversary's profit is strictly positive and the algorithm's is $0$, we conclude that the competitive ratio of the algorithm is unbounded.
\end{proof}

\subsubsection{Unbounded competitive ratio when inventory is too small compared to $d$ and $v=8$.}\label{sec:lower:unbounded2}

\begin{lemma}
    \label{lem:unboundedlowerbound_2}
    If the inventory from an item type is less than $\frac{1}{8\eps}\cdot \log d$ times the number of items of type $i$ in some of the bundles, then the competitive ratio of any {\bf deterministic} algorithm is unbounded. This holds also when all customers/suppliers are single minded, and even with respect to an optimal solution that can hold a single item from each type.
\end{lemma}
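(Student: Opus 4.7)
The plan is to adapt the construction of Lemma~\ref{lem:unboundedlowerbound} to the multi-type setting, replacing the exponentially varying supplier values (which gave the $\log v$ factor there) with exponentially varying bundle sizes (which will give the $\log d$ factor). This parallels how the $\Omega(\tfrac{1}{\eps}\log v)$ lower bound in Section~\ref{sec:lower1} is generalized to its $\Omega(\tfrac{1}{\eps}\log d)$ counterpart via bundle-size variation at fixed value $v=2$, now carried out in the unbounded-inventory regime.

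\textbf{Construction.} Assume $d$ is a power of $2$ with $d \geq (1+\eps)^{8}$, set $c = \lfloor \tfrac{1}{2}\log_{1+\eps} d \rfloor - 1 = \Theta(\tfrac{\log d}{\eps})$, and work with $d$ item types each with inventory cap $w \leq c$ (so that $w < \tfrac{1}{8\eps}\log d$). All bundles will be single-minded with at most one item per type and values in $[1,8]$. The input is an unbounded sequence of phases, each an adaptive sequence of suppliers followed by one customer. At step $t = 0, 1, \ldots$ of a phase, the supplier offers a bundle $B_t$ of $d_t$ items (one from each of $d_t$ distinct item types) at value $v_t = d_t / (1+\eps)^{2t} \in [1,2]$, where $d_t$ is the smallest power of $2$ at least $(1+\eps)^{2t}$. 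This gives price per item $x_t = 1/(1+\eps)^{2t}$, exactly as in the bounded $\log d$ lower bound of Section~\ref{sec:lower1}. If the algorithm accepts $B_t$, proceed to step $t+1$; if it refuses at step $F$, a customer arrives wanting $B_F$ at value $(1+\eps)^2 v_F \in [1,8]$, ending the phase.

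\textbf{Analysis.} The adversary's profit per phase is at least $\eps v_F > 0$: it buys $B_F$ from the last supplier at the $(1+\eps)$-augmented cost $(1+\eps)v_F$ and immediately resells to the customer at $(1+\eps)^2 v_F$, exactly mirroring Lemma~\ref{lem:unboundedlowerbound}. The algorithm's profit per phase is $0$, proved by a LIFO-style invariant analogous to the ones in the previous lower bounds: for every $B_t$ the algorithm chooses to accept, each of the $d_t$ items it receives was paid for at rate at least $1/(1+\eps)^{2t}$ per unit, and so the cheapest inventory the algorithm can present to the end-of-phase customer costs at least $1/(1+\eps)^{2F}$ per unit, precisely matching the customer's willingness to pay. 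Hence the sale breaks even at best and the algorithm makes no profit in the phase.

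\textbf{Main obstacle.} The principal technical difficulty is type reuse across the unbounded sequence of phases: naively each phase consumes $\sum_t d_t = \Theta(d/\eps)$ fresh item types, but only $d$ item types exist in total, allowing only $O(\eps)$ disjoint phases. I would resolve this by inserting auxiliary events at each phase boundary (for instance, a customer who buys back all the items the algorithm accumulated during that phase at prices in $[1,8]$ tuned so that the algorithm cannot profit) to restore the inventory to its initial configuration, enabling the same phase argument to be iterated unboundedly many times. One then argues that the algorithm's one-time initial-inventory profit of $O(dvw)$ is dominated by the unbounded aggregate adversarial profit $\Omega(N\eps)$ over $N\to\infty$ phases, yielding the claimed unbounded competitive ratio.
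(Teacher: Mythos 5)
Your first three paragraphs follow the right intuition, but the construction as stated has a genuine gap that you yourself flag in the ``Main obstacle'' paragraph, and the fix you sketch does not actually resolve it. The root issue is that in your construction the bundle size $d_t$ is indexed by the step count $t$ within a phase, rather than by the algorithm's current inventory state. This matters in two ways. First, as you observe, each phase then burns through $\Theta(d/\eps)$ distinct item types, but only $d$ exist, so the construction cannot be iterated. Second, and more fundamentally, the end-of-phase customer is single-minded for the declined bundle $B_F$: it is not clear the algorithm even holds items of the types in $B_F$ (it declined that supplier, and $B_F$'s types may be disjoint from $B_0,\dots,B_{F-1}$), so ``the cheapest inventory the algorithm can present'' is not well-defined; the customer dictates which types it wants, and the LIFO accounting per item type, not per step, is what governs the cost of those specific units.

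Your proposed patch --- auxiliary buy-back customers at phase boundaries ``tuned so that the algorithm cannot profit'' but that ``restore the inventory to its initial configuration'' --- cannot be made to work as stated, because the algorithm decides whether to trade. If the prices are tuned so the algorithm cannot profit, the algorithm simply declines and the inventory does not reset; if the prices are high enough to guarantee a sale, the algorithm may profit. The paper's construction sidesteps all of this with a different design: phases are \emph{short} (one supplier, or one supplier followed by one customer); the bundle size is $d_{k(t)}$ where $k(t)=\min\{\ell : \mathcal{S}^t_\ell\neq\emptyset\}$ is the minimum inventory level currently held across all $d$ item types; and the supplier/customer bundle $S^t$ is drawn from the types at exactly that minimum level $\mathcal{S}^t_{k(t)}$. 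This ties bundle size to the algorithm's state (so the per-unit price falls as the algorithm fills up) and recycles item types automatically: the very same types at the bottom of the LIFO stacks keep being offered. Making this well-defined requires the nontrivial divisibility invariant \cref{lem:sjt_structure}, which guarantees $d_{k(t)}$ divides $|\mathcal{S}^t_{k(t)}|$ so a bundle of the right size always exists. None of this appears in your proposal, and without a state-dependent bundle size and a structural argument of that kind, the construction does not go through. I would encourage you to study \cref{lem:sjt_structure} and the definition of $k(t)$ in the paper, as the reduction to a single-step (or two-step) phase indexed by inventory level is the key idea you are missing.
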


\begin{proof}

We assume $d\geq (1+\eps)^8$ and that $d$ is a power of $2$. Our instance has $d$ item types each with the same inventory cap of $w$. We will have $v=8$. Let $c=-1+\lfloor \frac{1}{2}\cdot \log_{1+\eps}d\rfloor\geq 3$ be an integer.
If the inventory cap $w$ is such that $w\leq \frac{1}{8\eps}\cdot \log d \leq \frac{1}{8} \log_{1+\eps} d$, then
$w \leq \frac{1}{8} \log_{1+\eps} d + (-1 + \frac{1}{8} \log_{1+\eps} d) \leq -1 + \frac{1}{4} \log_{1+\eps} d \leq c$.

Yet again, the input is divided into phases, but for a change, our phases are short. Each consists of either a single supplier, or a single supplier followed by a single customer. We start by defining the (integral!) inventory of the algorithm at time $t$ to be the vector $Y^t \in [0,w]^d$. Now for every $\ell \in \{0\} \cup [w]$ the quantity $x_\ell= \frac{1}{(1+\eps)^{2(\ell+1)}}\in [\frac{1}{(1+\eps)^2},\frac{1}{d}]$. Let $d_\ell$ be the smallest power of $2$ greater than $1/x_\ell$, and define $v_\ell = x_\ell \cdot d_\ell \in [1,2]$.
Note, that as $x_{\ell}$ are decreasing as $\ell$ increases, then $d_{\ell}$ increases as $\ell$ increases.
Also define $x'_\ell = (1+\eps)^2 x_\ell$, and set $v'_t = x'_\ell d_\ell$. Since $x'_\ell\leq 4x_t$, we have $v'_\ell \in [1,8]$. Finally, define $\mathcal{S}_\ell^t$ to be the set of item types that the algorithm holds exactly $\ell$ of in inventory at time $t$. Let $k(t) = \min \{\ell \mid \mathcal{S}_\ell^t \neq \emptyset\}$, that is the smallest number of units in the algorithm's inventory over all item types.

At the $t$-th step, a single supplier arrives wishing to sell a bundle $S^t \subseteq \mathcal{S}^t_{k(t)}$ at value $v_{k(t)}$. We require that $|S^t| = d_{k(t)}$ and that $S^t$ contain at most one of every item type, but otherwise $S^t$ is arbitrary. 
If the (deterministic) algorithm decides to buys the bundle, the phase ends. Otherwise, a single customer arrives that would like to purchase the subset $S$ at value $v_{k(t)}'$, and then the phase ends.

The astute reader may worry: why can we guarantee that there are at least $d_{k(t)}$ items with exactly $k(t)$ copies in the algorithm's inventory? We resolve this issue with the following lemma.

\begin{lemma}\label{lem:sjt_structure}
    At any time $t$, and for any $i\in[0,w]$, $d_i$ divides $\sum_{j\leq i}|S_j^t|$.
\end{lemma}

In particular, $d_{k(t)}$ divides $|S_{k(t)}^t|=\sum_{j\leq k(t)}|S_j^t|$. We will prove \cref{lem:sjt_structure} soon, but let us first see how to finish the argument.

Because the algorithm's inventory is finite, the number of phases of length $2$ goes to infinity with the number of phases.

\textbf{Bounding} $\Delta v_{adv}$: \quad The adversary does nothing in phases of length $1$. In phases of length $2$, it buys bundle $S$ at value $(1+\eps) v_{k(t)}$ and sells it at value $v'_{k(t)} = (1+\eps)^2 v_{k(t)}$. Hence its profit in phases of length $2$ is
$\eps (1+\eps) v_{k(t)} > 0$, and hence its profit goes to infinity with the number of phases.

\textbf{Bounding} $\Delta v_{alg}$: \quad To analyze the algorithm, as usual we imagine that it maintains its inventory in LIFO fashion: purchases happen from left to right, sales happen from right to left. We also imagine dividing the cost of every bundle purchased by the algorithm \emph{uniformly} among the items in the bundle (thus associating a cost with each unit of item).\footnote{Once again, the algorithm does not pay for its initial inventory, but this adds a constant term to the algorithm's profit which amortizes to $0$.} We then argue that every bundle sold makes less revenue for the algorithm than the sum of costs of item units it contains. We maintain the following invariant.

\begin{invariant}
 The price paid the $j$-th unit of any item type is at least $\frac{1}{(1+\eps)^{2j}}$.
\end{invariant}

\begin{proof}
    The LIFO policy ensures that for each item type $i$, the occupied portion of the inventory $[0,w]$ at time step $t$ is the interval $[0,Y^t_i]$. If the algorithm purchases a unit of item $i$ in time step $t$, it pays $v_{k(t)} = x_{k(t)} d_{k(t)}$ for $d_{k(t)}$ units in the entire bundle, so $x_{k(t)} = \frac{1}{(1+\eps)^{2(Y^t_i + 1)}}$ per unit of the bundle. The item in question is the the $(Y^t_i+1)$-th unit of item type $i$, so the invariant holds.
\end{proof}

Finally we can conclude the proof. Suppose the algorithm sells bundle $S^t \subseteq \mathcal{S}^t_{k(t)}$ at time $t$ at value $v'_{k(t)} = x'_{k(t)} d_{k(t)} = \frac{d_{k(t)}}{(1+\eps)^{2k(t)}}$. By construction, this bundle consists of the $k(t)$-th unit of $d_{k(t)}$ distinct item types. By the invariant above, each unit was bought at a cost of $\frac{v}{(1+\eps)^{2k(t)}}$, and so the total cost to purchase the items in this bundle is also $\frac{d_{k(t)}}{(1+\eps)^{2k(t)}}$. Hence the algorithm makes a profit of at most $0$.

We conclude with the missing proof, which ensures that our construction is well-defined.

\begin{proof}[Proof of \cref{lem:sjt_structure}]
    The proof is by induction on time steps $t$. The property holds at time $t=0$ when the inventory of the algorithm is full as we assumed that $d$ is a power of 2, and every $d_\ell$ is a power of 2.
    
    Assume inductively that the property holds before time $t$. There are three cases to consider. \begin{enumerate}
        \item If the algorithm decides not to purchase the bundle $S^t$ from the supplier, and also does not sell the bundle $S^t$ to the arriving customer, then the inventory is unchanged, and the invariant holds inductively. 
        \item   If the algorithm decides to purchase the bundle $S^t$ (and no customer arrives), then $|\mathcal{S}^{t+1}_{k(t)}| = |\mathcal{S}^t_{k(t)}| - d_{k(t)}$ (this may be of size 0 now), and $|\mathcal{S}^{t+1}_{k(t)+1}| = |\mathcal{S}^t_{k(t)+1}| + d_{k(t)}$. Thus, $\sum_{j\leq i} |\mathcal{S}^{t+1}_{j}|$ remains unchanged for all $i\neq k(t)$, and by our induction hypothesis $d_i$ divides $\sum_{j\leq i} |\mathcal{S}^{t+1}_{j}|$. For $i=k(t)$, $\sum_{j\leq k(t)} |\mathcal{S}^{t+1}_{j}| = \sum_{j\leq k(t)} |\mathcal{S}^{t}_{j}|- d_{k(t)}$, and therefore $d_{k(t)}$ still divides it.     
        \item Finally, if the algorithm decides not to purchase bundle $S^t$ from the supplier, but does choose to sell it to the arriving customer, then by definition $|\mathcal{S}^{t}_{k(t)-1}|=0$. Therefore,  $|\mathcal{S}^{t+1}_{k(t)-1}|=d_{k(t)}$ and $|\mathcal{S}^{t+1}_{k(t)}|=|\mathcal{S}^{t}_{k(t)}|- d_{k(t)}$. Therefore, $\sum_{j\leq i} |\mathcal{S}^{t+1}_{j}|$ remains unchanged for all $i\neq k(t)-1$, and by our induction hypothesis $d_i$ divides $\sum_{j\leq i} |\mathcal{S}^{t+1}_{j}|$.
However, as we observed, $d_i$ is a power of $2$ that is increasing as $i$ increases. Therefore $d_{k(t)-1}$ divides $d_{k(t)}$ and hence $d_{k(t)-1}$ divides $\sum_{j\leq k(t)-1}|\mathcal{S}^{t+1}_{j}| = d_{k(t)}$.
    \end{enumerate}

Recall that there is no fourth case because if the algorithm purchases bundle $S^t$, the phase ends.
\end{proof}
This concludes the proof of \cref{lem:unboundedlowerbound_2}.
\end{proof}

\bibliographystyle{plain}
\bibliography{refs}

\end{document}